\def\doi{8(4:6)2012}
\newcommand{\labelx}[1]{\label{#1}}
\newcommand{\ptilde}[1]{{\ensuremath{#1}}}
\newcommand{\kf}[1]{\textup{\textsf{#1}}\xspace}
\newcommand{\sr}[4]{\ensuremath{\bar{#1}[#2](#3).#4}}
\newcommand{\ssa}[4]{\ensuremath{#1[#2](#3).#4}}
\newcommand{\pp}{\ensuremath{\at{\p}}}
\newcommand{\si}[2]{\ensuremath{#1[#2]}}
\newcommand{\sj}[3]{\ensuremath{\bar{#1}[#2]:#3}}
\newcommand{\sii}{\si{\s}{\p}}
\newcommand{\siiv}{\si{\s}{\pv}}
\newcommand{\siq}{\si{\s}{\q}}
\newcommand{\ccc}{\ensuremath{c}}
\newcommand{\out}[4]{\ensuremath{#1!\langle \p,#2\rangle;#4}}
\newcommand{\outs}[4]{\ensuremath{#1!\langle #3,#2\rangle;#4}}
\newcommand{\e}{\ensuremath{e}}
\newcommand{\inp}[4]{\ensuremath{#1?( #3,#2);#4}}
\newcommand{\x}{\ensuremath{x}}
\newcommand{\participant}[1]{\ensuremath{\mathtt{#1}}}
\newcommand{\q}{\ensuremath{\participant{q}}}
\newcommand{\p}{\ensuremath{\participant{p}}}
\newcommand{\z}{\ensuremath{z}}
\newcommand{\pc}{\Par}
\newcommand{\s}{\ensuremath{s}}
\newcommand{\X}{\ensuremath{X}}
\newcommand{\Y}{\ensuremath{Y}}
\newcommand{\indexed}[4]{\ensuremath{\{#1_#3 : #2_#3\}_{#3 \in #4}}}
\newcommand{\values}{\ensuremath{\at{v}}}
\newcommand{\trival}[3]{\ensuremath{(#3,#2, #1)}}
\newcommand{\anglep}[2]{\ensuremath{\langle #1, #2\rangle}}
\newcommand{\valheap}[3]{\ensuremath{( #3,#2,#1 )}}
\newcommand{\delheap}[3]{\ensuremath{(#3,{#2},#1 )}}
\newcommand{\labheap}[3]{\ensuremath{( #3,#2,#1 )}}
\newcommand{\lsel}[4]{\ensuremath{#1 \oplus \anglep{#3}{#2};#4}}
\newcommand{\lbranchk}[2]{\ensuremath{#1 \&
({#2},\indexed{l}{\PP}{k}{K})}}
\newcommand{\Pifthenelse}[3]{\ensuremath{\kf{if}\ #1\ \kf{then}\ #2\ \kf{else}\ #3}}
\newcommand{\inact}{\ensuremath{\mathbf{0}}}
\newcommand{\nuc}[2]{\ensuremath{(\nu #1)#2}}
\newcommand{\true}{\kf{true}}
\newcommand{\false}{\kf{false}}
\newcommand{\h}{\ensuremath{h}}
\newcommand{\va}{\ensuremath{v}}
\newcommand{\at}[1]{\ensuremath{\ptilde{#1}}}
\newcommand{\atw}[1]{\ensuremath{\ptilde{#1}}}
\newcommand{\Par}{\ensuremath{\ |\ }}
\newcommand{\cas}{\ensuremath{r}}
\newcommand{\RECSEQ}[4]{\ensuremath{\mathbf{R} \ #1\ \lambda #2.\lambda #3.#4}}
\newcommand{\RECSEQP}[4]{\ensuremath{\mathbf{R} \ #1\ \lambda #2.\lambda #3.#4}}
\newcommand{\redsym}{\ensuremath{\longrightarrow}}
\newcommand{\red}[2]{\ensuremath{#1\redsym#2}}
\newcommand{\redM}[2]{\ensuremath{#1\redsym^*#2}}
\newcommand{\set}[1]{\ensuremath{\{#1\}}}
\newcommand{\sub}[2]{\ensuremath{\{#1/#2\}}}
\newcommand{\sep}{\ensuremath{~\mathbf{|}~ }}
\newcommand{\Implies}{\ensuremath{\quad \Rightarrow \quad }}
\newcommand{\qbot}{\ensuremath{\epsilon}}
\newcommand{\mqueue}[2]{\ensuremath{#1 : #2}}
\newcommand{\emptyqueue}[1]{\mqueue{\s}{\qbot}}
\newcommand{\queue}{\ensuremath{\h}}
\newcommand{\stdqueue}{\mqueue{\s}{\queue}}
\newcommand{\qcomp}[2]{\ensuremath{#1 \cdot#2}}
\newcommand{\qtail}[1]{\ensuremath{\qcomp{\queue}{#1}}}
\newcommand{\qhead}[1]{\ensuremath{\qcomp{#1}{\queue}}}
\newcommand{\qappend}[1]{\mqueue{\s}{\qtail{#1}}}
\newcommand{\qpop}[1]{\mqueue{\s}{\qhead{#1}}}
\newcommand{\subst}[2]{\ensuremath{\{#1 / #2\}}}
\newcommand{\freen}[1]{\ensuremath{\text{fn}(#1)}}
\newcommand{\G}{\ensuremath{G}}
\newcommand{\U}{\ensuremath{U}}
\newcommand{\pro}[2]{\ensuremath{#1\upharpoonright#2}}
\newcommand{\Ga}{\ensuremath{\Gamma}}
\newcommand{\D}{\ensuremath{\Delta}}
\newcommand{\T}{\ensuremath{T}}
\newcommand{\TQ}{\ensuremath{{\tt{T}}}}
\newcommand{\TG}{\ensuremath{{\mathsf{T}}}}
\newcommand{\ST}{\ensuremath{S}}
\newcommand{\SST}{\atw{S}}
\newcommand{\UT}{\ensuremath{U}}
\newcommand{\oT}[2]{\ensuremath{\;!\langle #2,#1\rangle}}
\newcommand{\iT}[2]{\ensuremath{? \langle #2,#1 \rangle}}
\newcommand{\oTG}[2]{\ensuremath{\;\natural\langle #2,#1\rangle}}
\newcommand{\oTGp}[2]{\ensuremath{\;\natural'\langle #2,#1\rangle}}
\newcommand{\de}[3]{\ensuremath{#1\vdash#2:#3}}
\newcommand{\der}[3]{\ensuremath{#1\vdash#2\triangleright#3}}
\newcommand{\dom}[1]{\ensuremath{dom( #1)}}
\newcommand{\ty}{\textbf{t}}
\newcommand{\End}{\kf{end}}
\newcommand{\Bool}{\kf{bool}}
\newcommand{\Nat}{\kf{nat}}
\newcommand{\seltype}{\ensuremath{\oplus \langle \p,\{l_i:\T_i\}_{i\in
I} \rangle }}
\newcommand{\seltypeT}{\ensuremath{\oplus\{l_i:\pro{\T_i}\q\}_{i\in I}}}
\newcommand{\seltypeTp}{\ensuremath{\oplus\{l_i:\T_i\}_{i\in I}}}
\newcommand{\seltypes}{\ensuremath{\oplus\langle\pv,l\rangle}}
\newcommand{\branchtype}{\ensuremath{\&\langle\p,\{l_k:\T_k\}_{k\in K}\rangle}}
\newcommand{\branchtypeT}{\ensuremath{\&\{l_i:\pro{\T_i}\q\}_{i\in I}}}
\newcommand{\branchtypeTp}{\ensuremath{\&\{l_i:\T'_i\}_{i\in I}}}
\newcommand{\branchtypes}{\ensuremath{\&\{l_i:\T_i\}_{i\in I}}}
\newcommand{\trule}[1]{\text{\footnotesize{\ensuremath{\lfloor\text{\sc{#1}}\rfloor}}}}
\newcommand{\scripttrule}[1]{\text{\scriptsize{\ensuremath{\lfloor\text{\sc{#1}}\rfloor}}}}
\newcommand{\tfrule}[1]{{\text{\scriptsize[\text{\sc{#1}}]}}}
\newcommand{\tftrule}[1]{{\text{\footnotesize[\text{\sc{#1}}]}}}
\newcommand{\equivT}[2]{\ensuremath{#1\approx #2}}
\newcommand{\derqq}[4]{\ensuremath{#1 \vdash_{#2} #3 \triangleright #4}}
\newcommand{\derq}[3]{\ensuremath{#1 \vdash_{\set\s} #2 \triangleright #3}}
\newcommand{\ms}[2]{\ensuremath{{#1}\setminus{#2}}}
\newcommand{\coe}[2]{\ensuremath{\mathsf{co}({#1},{#2})}}
\newcommand{\Dcomp}{\ensuremath{\ast}}
\newcommand{\Tcomp}{\ensuremath{;}}
\newcommand{\dual}[2]{\ensuremath{{#1}\bowtie{#2}}}
\newcommand{\sered}[2]{\ensuremath{{#1}~\Rightarrow~{#2}}}
\newcommand{\seredstar}[2]{\ensuremath{{#1}~\Rightarrow^*~{#2}}}
\newcommand{\E}{\ensuremath{\mathcal{E}}}
\newcommand{\outS}[3]{\ensuremath{#1!\langle #2\rangle;#3}}
\newcommand{\inpS}[3]{\ensuremath{#1?( #2);#3}}
\newcommand{\andl}{\ensuremath{~\wedge~}}
\newcommand{\ENCan}[1]{\langle #1 \rangle}
\newcommand{\TO}[2]{#1\to #2}
\newcommand{\GS}[3]{\TO{#1}{#2}\colon \!\ENCan{#3}}
\newcommand{\GB}[2]{\TO{#1}{#2}\colon\! \indexed{l}{\G}{k}{K}}
\newcommand{\GR}[4]{\RECSEQP{#1}{#2}{#3}{#4}}
\newcommand{\GM}[2]{\mu #1.#2}
\newcommand{\Lout}[3]{\ensuremath{!\langle #1,#2\rangle;#3}}
\newcommand{\Loutt}[2]{\ensuremath{!\langle #1,#2\rangle}}
\newcommand{\Lin}[3]{\ensuremath{?\langle #1,#2\rangle;#3}}
\newcommand{\Linn}[2]{\ensuremath{?\langle #1,#2\rangle}}
\newcommand{\Lsel}[2]{\ensuremath{\oplus\langle#1,\{l_k:#2\}_{k\in
K}\rangle}}
\newcommand{\LselSingle}[2]{\ensuremath{\oplus\langle{#1,#2}\rangle}}
\newcommand{\Lbranch}[2]{\ensuremath{\&\langle#1,\{l_k:#2\}_{k\in K}\rangle}}
\newcommand{\LselI}[4]{\ensuremath{\oplus\langle#1,\{l_{#3}:#2\}_{#3\in #4}\rangle}}
\newcommand{\LbranchI}[4]{\ensuremath{\&\langle#1,\{l_{#3}:#2\}_{{#3}\in #4}\rangle}}
\newcommand{\LR}[4]{\RECSEQP{#1}{#2}{#3}{#4}}
\newcommand{\LM}[2]{\ensuremath{\mu #1.#2}}
\newcommand{\Pout}[4]{\ensuremath{#1!\langle #2,#3\rangle;#4}}
\newcommand{\Pin}[4]{\ensuremath{#1?(#2,#3);#4}}
\newcommand{\Psel}[4]{\ensuremath{#1\oplus\langle#2,#3\rangle;#4}}
\newcommand{\Pbranch}[2]{\ensuremath{#1\&(#2,\{l_k:\PP_k\}_{k\in K})}}
\newcommand{\APP}{\;}
\newcommand{\mar}[1]{\ensuremath{\langle #1 \rangle}}
\newcommand{\ftv}{\kf{ftv}}
\newcommand{\fv}{\kf{fv}}
\newcommand{\fn}{\kf{fn}}
\newcommand{\pid}{\kf{pid}}
\newcommand{\IF}{\kf{if}}
\newcommand{\THEN}{\kf{then}}
\newcommand{\ELSE}{\kf{else}}
\newcommand{\y}{\ensuremath{y}}
\newcommand{\Ia}{\ensuremath{a}}
\newcommand{\Iu}{\ensuremath{u}}
\newcommand{\Iv}{\ensuremath{v}}
\newcommand{\ii}{\ensuremath{i}}
\newcommand{\jj}{\ensuremath{j}}
\newcommand{\kk}{\ensuremath{k}}
\newcommand{\Ll}{\ensuremath{l}}
\newcommand{\n}{\ensuremath{\mathrm{n}}}
\newcommand{\m}{\ensuremath{\mathrm{m}}}
\newcommand{\nn}{\ensuremath{n}}
\newcommand{\mm}{\ensuremath{m}}
\newcommand{\pv}{\ensuremath{\at{\hat{\p}}}}
\newcommand{\qq}{\ensuremath{\at{\q}}}
\newcommand{\qv}{\ensuremath{\at{\hat{\q}}}}
\newcommand{\uu}{\ensuremath{u}}
\newcommand{\xx}{\ensuremath{\mathbf{x}}}
\newcommand{\yy}{\ensuremath{\mathbf{y}}}
\newcommand{\zz}{\ensuremath{\mathbf{z}}}
\newcommand{\uuu}{\ensuremath{\mathbf{u}}}
\newcommand{\II}{\ensuremath{I}}
\newcommand{\K}{\ensuremath{\kappa}} 
\newcommand{\PP}{\ensuremath{P}}
\newcommand{\QQ}{\ensuremath{Q}}
\newcommand{\Q}{\ensuremath{Q}}
\newcommand{\Ty}{\ensuremath{\tau}}
\newcommand{\Names}{\ensuremath{\mathcal{P}}}
\newcommand{\Env}{\kf{Env}}
\newcommand{\Type}{\kf{Type}}
\newcommand{\GType}{\Type}
\newcommand{\LType}{\Type}
\newcommand{\SType}{\Type}
\newcommand{\PType}{\Type}
\newcommand{\PRType}{\Type}
\newcommand{\Alice}{\ensuremath{\mathtt{Alice}}}
\newcommand{\Bob}{\ensuremath{\mathtt{Bob}}}
\newcommand{\Carol}{\ensuremath{\mathtt{Carol}}}
\newcommand{\W}{\ensuremath{\mathtt{W}}}
\newcommand{\Worker}{\ensuremath{\mathtt{Worker}}}
\newcommand{\Buyer}{\ensuremath{\mathtt{Buyer}}}
\newcommand{\Supp}{\ensuremath{\mathtt{Supp}}}
\newcommand{\Manu}{\ensuremath{\mathtt{Manu}}}
\newcommand{\bit}[1]{\ensuremath{\text{bit}_{#1}}}
\newcommand{\subT}{\ensuremath{\leq}}
\newcommand{\RHD}{\,\ensuremath{\blacktriangleright}\,}
\newcommand{\bnfor}{~\ensuremath{~\vert~}~} 
\newcommand{\op}{~\texttt{op}~}
\newcommand{\minus}[1]{\ensuremath{#1^{-}}}
\newcommand{\proj}[1]{\ensuremath{\upharpoonright #1}}
\newcommand{\AT}[2]{#1\! : \! #2}
\newcommand{\tii}{\ensuremath{\mathtt{i}}}
\newcommand{\tjj}{\ensuremath{\mathtt{j}}}
\newcommand{\CONSTRAINT}[3]{\ensuremath{\{\AT{#1}{#2} \ | \ #3\}}} 
\newcommand{\PRED}{\ensuremath{\mathtt{P}}}
\newcommand{\eg}{e.g.~}
\newtheorem{DUM}{dummy}[section]{}{}
\newtheorem{DEFINITION}[DUM]{Definition}{}{}
{}{}
{}{}
\newtheorem{PROP}[DUM]{Proposition}{}{}
{}{}
{}{}
{}{}
{}{}
\newcommand{\proves}{\vdash}                        
\newcommand{\judg}{{J}}
\newcommand{\VEC}{\tilde}
\newcommand{\WHNF}[1]{\ensuremath{\text{whnf}(#1)}}
\newcommand{\WB}{\approx}
\newcommand{\MERGE}[1]{\ensuremath{#1}}
\newcommand{\FOREACH}[3]{\ensuremath{\mathtt{foreach}(#1 #2)\{#3\}}}
\newcommand{\mergeop}{\ensuremath{\bowtie}}
\newcommand{\mergecup}{\ensuremath{\sqcup}}
\newcommand{\ParT}{U_p}
\newcommand{\equivwf}{\equiv_{\text{wf}}}
\newcommand{\gequivwf}{\equiv}
\newcommand{\termsize}[1]{|#1|}
\newcommand{\inductiontermsize}[1]{\vert\vert #1 \vert\vert}
\newcommand{\judgementsize}[1]{w(#1)}
\newcommand{\reductionsize}[1]{\mu(#1)}
\newcommand{\inductionreductionsize}[1]{\mu^\star(#1)}
\newif\ifmc
\newcommand{\ny}[1]
{\ifmc{\color{magenta}{#1}}\else{#1}\fi}
\newcommand{\Poutend}[3]{\ensuremath{#1!\langle #2,#3\rangle}}
\newcommand{\valheaps}[3]{\ensuremath{( #3,#2,#1 )}}
\title[Parameterised Multiparty Session Types]{Parameterised Multiparty Session Types}
\thanks{The work is partially supported by EPSRC EP/G015635/1 and EP/F003757/1.}
\author[P.-M.~Deni\'elou]{Pierre-Malo Deni\'elou}
\author[N.~Yoshida]{Nobuko Yoshida}
\author[A.~Bejleri]{Andi Bejleri}
\author[R.~Hu]{Raymond Hu}
\address{Department of Computing, Imperial College London, 180 Queen's
  Gate, LONDON, SW7 2AZ, UK}
\email{\{malo, yoshida, rhu\}@doc.ic.ac.uk, andi.bejleri06@imperial.ac.uk}
\begin{document}

\begin{abstract}
\noindent
For many application-level distributed protocols and parallel algorithms, the
set of participants, the number of messages or the interaction structure are
only known at run-time.
This paper proposes a dependent type theory for multiparty sessions which can
statically guarantee type-safe, deadlock-free multiparty interactions among
processes whose specifications are parameterised by indices.
We use the primitive recursion operator from G\"odel's System $\mathcal{T}$ to
express a wide range of communication patterns while keeping type checking
decidable.
To type individual distributed processes, a parameterised global type is
projected onto a generic generator which represents a class of all possible
end-point types.
We prove the termination of the type-checking algorithm in the
full system with both multiparty session types and recursive types.
We illustrate our type theory through non-trivial
programming and verification examples
taken from parallel algorithms and web services usecases.
\end{abstract}

\keywords{Session Types, Dependent Types, The Pi-Calculus, G\"odel T, Parallel Algorithms, FFT, Web services}
\subjclass{F.3.3, D.1.3, F.1.1, F.1.2}

\maketitle

\section{Introduction}
\label{sec:introduction}
\noindent
As the momentum around communications-based computing grows, the need
for effective frameworks to
globally {\em coordinate} and {\em structure} the application-level
interactions is pressing. 
The structures of interactions are naturally distilled as
{\em protocols}. Each protocol describes a bare skeleton of how
interactions should proceed, through e.g.
sequencing, choices and repetitions.
In the theory of multiparty session types~\cite{CHY07,BC07,BettiniCDLDY08LONG},
such protocols can be captured as types for interactions,
and type checking can statically ensure runtime safety and fidelity
to a stipulated
protocol.

One of the particularly challenging aspects of protocol descriptions
is the fact that many actual communication protocols are
highly {\em parametric}
in the sense that the number of participants  and
even the interaction structure itself are not fixed at design time.
Examples include parallel algorithms such as
the Fast  Fourier Transform 
(run on any number of
communication nodes depending on resource availability)
and Web
services such as business negotiation involving an
arbitrary number of sellers and buyers.
This nature is important, for instance,
for the programmer of a parallel algorithm
where the size or shape of the communication topology,
or the number of available threads might
be altered depending on the number
of available cores in the machine.
Another scenario is web services where the
participant sets may be known at design time,
or instantiated later.
This paper introduces a robust dependent type theory
which can statically ensure communication-safe, deadlock-free process
interactions which follow parameterised multiparty protocols.
%

We illustrate the key ideas of our proposed parametric type structures
through examples. Let us first consider a simple protocol where
participant $\Alice$ sends a message of type $\Nat$ to participant
$\Bob$.  To develop the code for this protocol, we start by specifying
the global type, which can concisely and clearly describe a high-level
protocol for multiple participants
\cite{CHY07,BettiniCDLDY08LONG,esop09}, as follows ($\End$ denotes
protocol termination):
\[
\begin{array}{c}
G_1  \ = \ \GS{\Alice}{\Bob}{\Nat}.\End\\[-2ex]
\end{array}
\]
The flow of communication is indicated with the symbol $\rightarrow$ and
upon agreement on $G_1$ as a specification for $\Alice$ and $\Bob$,
each program can be implemented separately,
e.g. as $y!\ENCan{100}$ (output 100 to $y$) by $\Alice$ and
$y?(z);{\bf 0}$ (input at $y$) by $\Bob$.
For type-checking, $G_1$ is
{\em projected} into end-point session types:
one from $\Alice$'s point of view,
$!\ENCan{\Bob,\Nat}$ (output to $\Bob$ with $\Nat$-type),
and another from $\Bob$'s point of view, $?\ENCan{\Alice,\Nat}$
(input from $\Alice$ with $\Nat$-type), against which
the respective $\Alice$ and $\Bob$ programs are checked to be compliant.

The first step towards generalised type structures for multiparty
sessions is to allow modular specifications of protocols using
arbitrary compositions and repetitions of interaction units (this is a
standard requirement in multiparty contracts \cite{CDL}).  Consider
the type $G_2 \ = \ \GS{\Bob}{\Carol}{\Nat}.\End$.  The designer may
wish to compose sequentially $G_1$ and $G_2$ together to build a larger
protocol:
\[
\begin{array}{rclcl}
G_3
& = & \MERGE{G_1;G_2} 
& = & \GS{\Alice}{\Bob}{\Nat}.\GS{\Bob}{\Carol}{\Nat}.\End
\end{array}
\]
We may also want to iterate the composed protocols $\n$-times, which can be
written by $\FOREACH{i}{<\n}{\MERGE{G_1;G_2}}$, and moreover bind the number of
iteration $\n$ by a dependent product
to build a {\em family of global specifications}, as in
($\Pi n$ binds variable $n$):
\begin{eqnarray}
\label{alice-bob-carol}
\Pi n.\FOREACH{\ii}{<n}{\MERGE{G_1;G_2}}
\end{eqnarray}
%
%
%
Beyond enabling a variable number of exchanges between a fixed set of
participants, the ability to parameterise {\em
  participant identities} can represent
%
a wide class of the communication topologies found in the literature.  For
example, the use of indexed participants $\W[i]$ (denoting the $i$-th
worker) 
allows the specification of a family of session types such that
neither the number of participants nor
message exchanges are
known before the run-time instantiation of the parameters. The following type
and diagram both describe a sequence of messages from $\W[n]$ to $\W[0]$
(indices decrease in our $\mathtt{foreach}$, see \S~2):
\begin{eqnarray}
\label{ex:sequence}
\Pi n.(\FOREACH{\ii}{<n}{\GS{\W[\ii+1]}{\W[\ii]}{\Nat}})
\qquad
\begin{minipage}{10em}
\xymatrix@C=15pt{
  *+[F]{{\footnotesize \n}}\ar[r]
  &*+[F]\txt{\footnotesize \n-1}\ar[r]
  &{\ldots}\ar[r]
  &*+[F]\txt{\footnotesize 0}}
\end{minipage}
\end{eqnarray}
Here we face an immediate question:
{\em what is the underlying type structure for such parametrisation,
  and how can we type-check each (parametric) end-point program?}  The
type structure should allow the projection of a parameterised global
type to an end-point type {\em before} knowing the exact shape of the
concrete topology.

In (\ref{alice-bob-carol}), corresponding end-point
types are parameterised {\em families} of session types.
For example,
\Bob \ would be typed by $\Pi
j.\FOREACH{\ii}{<j}{?\ENCan{\Alice,\Nat};!\ENCan{\Carol,\Nat}}$, which
represents the product of session interactions with different lengths. The
choice is made when $j$ is instantiated, i.e. before execution.
The difficulty of the projection
arises in
(\ref{ex:sequence}): if $\n\!\geq\! 2$, there are three
distinct {\em communication patterns} inhabiting this
specification: the initiator $\W[\n]$ (send only), the $\n-1$ middle workers
(receive then send), and the last worker $\W[0]$ (receive only). This is no
longer the case when $\n= 1$ (there is only the initiator and the last
worker) or when $\n= 0$ (no communication at all).  Can we provide a
decidable projection and static type-checking by which we
can preserve the main properties of the session types such as progress
and communication-safety in parameterised process topologies?  The key
technique proposed in this paper is a projection method from a
dependent global type onto a {\em generic end-point generator} which
exactly captures the interaction structures of parameterised end-points
and which can represent the class of all possible end-point
types.

The main contributions of this paper follow:
\begin{iteMize}{$\bullet$}
\item {\em A new expressive framework to globally specify and program}
a wide range of
parametric communication protocols
(\S~\ref{sec:syntax_and_example}). We achieve this result by
combining dependent type theories derived from
G\"odel's System $\mathcal{T}$
\cite{DBLP:conf/mfps/Nelson91} (for expressiveness)
and indexed dependent types from \cite{DBLP:conf/popl/XiP99} (for parameter
control), with multiparty session types.

\item
{\em Decidable and flexible projection methods}
based on a generic end-point generator
and mergeability of branching types, enlarging the typability
(\S~\ref{subsec:endpoint}).

\item {\em A dependent typing system} that treats the full
multiparty session types integrated with dependent types (\S~\ref{sec:typing}).

\item {\em Properties of the dependent typing system}
which include
decidability of
  type-checking.
The resulting static typing system also
guarantees type-safety and deadlock-freedom (progress) for
well-typed processes involved in parameterised multiparty communication
protocols (\S~\ref{sec:property}).

\item {\em Applications} featuring various process topologies
(\S~\ref{sec:syntax_and_example},\S~\ref{subsec:typingexample}),
  including the complex butterfly network of the parallel
  FFT algorithm
(\S~\ref{subsec:fft}, \S~\ref{subsec:fft:typing}).
As far as we know, this is the first time
such a complex protocol is 
specified by
a single {\em type} and that its implementation can be automatically
type-checked
to prove communication-safety and deadlock-freedom.
We also extend the calculus with a new asynchronous
primitive for session initialisation and apply it to web services usecases
\cite{CDLRequirements} (\S~\ref{sec:applications}).
\end{iteMize}

\noindent Section~\ref{sec:syntax_and_example} gives the definition of the parameterised
types and processes, with their semantics. Section~\ref{sec:typing} describes
the type system. The main properties of the type system are presented in
Section~\ref{sec:property}. Section~\ref{subsec:typingexample} shows typing
examples. Section~\ref{sec:related} concludes and discusses related work.

This article is a full version expanded from \cite{YDBH10}, with complete
definitions and additional results with detailed proofs.
It includes more examples with detailed explanations and verifications,
as well as expanded related work.
Some additional material related to implementations and programming
examples will be discussed in \S~\ref{sec:related}.

\section{Types and processes for parameterised multiparty sessions}
\label{sec:syntax_and_example}

\subsection{Global types}
\label{sec:globaltypes}
\noindent 
Global types allow the description
of the parameterised conversations of
multiparty sessions as a type signature.
Our type syntax integrates
elements from three different theories:
(1) global types from
\cite{BettiniCDLDY08LONG}; (2)
dependent types with primitive recursive combinators based on
\cite{DBLP:conf/mfps/Nelson91};
and (3) parameterised dependent types 
from a simplified Dependent ML 
\cite{DBLP:conf/popl/XiP99,DependentBook}.  

\begin{figure}[ht]
\small
\begin{center}
\begin{tabular}{l@{}l}
\begin{tabular}{r@{\ }c@{\ }l@{\ }l}
$\tii$ & ::= & $\ii \bnfor \n \bnfor \tii \op \tii'$
& Indices\\ 
\PRED & ::= & $\PRED\andl \PRED \sep \tii\leq \tii'$& Propositions\\ 
\II & ::= & \Nat \sep \CONSTRAINT{\ii}{\II}{\PRED} & Index sorts \\ 
\Names & ::= & \Alice \sep \Worker \sep  \ldots & Participants \\
\p & ::= & $\p[\tii] \sep \Names$ & Principals\\
\ST & ::= & \Nat \sep \mar \G & Value type\\
\U & ::= & \ST \sep \T & Payload type \\
K & ::= & $\{\n_0, ..., \n_k\}$ & Finite integer set\\
\end{tabular} &
\begin{tabular}{r@{\ }c@{\ }l@{\ }l}
\G & ::= & & \hspace{-2em} Global types \\ 
   & \sep & $\GS{\p}{\p'}{\U}.\G$ & Message \\
   & \sep & $\GB{\p}{\p'}$  & Branching \\
   & \sep & $\GM{\xx}{\G}$ & Recursion \\
   & \sep & $\GR{\G}{\AT{\ii}{\II}}{\xx}{\G'}$  & Primitive recursion \\
   & \sep & \xx & Type variable \\
   & \sep & \G\APP \tii   & Application\\
   & \sep & \End & Null \\
\end{tabular}
\end{tabular}
\end{center}
\caption{Global types}\label{fig:global}
\end{figure}

\begin{figure}[ht]
\begin{center}
\begin{tabular}{rl cl}
  $\GR{\G}{\AT{\ii}{\II}}{\xx}{\G'}$ & $0$  & \redsym & \G \\
  $\GR{\G}{\AT{\ii}{\II}}{\xx}{\G'}$ & $(\n\!+\!1)$ & \redsym & 
    $\G'\sub{\n}{\ii}\sub{\GR{\G}{\AT{\ii}{\II}}{\xx}{\G'}\APP \n}{\xx}$ \\[-2ex]
\end{tabular}
\end{center}
\caption{Global type reduction}\label{fig:globalreduction}
\end{figure}

 

The grammar of global types ($G,G',...$) is given in Figure~\ref{fig:global}.
{\em Parameterised principals}, written $\p,\p',\q,\ldots$, 
can be indexed by one or more parameters, \eg $\Worker[5][\ii+1]$. Index
$\tii$ ranges over index variables $\ii,\jj,\nn$, naturals $\n$ or
arithmetic operations.  A global interaction can be a message exchange
($\GS{\p}{\p'}{\U}.\G$), where $\p,\p'$ denote the sending and receiving
principals, $\U$ the payload type of the message and $\G$ the subsequent
interaction.  Payload types $\U$ are either value types $\ST$ (which contain
base type $\Nat$ and session channel types $\mar{G}$), or 
{\em end-point types} $T$
(which correspond to the behaviour of one of the session participants and will
be explained in \S~\ref{sec:typing}) for
delegation. Branching ($\GB{\p}{\p'}$) allows the session to follow one of the different $\G_k$
paths in the interaction ($K$ is a ground and finite set of integers). 
$\GM{\xx}{\G}$ is a recursive type where   
type variable $\xx$ 
is guarded in the standard way 
(they only appear under some prefix)
 \cite{PierceBC:typsysfpl}.

The main novelty is the primitive recursive operator
$\GR{\G}{\AT{\ii}{\II}}{\xx}{\G'}$ from G\"odel's System
$\mathcal{T}$~\cite{GirardJY:protyp} 
whose reduction semantics is given in Figure~\ref{fig:globalreduction}.  
Its parameters are a global type $\G$, an index variable
$\ii$ with range $\II$,
 a type variable for recursion $\xx$ and a recursion body
$\G'$.\footnote{We distinguish recursion and primitive recursion in order to get
  decidability results, see \S~\ref{sec:subjectreduction}. } 
When applied to an index $\tii$, its semantics corresponds to the
repetition $\tii$-times of the body $\G'$, with the index variable $\ii$ value
going down by one at each iteration, from $\tii-1$ to $0$. The final
behaviour is 
given by $\G$ when the index reaches $0$. 
The index sorts comprise the set of natural numbers and its restrictions by 
predicates ($\PRED,\PRED',..$) that are, in our case, 
conjunctions of inequalities.  $\op$ represents first-order indices operators
(such as $+$, $-$, $\ast$,...). 
We often omit $\II$ and $\End$ in our examples. 
 
Using $\mathbf{R}$, we define the product, composition, repetition and test
operators as syntactic sugar (seen in \S~\ref{sec:introduction}):
{\footnotesize
\[
\begin{array}{@{}l@{}c@{\,}l|@{\;\;}l@{}c@{\,}l}
\Pi \AT{\ii}{\II}.\G & = & \GR{\End}{\ii}{\xx}{\G\sub{\ii+1}{\ii}}
& \FOREACH{\ii}{\!<\!\jj}{G}& = &
 \GR{\End}{\ii}{\xx}{\G\sub{\xx}{\End}} \ \jj\\[1mm]
\MERGE{G_1;G_2} & = &
\GR{\G_2}{\ii}{\xx}{\G_1\sub{\xx}{\End}} \ 1 \ 
& \IF\ \jj \ \THEN\ \G_1 \ \ELSE \ \G_2 & = &
\GR{\G_2}{\ii}{\xx}{\G_1}\APP \jj
\end{array}
\]}

\noindent
where we assume that $\xx$ is not free in $G$ and $G_1$, and
that the leaves of the syntax trees of $G_1$ and $G$ are 
$\End$. 
These definitions rely on a special substitution of each ${\small\End}$ by ${\small\xx}$ (for example, 
${\small \TO{\p}{\p'}\{ l_1\!\!: !\ENCan{\Nat};\End, l_2\!\!:\!\End\}\sub{\xx}{\End}=
\TO{\p}{\p'}\{ l_1\!\!: !\ENCan{\Nat};\xx, l_2\!\!:\! \xx\}}$). 
The composition operator (which we usually write `$;$')
appends the execution of $G_2$ to $G_1$; the repetition operator above repeats
$\G$ $\jj$-times\footnote{
This version of $\mathtt{foreach}$ uses decreasing indices. One can write an
increasing version, see \S~\ref{sub:globalexample}.}; the boolean values are integers $0$
(\false) and $1$ (\true).
Similar syntactic sugar is defined for local types and processes. 

Note that composition and repetition do not necessarily impose 
sequentiality: only the order of the asynchronous messages and the 
possible dependencies~\cite{CHY07} between receivers and subsequent senders
controls the sequentiality. For example, a parallel version of 
the sequence example of
(\S~\ref{sec:introduction} (\ref{ex:sequence})) can be written in our
syntax as follows: 
\begin{eqnarray}\label{parallel_seq}
 \Pi n.(\FOREACH{\ii}{<n}{\GS{\W[n-\ii]}{\W[n-\ii-1]}{\Nat}})
\end{eqnarray}
where each worker $\W[\jj]$ sends asynchronously a value $v_j$ to its next 
worker $\W[\jj-1]$ without waiting for the message from $\W[\jj+1]$ to arrive
first (i.e.~each choice of $v_j$ is independent from the others).

\label{subsec:examples}

\begin{figure}[t]\centering\small
\mbox{}\\[-4ex]
\begin{tabular}{@{}c@{\qquad}c}
{\bf \textsf (a) Ring}\\[-5ex]
 \begin{minipage}{8em}
 \xymatrix@C=15pt{
   *+[F]{\text{\small 0}}\ar[r]
   &*+[F]{\text{\small 1}}\ar[r]
   &{\ldots}\ar[r]
   &*+[F]{\text{\small $n$}}\ar@/^1.2pc/[lll]}
 \end{minipage}
&\hspace{-4em}
\begin{minipage}{25em}{\small
\begin{align*}
      \Pi \AT{n}{I}.(&\FOREACH{i<n}{}{\GS{\W[n-i-1]}{\W[n-i]}{\Nat}};\\
      & \GS{\W[n]}{\W[0]}{\Nat}.\End) \\
    \end{align*}}
\end{minipage}
\\
{\bf \textsf (b) Multicast}\\[1ex]
\begin{minipage}{8em}{\small
\xymatrix@C=15pt{
  &  & *+[F]{\small \Alice}\ar[dl]\ar[dll]\ar[dr] & & \\
   *+[F]{\small 0} &*+[F]{\small 1} & *{\cdots} & *+[F]{\small n-1} \\
}}
\end{minipage}
&\hspace{-4em} 
\begin{minipage}{25em}{\small\begin{align*}
\Pi \AT{n}{I}.&\FOREACH{i<n}{}{\GS{\Alice}{\W[n-1-\ii]}{\Nat}};\End
    \end{align*}}
\end{minipage}
\\[10ex]
{\bf \textsf (c) Mesh}\\[-5ex]
\begin{minipage}{8em}
\xymatrix@C=10pt@R=10pt{
  *{\W[n][m]\hspace{-1em}}&*[F]{\hole} \ar[r]\ar[d]
  &*[F]{\hole}\ar[r]\ar[d]
  &{\ldots}\ar[r]
  &*[F]{\hole}\ar[d]
  &\\
  &*[F]{\hole}\ar[r]\ar[d]
  &*[F]{\hole}\ar[r]\ar[d]
  &{\ldots}\ar[r]
  &*[F]{\hole}\ar[d]
  &\\
  &{:}\ar[d]
  &{:}\ar[d]
  &{\ddots}
  &{:}\ar[d]
  &\\
  &*[F]{\hole}\ar[r]
  &*[F]{\hole}\ar[r]
  &{\ldots}\ar[r]
  &*[F]{\hole}
  & *{\hspace{-1em}\W[0][0]}
}
\end{minipage}
&\hspace{-3em}
\begin{minipage}{20em}{\small
\begin{align*}
      &\Pi n.\Pi m. \\
      &\FOREACH{i}{<n}{ \\
        &\quad \FOREACH{j}{<m}{ \\
          &\qquad \GS{\W[i+1][j+1]}{\W[i][j+1]}{\Nat}.\\
          &\qquad \GS{\W[i+1][j+1]}{\W[i+1][j]}{\Nat}}; \\
        &\quad \GS{\W[i+1][0]}{\W[i][0]}{\Nat}};\\
      &\FOREACH{k}{<m}{\GS{\W[0][k+1]}{\W[0][k]}{\Nat}}
   \end{align*}}
\end{minipage}
\end{tabular}
\vspace{-4.5ex}
\caption{Parameterised multiparty protocol on a mesh topology}\label{fig:examples}
\end{figure}

\subsection{Examples of parameterised global types}
\label{sub:globalexample}
\noindent                                                                                                                                              
We present some examples of global types that implement some communication
patterns specific to typical network topologies found in classical parallel
algorithms textbooks \cite{FThomson}.

\paragraph{\bf Ring - Figure~\ref{fig:examples}(a)}
The ring pattern consists of $n+1$ workers (named $\W[0]$, $\W[1]$,\ldots
,$\W[n]$) that each talks to its
two neighbours: the worker $\W[\ii]$ communicates with the worker $\W[\ii-1]$
and $\W[\ii+1]$ ($1\leq\ii\leq n-1)$, with the exception of $\W[0]$ and $\W[n]$
who share a direct link. 
%
The type specifies that the first message is sent by $\W[0]$ to $\W[1]$, and the
last one is sent from $\W[n]$ back to $\W[0]$. To ensure the presence of all
three roles in the workers of this topology, the parameter domain $I$ is set to
$n \geq 2$.



\paragraph{\bf Multicast - Figure~\ref{fig:examples}(b)}
The multicast session consists of $\Alice$ sending a message to $n$
workers \W. The first message is thus sent from \Alice\ to $\W[0]$,
then to $\W[1]$, until $\W[n-1]$.  
Note that, while the index $\ii$ bound by the iteration
$\FOREACH{i}{<n}{\GS{\Alice}{\W[n-1-\ii]}{\Nat}}$ decreases from 
$n-1$ to $0$, the index $\n-1-\ii$ in $\W[n-1-\ii]$ increases from $0$ to
$n-1$. 


\paragraph{\bf Mesh  - Figure~\ref{fig:examples}(c)}
The session presented in Figure~\ref{fig:examples}(c) describes a particular
protocol over a standard mesh topology~\cite{FThomson}. In this two dimensional
array of workers \W, each worker receives messages from his left and top
neighbours (if they exist) before sending messages to his right and bottom (if
they exist). Our session takes two parameters $n$ and $m$ which represent the
number of rows and the number of columns. Then we have two iterators that repeat
$\GS{\W[\ii+1][\jj+1]}{\W[\ii][\jj+1]}{\Nat}$ and
$\GS{\W[\ii+1][\jj+1]}{\W[\ii+1][\jj]}{\Nat}$ for all $\ii$ and $\jj$. 
The communication flow goes from the top-left worker $\W[n][m]$ and
converges towards the bottom-right worker $\W[0][0]$ 
in $n+m$ steps of asynchronous message exchanges.

\subsection{Process syntax}
\label{sec:usersyntax}

\begin{figure}[t]
\centering
\begin{tabular}{l@{\quad}|l}
\begin{tabular}{@{}r@{\;}c@{\;}l@{\quad}l}
\ccc & ::=  & \y \sep \s[\p] & Channels \\[1mm]
\Iu & ::=  & \x \bnfor \Ia & Identifiers\\[1mm]
\Iv & ::=  & $\Ia \bnfor \n$ 
 & Values\\[1mm]
\end{tabular} &
\begin{tabular}{r@{\;}c@{\;}l@{\quad}l}
 \pv, \qv & ::=  & $\pv[\n] \sep \Names$ & Principal values\\[1mm]
 \mm & ::=  & (\qv,\pv,\va) \sep  (\qv,\pv,\s[\pv'])  \sep
 (\qv,\pv,\Ll) 
& Messages in transit \\[1mm]
 \h  & ::=  & $\qbot \sep \qcomp{\mm}{\h}$ & Queue types \\
\end{tabular}\\[1mm] 
%
\multicolumn{2}{l}{ $e$   ::=  $\tii \bnfor v \bnfor \x \bnfor \s[\p] \bnfor
  \e\op\e' $ \hspace{1em} 
  Expressions}\\[1mm]
\end{tabular}
\begin{tabular}{@{}l@{}l}
\begin{tabular}{@{}r@{\,}cl@{\ }l}
 \PP & ::=  & & \hspace{-2em} Processes \\
     & \sep & \sr\uu{\p_0,..,\p_\n} \y\PP   &   {Init}\\
     & \sep & \ssa\uu\p\y\PP   &   {Accept}\\
     & \sep & \ny{\sj{a}\pv\s}   &   {Request}\\
     & \sep & \Pout{\ccc}{\p}{\e}{\PP}& Value sending\\
     & \sep & \Pin{\ccc}{\p}{\x}{\PP}& Value reception \\
     & \sep & \Psel{\ccc}{\p}{l}{\PP}& Selection \\
     & \sep & \Pbranch{\ccc}{\p}& Branching \\
     & \sep & \nuc{a}\PP       & Shared channel restriction
\end{tabular} &
\begin{tabular}{@{}r@{}cl@{\quad}l}
      & \sep & $\mu \X.\PP$ & {Recursion}\\
     & \sep & \inact & {Inaction}\\
     & \sep & \PP \pc \QQ  &{Parallel}\\
     & \sep & $\RECSEQP{\PP}{\ii}{\X}{\QQ}$ & {Primitive recursion}\\
     & \sep & \X &{Process variable}\\
     & \sep & (\PP\APP\tii) &{Application}\\
     & \sep & \nuc{\s} \PP & Session restriction \\
     & \sep & \s:\h & Queues \\[1mm]
\end{tabular}
\end{tabular}
\caption{Syntax for user-defined and run-time processes}\label{fig:syntax}
\end{figure}

The syntax of expressions and processes is given in Figure~\ref{fig:syntax},
extended from \cite{BettiniCDLDY08LONG}, adding the primitive recursion operator and
a new request process.  Identifiers $\Iu$ can be variables $\x$ or
channel names $\Ia$. Values $\Iv$ are either channels $\Ia$ or natural numbers $\n$.
Expressions $e$ are built out of indices
$\tii$, values $\Iv$, variables $\x$, session end points (for delegation) and
operations over expressions.
Participants $\pp$ can include indices which 
are substituted by values and evaluated during reductions (see 
the next subsection). 
In processes, sessions are asynchronously initiated by $\sr\uu{\p_0,..,\p_\n}\y\PP$. 
It spawns, for each of the $\{\p_0,..,\p_\n\}$, 
\footnote{Since the set of principals is parameterised, we allow some syntactic
sugar to express ranges of participants that depend on parameters.}
a request that is accepted by the participant through $\ssa\uu\p\y\PP$. 
Messages
are sent by $\Pout{\ccc}{\p}{\e}{\PP}$ to the participant $\p$ 
and received by $\Pin{\ccc}{\q}{\x}{\PP}$ from the participant $\q$.  
Selection $\Psel{\ccc}{\p}{l}{\PP}$, and branching $\Pbranch{\ccc}{\q}$, allow a
participant to choose a branch from those supported by another.  Standard
language constructs include recursive processes $\mu \X.P$, restriction
$\nuc{\Ia}{\PP}$ and
$\nuc{s}{\PP}$,  and
parallel composition $\PP \pc \QQ$.
The primitive recursion operator $\RECSEQP{\PP}{\ii}{\X}{\QQ}$ takes as
parameters a process $\PP$, a function taking an index parameter $\ii$ and a
recursion variable $\X$. A queue $\s:\h$ stores the asynchronous messages in
transit.


\label{par:annotated}
An {\em annotated} $P$ is the result of annotating $P$'s bound names and
variables by their types or ranges as in e.g.~$(\nu a\!:\!\ENCan{G})Q$ or
$s?(\pp,x\!:\! U);Q$ or 
$\RECSEQP{\QQ}{\ii\!:\!\II}{\X}{\QQ'}$.
We omit the annotations 
unless needed.  We often omit $\inact$ and the participant $\p$ from the session
primitives.  Requests, session restriction and channel queues appear only at
runtime, as explained below.

\subsection{Semantics}
\label{subsec:semantics}

\iffalse
\begin{figure}[t]
\centering
\small
\begin{tabular}{cr}
     \red{\Iv\op\Iv'}
     {\Iv''} \quad \text{with $\Iv''$ corresponding to $\op$}
     & [Op]
 \\[1.5mm]
  \red{\GR{P}{\ii}{X}{Q}\APP 0}{P} & [ZeroR]
  \\[1.5mm]
  \red{\GR{P}{\ii}{X}{Q}\APP \n+1}{P\sub{\n}{\ii}\sub{\GR{P}{\ii}{X}{Q}\APP \n}{X}} & [SuccR]
  \\[1.5mm]
        $\sr\Ia{\p_0,..,\p_\n}{\y}{\PP}\redsym (\nu \s)(
        s : \qbot \pc \PP\sub{\si\s {\p_0}}{\y} \pc \sj{\Ia}{\p_1}{\s} \pc ...\pc
        \sj{\Ia}{\p_\n}{\s})$  & [Init]
\\[1.5mm]
        $\sj{\Ia}{\p_k}{\s} \pc \ssa\Ia{\p_k}{\y_k}{\PP_k}\redsym 
        \PP_k\sub{\si \s {\p_k}}{\y_k}$  & [Join]
\\[1.5mm]
    \red{\outs{\si{\s}{\p}}{\va}{\q}{\PP} \pc \mqueue{\s}{\queue}}
    {\PP \pc\mqueue{\s}{\qtail{\valheap{\va}{\q}{\p}}}}
    & [Send]
\\[1.5mm]
    \red{\lsel{\sii}{l}{\q}{\PP} \Par \stdqueue}
    {\PP \Par \qappend{\labheap{l}{\q}{\p}}}
    & [Label]
\\[1.5mm]
    $\inp{\sii}{\x}{\q}{\PP} \Par \qpop{\valheap{\va}{\p}{\q}}$
    $\redsym \PP\subst{\ptilde{\va}}{\ptilde{\x}} \Par
    \mqueue{\s}{\queue}$ &
    [Recv]
\\[1.5mm]
    \lbranchk{\sii}{\q} \Par \qpop{\labheap{l_{k_0
}}{\p}{\q}}
    $\redsym \PP_{i_0
} \Par \mqueue{\s}{\queue}$ \ \ $(i_0 \in I)$ & [Branch]
\\[1.5mm]
   \red{\PP}{\PP'} \Implies \red{\PP\APP\e}{\PP'\APP\e}\quad \quad
   \red{\PP}{\PP'} \Implies
 \red{\nuc{\cas}{\PP}}{\nuc{\cas}{\PP'}}
   &[App,Scop]\\[1.5mm]
   \red{\PP}{\PP'} \Implies \red{\PP \Par \Q}{\PP' \Par \Q}
   &[Par]\\[1.5mm]
   $P\equiv P'\ \text{and}\ \red{P'}{Q'}\ \text{and}\ Q\equiv Q' \Implies
   \red{P}{Q}$ 
 & [Str]
 \\[1.5mm]
  $\red{\e}{\e'} \Implies 
 \red{\E[\e]}{\E[\e']}$  & [Context]
 \\[1.5mm]
\end{tabular}
\caption{Reduction rules (all principals $\p, \q, \p'$ in the rules above are
  principal values $\pv, \qv, \pv'$)}\label{fig:reduction}
\end{figure}

\else

\begin{figure}
\centering \small
\begin{tabular}{cr}
\small
    \red{
\GR{P}{\ii}{X}{Q}\APP 0
}
{
P
}
& [ZeroR]
\\[1.5mm]
 \red{\GR{P}{\ii}{X}{Q}\APP \n+1}
{Q\sub{\n}{\ii}\sub{\GR{P}{\ii}{X}{Q}\APP \n}{X}}
    & [SuccR]
\\[1.5mm]
        $\sr\Ia{\pv_0,..,\pv_\n}{\y}{\PP}\redsym (\nu \s)(
        \PP\sub{\si\s {\pv_0}}{\y} \pc s : \qbot \pc \sj{\Ia}{\pv_1}{\s} \pc ...\pc
        \sj{\Ia}{\pv_\n}{\s})$  & [Init]
\\[1.5mm]
        $\sj{\Ia}{\pv_k}{\s} \pc \ssa\Ia{\pv_k}{\y_k}{\PP_k}\redsym 
        \PP_k\sub{\si \s {\pv_k}}{\y_k}$  & [Join]
\\[1.5mm]
    \red{\outs{\si{\s}{\pv}}{\va}{\qv}{\PP} \pc \mqueue{\s}{\queue}}
    {\PP \pc\mqueue{\s}{\qtail{\valheap{\va}{\qv}{\pv}}}}
    & [Send]
\\[1.5mm]
    \red{\lsel{\siiv}{l}{\qv}{\PP} \Par \stdqueue}
    {\PP \Par \qappend{\labheap{l}{\qv}{\pv}}}
    & [Label]
\\[1.5mm]
    $\inp{\siiv}{\x}{\qv}{\PP} \Par \qpop{\valheap{\va}{\pv}{\qv}}$
    $\redsym \PP\subst{\ptilde{\va}}{\ptilde{\x}} \Par
    \mqueue{\s}{\queue}$ &
    [Recv]
\\[1.5mm]
    \lbranchk{\siiv}{\qv} \Par \qpop{\labheap{l_{k_0
}}{\pv}{\qv}}
    $\redsym \PP_{k_0
} \Par \mqueue{\s}{\queue}$ \ \ $(k_0 \in K)$ & [Branch]
\\[1.5mm]
   \red{\PP}{\PP'} \Implies \red{\PP\APP\e}{\PP'\APP\e}\quad \quad
   \red{\PP}{\PP'} \Implies
 \red{\nuc{\cas}{\PP}}{\nuc{\cas}{\PP'}}\quad \quad \quad 
   &[App,Scop]\\[1.5mm]
   \red{\PP}{\PP'} \Implies \red{\PP \Par \Q}{\PP' \Par \Q}
   &[Par]\\[1.5mm]
   $P\equiv P'\ \text{and}\ \red{P'}{Q'}\ \text{and}\ Q\equiv Q' \Implies
   \red{P}{Q}$ 
 & [Str]
 \\[1.5mm]
  $\red{\e}{\e'} \Implies \red{\E[\e]}{\E[\e']}$  & [Context]
 \\[1.5mm]
\end{tabular}
\caption{Reduction rules 
}\label{fig:reduction}
\end{figure}

\fi

\begin{figure}[t]
\centering
\begin{tabular}{r@{\ }c@{\ }l@{\quad}l}
  \E[\_, \ldots, \_] & ::= &  &\hspace{-2em} Evaluation contexts \\
   & \sep & $\_\op \ e$ \sep $v \op\_$ & Expression\\
   & \sep & (\PP\APP\_) & Application\\
   & \sep & $\sr\Ia{\pv_1,..,\pv_n,\_, \p_{n+1},..,\p_m}{\y}{\PP} $ & Request \\
   & \sep & $\ssa\Ia{\_}{\y}{\PP}$ & Accept \\
   & \sep & $\outs{\si{\s}{\_}}{e}{\p}{\PP}$ 
            \sep
            $\outs{\si{\s}{\pv}}{e}{\_}{\PP}$ 
            \sep
            $\outs{\si{\s}{\pv}}{\_}{\qv}{\PP}$ 
            & Send\\
   & \sep & $\lsel{\si{\s}{\_}}{l}{\p}{\PP} $ 
\sep $\lsel{\si{\s}{\pv}}{l}{\_}{\PP} $ 
& Selection\\
   & \sep & $\inp{\si{\s}{\_}}{\x}{\p}{\PP}$ 
\sep $\inp{\si{\s}{\pv}}{\x}{\_}{\PP} $
& Receive\\
   & \sep & $\lbranchk{\si{\s}{\_}}{\p} $ 
     \sep   $\lbranchk{\si{\s}{\pv}}{\_} $ 
& Branching\\
\end{tabular}
\caption{\ny{Evaluation contexts}}\label{fig:context}
\end{figure}

The semantics is defined by the reduction relation $\red{}{}$ presented in
Figure~\ref{fig:reduction}. The standard definition of evaluation contexts (that
allow 
e.g. $\W[3+1]$ to be reduced to $\W[4]$) is in Figure~\ref{fig:context}.
The metavariables $\pv,\qv,..$ range over principal values (where all indices
have been evaluated).
%
Rules [ZeroR] and [SuccR] are standard and
identical to their global type counterparts.  Rule [Init] describes the
initialisation of a session by its first participant
$\sr\Ia{\p_0,..,\p_\n}{\y_0}{\PP_0}$. It spawns asynchronous requests
$\sj{\Ia}{\pv_k}{\s}$ that allow delayed acceptance by the other session
participants (rule [Join]).  After the connection, the participants share the
private session name \s, and the queue associated to \s~(which is initially
empty by rule [Init]).  The variables $\y_\p$ in each participant $\p$ are then
replaced with the corresponding session channel, $\si{\s}{\p}$. An equivalent,
but symmetric, version of [Init] (where any participant can start the session, 
not only $\p_0$) can be also used. Rule [Init] would then be replaced by the
following:
$$
\bar{\Ia}[\pv_0,..,\pv_\n] \redsym (\nu \s)(
        s : \qbot \pc \sj{\Ia}{\pv_0}{\s} \pc ...\pc
        \sj{\Ia}{\pv_\n}{\s})
$$
%



\noindent  The rest of the session reductions are standard \cite{BettiniCDLDY08LONG,CHY07}.  The
output rules [Send] 
and [Label] push values, channels and labels into the queue of the session
\s{}.  Rules [Recv] 
and [Branch] perform the complementary operations.  Note that these operations
check that the sender and receiver match.
Processes are considered modulo structural equivalence, denoted
by $\equiv$ (in particular, we note 
$\mu X.P \equiv P\subst{\mu X.P}{X}$),  
whose definition is found in Figure~\ref{tab:structcong}.  
Besides the standard rules
\cite{MilnerR:commspc}, we have
a rule for rearranging messages when the senders or the receivers
are different, and a rule for the garbage-collection of unused and empty
queues. 

\begin{figure}[h!]
\begin{tabular}{c}
  $P \Par \textbf{0}\equiv P \quad
  P \Par Q\equiv Q \Par P \quad
  (P \Par Q) \Par R\equiv P \Par (Q \Par R) \quad
  (\nu \cas\cas')\ P\equiv (\nu \cas'\cas)\ P$ \\[2mm]
  $(\nu \cas)\ \textbf{0}\equiv \textbf{0} \quad 
  (\nu s)\ s:\qbot \equiv \textbf{0} \quad 
  (\nu \cas)\ P \Par Q\equiv (\nu \cas)\ (P \Par Q)\ \ \ \ \ \ \text{if}\ \cas\notin
  \freen{\Q}$\\[2mm]
   $\mqueue{\s}{
   \qhead
    {
        \qcomp
        {\trival{z}{\pv}{\qv}}
        {\trival{z'}{\pv'}{\qv'}}
    }
   } \equiv
   \mqueue{\s}{
   \qhead
    {
        \qcomp
        {\trival{z'}{\pv'}{\qv'}}
        {\trival{z}{\pv}{\qv}}
    }
   } \qquad \text{if $\pv \ne \pv'$ or $\qv \ne \qv'$}$\\[2mm]
   
   $\mu X. \PP \equiv \PP\sub{  \mu X. \PP}{X}$\\[2mm]
\end{tabular}
\\
\cas\ ranges over \Ia, \s. 
\quad $z$ ranges over \at{\va}, \si\s\pv\ and $l$.
\caption{Structural equivalence}\label{tab:structcong}
\end{figure}


\subsection{Processes for parameterised multiparty protocols}
\label{subsec:processexample}
\noindent 
We give here the processes corresponding to the interactions described in
\S~\ref{sec:introduction} and \S~\ref{subsec:examples}, then introduce a
parallel implementation of the Fast Fourier Transform algorithm.
There are various ways to implement end-point 
processes from a single global type, and we show one instance for each example
below.

\paragraph{\bf Repetition}
A concrete definition for the protocol (\ref{alice-bob-carol}) in \S~1 
is:
\[
\begin{array}{l}
\Pi n.(\RECSEQ{\End}{i}{\xx}{}\GS{\Alice}{\Bob}{\Nat}.
\GS{\Bob}{\Carol}{\Nat}.\xx \APP n)
\end{array}
\]
Then $\Alice$ and $\Bob$ 
can be implemented
with recursors as follows (we abbreviate 
\Alice~by $\participant{a}$, \Bob~by $\participant{b}$ and 
\Carol~by $\participant{c}$). 
\[
\begin{array}{lll}
\Alice(\nn) = \sr{a}{\participant{a},\participant{b},\participant{c}}
\y(\RECSEQP{\inact}{\ii}{\X}{\outS{\y}{\participant{b},\e[i]}}\X\APP n)\\[1mm]
\Bob(\nn) = \ssa{a}{\participant{b}}
\y(\RECSEQP{\inact}{\ii}{\X}{\inpS{y}{\participant{a},z}\outS{\y}{\participant{c},\z}}\X\APP
n)\\[1mm]
\Carol(\nn) = \ssa{a}{\participant{c}}
\y(\RECSEQP{\inact}{\ii}{\X}{\inpS{\y}{\participant{b},z}}\X\APP n)
\end{array}
\]
$\Alice$ repeatedly sends a message $e[i]$ to $\Bob$ $n$-times.
Then $\nn$ can be bound by $\lambda$-abstraction, allowing 
the user to dynamically assign the number of the repetitions.  
\begin{center}
{\small
$
\begin{array}{l}
\lambda \nn.(\nuc{\Ia}{(\Alice(\nn) \pc \Bob(\nn) \pc
\Carol(\nn))}) \ 1000
\end{array}
$}
\end{center}
\paragraph{\bf Sequence from \S~\ref{sec:introduction} (\ref{ex:sequence})}
The process below generates all participants using a recursor:
{\small
\[
\begin{array}{llrlll}
\Pi \nn.(\Pifthenelse{\nn=0 & }{ & \inact & \\
& }{ &(\mathbf{R} &
(\sr{a}{\W[\nn],..,\W[0]}{y}{\outS{\y}{\W[\nn-1],v}\inact}
\\[1mm]
& & & \pc \ssa{a}{\W[0]}{y}{\inpS{y}{\W[1],z}{\inact}})\\[1mm]
& & & \lambda {\ii}.\lambda{\X}.(
\ssa{a}{\W[\ii+1]}{y}{\inpS{y}{\W[\ii+2],z}\outS{\y}{\W[\ii],z}\inact \pc X}) {\quad \nn-1})}
\end{array}
\]}
\noindent When $n=0$ no message is exchanged. In the other case, the recursor
creates the $n-1$ workers through the main loop and finishes by spawning the initial and final ones.

As an illustration of the semantics, we show the reduction of the above process
for $n=2$. After several applications of the [SuccR] and [ZeroR] rules, we have:
{\small
\[
\begin{array}{l}
\sr{a}{\W[2],\W[1],\W[0]}{y}{\outS{\y}{\W[1],v}}\inact \pc
\ssa{a}{\W[0]}{y}{\inpS{y}{\W[1],z}}\inact \pc
\ssa{a}{\W[1]}{y}{\inpS{y}{\W[2],z}\outS{\y}{\W[0],z}}\inact
\end{array}
\]
} which, with [Init], [Join], [Send], [Recv], gives:
{\small
\[
\begin{array}{ll}
\red{}{} &(\nu \s)(s : \qbot \pc \outS{\si\s {\W[2]}}{\W[1],v}\inact \pc \sj{\Ia}{\W[1]}{\s}\pc \sj{\Ia}{\W[0]}{\s}\pc \\
 & \qquad
\ssa{a}{\W[0]}{y}{\inpS{y}{\W[1],z}}\inact \pc \ssa{a}{\W[1]}{y}{\inpS{y}{\W[2],z}\outS{\y}{\W[0],z}}\inact)\\
\red{}{} & (\nu \s)(s : \qbot \pc \outS{\si\s {\W[2]}}{\W[1],v}\inact \pc \sj{\Ia}{\W[1]}{\s}\pc\\
 &  \qquad 
\inpS{\si\s {\W[0]}}{\W[1],z}\inact \pc
\ssa{a}{\W[1]}{y}{\inpS{y}{\W[2],z}\outS{\y}{\W[0],z}}\inact)\\
\red{}{}^* & (\nu \s)(s : \qbot \pc \outS{\si\s {\W[2]}}{\W[1],v}\inact \pc 
\inpS{\si\s {\W[0]}}{\W[1],z}\inact \pc
\inpS{\si\s {\W[1]}}{\W[2],z}\outS{\si\s {\W[1]}}{\W[0],z}\inact)\\
\red{}{}^* &
(\nu \s)(s :  \qbot \pc \inpS{\si\s {\W[0]}}{\W[1],z}\inact \pc
\outS{\si\s {\W[1]}}{\W[0],v}\inact) \\
\red{}{}^* & \equiv \inact
\end{array}
\]
}



\paragraph{\bf Ring - Figure \ref{fig:examples}(a)}
The process that generates all the
roles using a recursor is as follows:
{\small
\[
\begin{array}{lllll}
\Pi \nn.(\mathbf{R} &
\sr{a}{\W[0], ..., \W[\nn]}{y}{\outS{\y}{\W[1],v}\inpS{y}{\W[\nn],z}P}\\[1mm]
& \ssa{a}{\W[\nn]}{y}{\inpS{y}{\W[\nn-1],z}{\outS{\y}{\W[0],z}Q}}\\[1mm]
& \quad \lambda {\ii}.\lambda{\X}.(
\ssa{a}{\W[\ii+1]}{y}{\inpS{y}{\W[\ii],z}\outS{\y}{\W[\ii+2],z} \pc X}) {\quad \nn-1} )
\end{array}
\]
}
We take the range of $\nn$ to be $\nn\geq 2$.

\paragraph{\bf Mesh - Figure~\ref{fig:examples} (c)}
In this example, 
when \n~and \m~are bigger than $2$, there are
9 distinct patterns of communication.  

We write below these processes. We assume the existence of a function
$f(z_1,z_2,i,j)$ which computes from $z_1$ and $z_2$ the value to be transmitted
to $\W[\ii][\jj]$. We then designates the processes based on their position in
the mesh. The initiator $\W[\nn][\mm]$ is in the top-left corner of the mesh and
is implemented by $\PP_{\text{top-left}}$. The workers that are living in the
other corners are implemented by $\PP_{\text{top-right}}$ for $\W[\nn][0]$,
$\PP_{\text{bottom-left}}$ for $\W[0][\mm]$ and $\PP_{\text{bottom-right}}$ for
the final worker $\W[0][0]$. The processes $\PP_{\text{top}}$,
$\PP_{\text{left}}$, $\PP_{\text{bottom}}$ and $\PP_{\text{right}}$ respectively
implement the workers from the top row, the leftmost column, the bottom row and
the rightmost column. The workers that are in the central part of the mesh are
played by the $\PP_{\text{center}}(\ii, \jj)$ processes.\\[2mm]
{\small
$\begin{array}{ll}
\PP_{\text{top-left}}(z_1,z_2,\nn, \mm)  & =  \sr{a}{\W[\nn][\mm], ..., \W[0][0]}{y}{
\outS{y}{\W[n-1][m], f(z_1,z_2,n-1,m)}}\\
 & ~~~\outS{y}{\W[n][m-1], f(z_1,z_2,n,m-1)}{\inact}\\[1mm]
\PP_{\text{top-right}}(z_2,\nn) & = \ssa{a}{\W[n][0]}{y}{\inpS{y}{\W[n][1], z_1}\outS{y}{\W[n-1][0], f(z_1,z_2,n-1,0)}\inact}\\[1mm]
\PP_{\text{bottom-left}}(z_1,\mm) & =  \ssa{a}{\W[0][m]}{y}{\inpS{y}{\W[1][m], z_2}\outS{y}{\W[0][m-1], f(z_1,z_2,0,m-1)}\inact}\\[1mm]
\PP_{\text{bottom-right}}(\mm) & =   \ssa{a}{\W[0][0]}{y}{\inpS{y}{\W[1][0], z_1}\inpS{y}{\participant{\W[0][1]}, z_2}\inact}\\[1mm]
\PP_{\text{top}}(z_2,\nn, \kk) & = 
\ssa{a}{\W[n][k+1]}{y}{\inpS{y}{\W[n][k+2], z_1}}\\
 & ~~~ \outS{y}{\W[n-1][k+1], f(z_1,z_2,n-1,k+1)}\outS{y}{{\W[n][k]}, f(z_1,z_2,n,k)}\inact\\[1mm] 
\PP_{\text{bottom}}(\kk) & = \ssa{a}{\W[0][k+1]}{y}{\inpS{y}{\W[1][k+1], z_1}\inpS{y}{\W[0][k+2], z_2}}\\
& ~~~ \outS{y}{\W[0][k], f(z_1,z_2,0,k)}\inact\\[1mm]
\PP_{\text{left}}(z_1,\mm, \ii) & = \ssa{a}{\W[i+1][m]}{y}{\inpS{y}{{\W[i+2][m]}, z_2}\outS{y}{{\W[i][m]}, f(z_1,z_2,i,m)}}\\
& ~~~ \outS{y}{{\W[i+1][m-1]}, f(z_1,z_2,i+1,m-1)}\\[1mm] 
\PP_{\text{right}}(\ii) & =  \ssa{a}{\W[i+1][0]}{y}{\inpS{y}{{\W[i+2][0]}, z_1}\inpS{y}{{\W[i+1][1]}, z_2}}\\
& ~~~ \outS{y}{{\W[i][0]}, f(z_1,z_2,i,0)}\inact\\[1mm]
\PP_{\text{center}}(\ii, \jj) & = \ssa{a}{\W[i+1][j+1]}{y}{\inpS{\y}{\W[i+2][j+1], z_1} \inpS{\y}{\W[i+1][j+2], z_2}}\\
 &~~~ \outS{y}{\W[i][j+1], f(z_1,z_2,i,j+1)}\outS{y}{\W[i+1][j], f(z_1,z_2,i+1,j)}\inact\\[1mm]
\end{array}$
}

The complete implementation can be generated using the following process:

{\small
\[
\begin{array}{l}
\Pi \nn.\Pi \mm.(\mathbf{R} ~(\mathbf{R}~ 
\PP_{\text{top-left}}(z_1,z_2,\nn, \mm) | \PP_{\text{bottom-right}}(\mm)  | \PP_{\text{top-right}}(z_2,\nn)  | \PP_{\text{bottom-left}}(z_1,\mm))\\
\qquad\qquad\qquad \lambda \kk.\lambda Z.(\PP_{\text{top}}(z_2,\nn, \kk) | \PP_{\text{bottom}}(\kk) | Z)\\
\qquad\qquad\quad m-1)\\
\qquad\qquad~~\lambda \ii.\lambda \X. (\mathbf{R}~ \PP_{\text{left}}(z_1,\mm, \ii) | \PP_{\text{right}}(\ii) | \X \\[1mm]
\qquad\qquad\qquad\qquad~~  \lambda \jj.\lambda \Y.(\PP_{\text{center}}(\ii, \jj) | \Y)\\
\qquad\qquad\qquad\qquad \mm-1)\\
\qquad\qquad \nn-1)\\
\end{array}
\]
}

\begin{figure}[t]
\centering
\footnotesize
\begin{tabular}{@{}c@{}l@{}}
\begin{minipage}{17.5em}
{\bf \textsf (a) Butterfly pattern}

\hspace{0em}
\xymatrix@C=15pt@R=15pt{
{\text{\footnotesize $x_{k-N/2}$}}
\ar[dr]\ar@{.>}[r]
 & {\hole}\save[]+<1.3cm,-0.2cm>*\txt<4cm>{\small
   $X_{k-N/2}=x_{k-N/2}+$\\ \hspace{6em} $x_{k}*\omega_N^{k-N/2}$}\restore\\
{\text{\footnotesize $x_{k}$}}\ar[ur]
\ar@{.>}[r] 
 & {\hole}\save[]+<1.6cm,0cm>*{\text{\small 
   $X_k=x_{k-N/2}+x_{k}*\omega_N^{k}$}}\restore\\
}

\smallskip
{\bf \textsf (b) FFT diagram}\\ 
$ 
\begin{minipage}{15em}
\centering 
\xymatrix@C=18pt@R=15pt{
  \ar[r]^ {x_0}
  & *+[o][F]{0} \ar[dr]\ar@{.>}[r]^1
  & *+[o][F]{0} \ar[ddr]\ar@{.>}[r]^2
  & *+[o][F]{0} \ar[ddddr]\ar@{.>}[r]^3
  & *+[o][F]{0} \ar[r]^{X_0}
  & \\
  \ar[r]^{x_4} 
  & *+[o][F]{1} \ar[ur]\ar@{.>}[r]
  & *+[o][F]{1} \ar[ddr]\ar@{.>}[r]
  & *+[o][F]{1} \ar[ddddr]\ar@{.>}[r]
  & *+[o][F]{1} \ar[r]^{X_1}
  & \\
  \ar[r]^{x_2} 
  & *+[o][F]{2} \ar[dr]\ar@{.>}[r]
  & *+[o][F]{2} \ar[uur]\ar@{.>}[r]
  & *+[o][F]{2} \ar[ddddr]\ar@{.>}[r]
  & *+[o][F]{2} \ar[r]^{X_2}
  & \\
  \ar[r]^{x_6} 
  & *+[o][F]{3}  \ar[ur]\ar@{.>}[r]
  & *+[o][F]{3}  \ar[uur]\ar@{.>}[r]
  & *+[o][F]{3} \ar[ddddr]\ar@{.>}[r]
  & *+[o][F]{3} \ar[r]^{X_3}
  & \\
  \ar[r]^{x_1} 
  & *+[o][F]{4}  \ar[dr]\ar@{.>}[r]
  & *+[o][F]{4}  \ar[ddr]\ar@{.>}[r]
  & *+[o][F]{4} \ar[uuuur]\ar@{.>}[r]
  & *+[o][F]{4} \ar[r]^{X_4}
  & \\
  \ar[r]^{x_5} 
  & *+[o][F]{5}  \ar[ur]\ar@{.>}[r]
  & *+[o][F]{5} \ar[ddr]\ar@{.>}[r]
  & *+[o][F]{5} \ar[uuuur]\ar@{.>}[r]
  & *+[o][F]{5} \ar[r]^{X_5}
  & \\
  \ar[r]^{x_3} 
  & *+[o][F]{6}  \ar[dr]\ar@{.>}[r]
  & *+[o][F]{6}  \ar[uur]\ar@{.>}[r]
  & *+[o][F]{6}  \ar[uuuur]\ar@{.>}[r]
  & *+[o][F]{6} \ar[r]^{X_6}
  & \\
  \ar[r]^{x_7} 
  & *+[o][F]{7}  \ar[ur]\ar@{.>}[r]
  & *+[o][F]{7}  \ar[uur]\ar@{.>}[r]
  & *+[o][F]{7} \ar[uuuur]\ar@{.>}[r]
  & *+[o][F]{7} \ar[r]^{X_7}
  & \\
}
\end{minipage} 
$
\end{minipage}

&

\begin{minipage}{30em}
\vspace{-6ex}
{\bf \textsf (c) Global type} \ {$\G=$
\vspace{-1ex}
$$\begin{array}{l@{\hspace{-4em}}l}
&\Pi n.  \\
& \FOREACH{\ii}{<2^n}{\GS{i}{i}{\Nat}};\\
& \FOREACH{l}{<n}{\\
  & \hspace{0.5em} \FOREACH{i}{<2^l}{ \\
    & \hspace{0.9em} \FOREACH{j}{<2^{n-l-1}}{ \\
      &  \hspace{1.2em} \GS{i*2^{n-l}+j}{i*2^{n-l}+2^{n-l-1}+j}{\Nat}  \\
      &  \hspace{1.2em} \GS{i*2^{n-l}+2^{n-l-1}+j}{i*2^{n-l}+j}{\Nat}  \\
      &  \hspace{1.2em} \GS{i*2^{n-l}+j}{i*2^{n-l}+j}{\Nat}  \\
      &  \hspace{1.2em} \GS{i*2^{n-l}+2^{n-l-1}+j}{i*2^{n-l}+2^{n-l-1}+j}{\Nat} 
}}}\\
\end{array}$$
}


{\bf \textsf (d) Processes} \ {\small $\PP(n,\pp,x_{\overline{\pp}},y,r_{\p}) =$}
\vspace{-1ex}
{\small
$$\begin{array}{@{\hspace{-3em}}l} 
 \ \Pout{\y}{\p}{\x_{\overline{\pp}}}{}\\
 \ \FOREACH{l}{<n}{\\
 \quad
    \Pifthenelse{\bit{n-l}(\pp)=0\\
 \quad}{\Pin{\y}{\p}{\x}{\Pout{\y}{\p+2^{n-l-1}}{\x}{\\
 \hspace{3.3em}\Pin{y}{\p+2^{n-l-1}}{\z}{\Pout{\y}{\p}{\x+\z\,\omega_N^{g(l,\pp)}}{}}}}\\
\quad}{\Pin{\y}{\p}{\x}{\Pin{y}{\p-2^{n-l-1}}{\z}{\\
\hspace{3.3em}\Pout{\y}{\p-2^{n-l-1}}{\x}
      {\Pout{\y}{\p}{\z+\x\,\omega_N^{g(l,\pp)}}{}}}}}};\\
 \ \Pin{\y}{\p}{\x}{\Pout{r_{\p}}{0}{x}}\inact
\end{array}$$}
\vspace{-2ex}

\noindent where $g(l,\pp)=\pp \mod 2^l$
\end{minipage}
\end{tabular}
\caption{Fast Fourier Transform on a butterfly network topology}\label{fig:fft}
\end{figure}

\subsection{Fast Fourier Transform}
\label{subsec:fft}
\noindent 
We describe a
parallel implementation of the Fast Fourier Transform algorithm (more
precisely the radix-2 variant of the Cooley-Tukey 
algorithm~\cite{CT65}).
%
%
We start by a quick reminder of the discrete fourier transform definition,
followed by the description of an FFT algorithm that implements it over a
butterfly network. We then give the corresponding global session type. From the
diagram in (b) and the session type from (c), it is finally straightforward to
implement the FFT as simple interacting processes.

\paragraph{\bf The Discrete Fourier Transform}  
The goal of the FFT is to compute the Discrete Fourier Transform
(DFT) of a vector of complex numbers.
Assume the input consists in $N$ complex numbers $\vec{x}=x_0, \ldots,
x_{N-1}$ that can be interpreted as the coefficients of a polynomial
$f(y)=\sum_{j=0}^{N-1}x_j\,y^j$.  The DFT transforms $\vec{x}$ in a vector
$\vec{X}=X_0, \ldots, X_{N-1}$ defined by:
$$
X_k = f(\omega_N^k) 
$$\\[-3ex]
\noindent
with $\omega_N^k=\e^{ \imath\frac{ 2 k\pi }{N}}$ one of the $n$-th primitive
roots of unity. The DFT can be seen as a polynomial interpolation on the
primitive roots of unity or as the application of the square matrix
$(\omega_N^{ij})_{i,j}$ to the vector $\vec{x}$.
%

\paragraph{\bf FFT and the butterfly network} 
We present 
the radix-2 variant of the Cooley-Tukey 
algorithm~\cite{CT65}. 
It uses a divide-and-conquer strategy based on the following equation (we use
the fact that $\omega_N^{2k}=\omega_{N/2}^k$):

\[
\begin{array}{rcl}
  X_k& = & \sum_{j=0}^{N-1}x_j\,\omega_N^{jk} \\[1mm]
  & = &\sum_{j=0}^{N/2-1}x_{2j}\,\omega_{N/2}^{jk}
  + \omega_N^k\sum_{j=0}^{N/2-1}x_{2j+1}\,\omega_{N/2}^{jk}
\end{array}
\]
Each of the two separate sums are DFT of half of the original vector members,
separated into even and odd. Recursive calls can then divide the input set
further based on the value of the next binary bits. The good complexity of this
FFT algorithm comes from the lower periodicity of $\omega_{N/2}$: we have
$\omega_{N/2}^{jk}=\omega_{N/2}^{j(k-N/2)}$ and thus computations of $X_k$ and
$X_{k-N/2}$ only differ by the multiplicative factor affecting one of the two
recursive calls.

Figure~\ref{fig:fft}(a) illustrates this
recursive principle, called {\em butterfly}, where two
different intermediary values can be computed in constant time from the results
of the same two recursive calls. 

The complete algorithm is illustrated by the diagram from
Figure~\ref{fig:fft}(b).  It features the application of the FFT on a network of
$N=2^3$ machines on an hypercube network computing the discrete Fourier
transform of vector $x_0, \ldots, x_7$.  Each row represents a single machine at
each step of the algorithm. Each edge represents a value sent to another
machine. The dotted edges represent the particular messages that a machine sends
to itself to remember a value for the next step.  Each machine is successively
involved in a butterfly with a machine whose number differs by only one
bit. Note that the recursive partition over the value of a different bit at each
step requires a particular bit-reversed ordering of the input vector: the
machine number $\pp$ initially receives $x_{\overline{\pp}}$ where
$\overline{\pp}$ denotes the bit-reversal of $\pp$.





\paragraph{\bf Global Types}
Figure~\ref{fig:fft}(c) gives the global session type corresponding to the
execution of the FFT.  The size of the network is specified by the index
parameter $n$: for a given $n$, $2^n$ machines compute the DFT of a vector of
size $2^n$. The first iterator
$\FOREACH{\ii}{<2^n}{\GS{i}{i}{\Nat}};$
concerns the initialisation:
each of the machines sends the $x_\pp$ value to themselves. Then we have an
iteration over variable $l$ for the $n$ successive steps of the algorithm. The
iterators over variables $i,j$ work in a more complex way: at each step, the
algorithm applies the butterfly pattern between pairs of machines whose numbers
differ by only one bit (at step $l$, bit number $n-l$ is concerned). The iterators
over variables $i$ and $j$ thus generate all the values of the other bits: for
each $l$, $i*2^{n-l}+j$ and $i*2^{n-l}+2^{n-l-1}+j$ range over all pairs of
integers from $2^n-1$ to $0$ that differ on the $(n-l)$th bit. The four repeated
messages within the loops correspond to the four edges of the
butterfly pattern.

\paragraph{\bf Processes}
The processes that are run on each machine to execute the FFT algorithm are
presented in Figure~\ref{fig:fft}(d). 
When $\pp$ is the machine number, $x_{\overline{\pp}}$ the initial value, and
$\y$ the session channel, the machine starts by sending $x_{\overline{\pp}}$ to
itself: $\y!\langle\x_{\overline{\pp}}\rangle$. The main loop corresponds to the
iteration over the $n$ steps of the algorithm. At step $l$, each machine is
involved in a butterfly corresponding to bit number $n-l$, i.e. whose number
differs on the $(n-l)$th bit. In the process, we thus distinguish the two cases
corresponding to each value of the $(n-l)$th bit (test on $\bit{n-l}(\pp)$). In
the two branches, we receive the previously computed value $\inpS{\y}\x{..}$,
then we send to and receive from the other machine (of number $\p+2^{n-l-1}$ or
$\p-2^{n-l-1}$, i.e. whose $(n-l)$th bit was flipped). We finally compute the
new value and send it to ourselves: respectively by
$\outS{\y}{\x+\z\,\omega_N^{g(l,\pp)}}{\X}$ or
$\outS{\y}{\z+\x\,\omega_N^{g(l,\pp)}}{\X}$. Note that the two branches do not
present the same order of send and receive as the global session type specifies
that the diagonal up arrow of the butterfly comes first. At the end of the
algorithm, the calculated values are sent to some external channels:
$\Poutend{r_\pp}{0}{\x}$.

\section{Typing parameterised multiparty interactions}
\label{sec:typing}
\noindent 
This section introduces the type system, by which we can statically type
parameterised global specifications.

\subsection{End-point types and end-point projections}
\label{subsec:endpoint}
\begin{figure}[ht]
\begin{tabular}{lll}
\begin{tabular}{lllllll}
\T & ::=   &\hspace{-2em}End-point types \\ 
   & \sep \ \Lout{\p}{\U}{\T} & Output \\
   & \sep \ \Lin{\p}{\U}{\T} & Input \\
   & \sep  \ \Lsel{\p}{\T_k} \quad & Selection \\
   & \sep  \ \Lbranch{\p}{\T_k} & Branching\\
\end{tabular}
&
\quad 
\begin{tabular}{lllllll}
   & \sep  \ \LM{\xx}{\T} & Recursion \\
   & \sep  \ $\LR{\T}{\AT{\ii}{\II}}{\xx}{\T'}$  \quad & Primitive recursion \\
   & \sep  \ \xx & Type variable \\
   & \sep  \ \T\APP \tii  & Application\\
   & \sep  \ \End  & End \\ 
\end{tabular}
\end{tabular}
\caption{End-point types}\label{fig:local}
\end{figure}

\noindent 
A global type is projected to an {\em end-point type} according to each
participant's viewpoint.
The syntax of end-point types is given in 
Figure~\ref{fig:local}. 
{\rm Output} expresses the sending to
$\p$ of a value or channel
of type \UT, followed by the interactions \T.
{\rm Selection} represents the transmission to 
$\p$ of a label $l_k$ chosen in $\{l_k\}_{k\in K}$ 
followed by $\T_k$. 
{\rm Input} and {\rm branching} are their dual counterparts.
The other types are similar to their global versions.

\paragraph{\bf End-point projection: a generic projection}
The relation between end-point types and global types is formalised by the projection
relation. 
Since the actual participant characteristics might only be determined at
runtime, we cannot straightforwardly use the definition
from~\cite{CHY07,BettiniCDLDY08LONG}.  %
Instead, we rely on the expressive power of the primitive recursive operator:
{\em a generic end-point projection of \G\ onto \q}, written \pro\G\q,
represents the family of all the possible end-point types that a principal $\q$ can
satisfy at run-time. %

\begin{figure}[ht]\small
\centering
\begin{tabular}{@{}rl@{}}
$\GS{\p}{\p'}{\U}.\G$\proj{\qq} \ = 
&
\IF\ \qq=\pp=\pp'\ \THEN\ \Lout{\p}{\U}{\Lin{\p}{\U}{\G\proj{\qq}}}\\
 & \ELSE\IF\ \qq=\pp\ \THEN\ \Lout{\p'}{\U}{\G\proj{\qq}}\\
 & \ELSE\IF\ \qq=\pp' \THEN\ \Lin{\p}{\U}{\G\proj{\qq}}\\
&  \ELSE\ {\G\proj{\qq}} \\[1mm]
 $\GB{\p}{\p'}$\proj{\qq} \ = 
&  \IF\ \qq=\pp\ \THEN\ \Lsel{\p'}{\G_k\proj{\qq}}\\
 & \ELSE\IF\ \qq=\pp' \THEN\ \Lbranch{\p}{\G_k\proj{\qq}} \\
 & \ELSE\ $\sqcup_{k\in K}  \G_k\proj{\qq}$ \\[1mm]
 $\GR{\G}{\AT{\ii}{\II}}{\xx}{\G'}$\proj{\qq} \ = 
 & $\GR{\G\proj{\qq}}{\AT{\ii}{\II}}{\xx}{\G'\proj{\qq}}$\\
 $(\GM{\xx}{\G})$\proj{\pp} \ =  
&  $\GM{\xx}{\G\proj{\pp}}$ \\
 \xx \proj{\pp} \ =  
&  \xx \\
 (\G\APP \tii) \proj{\pp}\ = 
& (\G\proj{\pp})\APP \tii\\
\End\proj{\pp} \ = 
& \End 
\end{tabular}
\caption{Projection of global types to end-point types}\label{fig:projection}
\end{figure}

The general endpoint generator is defined in Figure~\ref{fig:projection} using
the derived condition construct $\Pifthenelse{\_}{\_}{\_}$.  The projection
$\GS{\p}{\p'}{\U}.\G\proj{\qq}$ leads to a case analysis: if the participant
$\q$ is equal to $\p$, then the end-point type of $\q$ is an output of type $\U$
to $\p'$; if participant $\q$ is $\p'$ then $\q$ inputs $U$ from $\p'$; else we
skip the prefix. The first case corresponds to the possibility for the sender
and receiver to be identical.  
Projecting the branching global type is similarly
defined, but for the operator $\sqcup$ explained below. For the other cases (as
well as for our derived operators), the projection is homomorphic.
We also identify $\mu \xx.\xx$ as $\End$ ($\mu \xx.\xx$
is generated when a target participant is not included 
under the recursion, for example, $\GS{\p}{\p'}{\U}.\mu
\xx.\GS{\q}{\q'}{\U}.\xx\proj{\p}=\Lout{\p}{\U}{\mu \xx.\xx}$) 
and $\mu \xx.T$ as $T$ if $\xx\not\in \ftv({T})$.

\paragraph{\bf Mergeability of branching types} 
We first recall the example from~\cite{CHY07}, which explains that
na\"ive branching projection leads to inconsistent end-point types. 

{\small
$$
\begin{array}{lll}
\TO{\W[0]}{\W[1]}: & \{\mathsf{ok}:\TO{\W[1]}{\W[2]}:\ENCan{\Bool}, 
                     & \ \mathsf{quit}:\TO{\W[1]}{\W[2]}:\ENCan{\Nat}\}
\end{array}
$$}

We cannot project the above type onto $\W[2]$ because, while the branches behave
differently, $\W[0]$ makes a choice without informing $\W[2]$ who thus cannot
know the type of the expected value. 
A solution is to define projection only when the branches are identical, i.e. we change the above $\Nat$ to
$\Bool$ in our example above. 

In our framework, this restriction is too strong since each branch may contain
different parametric interaction patterns.  To overcome this, below we propose
a method called {\em mergeability} of branching types.\footnote{The idea of mergeability is introduced informally 
in the tutorial paper  \cite{Tutorial09}.}
\begin{DEFINITION}[Mergeability] \rm
\label{def:mergeability}
The mergeability relation $\mergeop$ is the smallest congruence relation over 
end-point types such that:
$$
\begin{prooftree}
{\forall i\in (K \cap J). T_i\mergeop T_i' \quad 
\forall k\in (K \setminus J), \forall j.(J \setminus K).l_k \not = l_j
}
\justifies 
{\langle\p,\{l_k:T_k\}_{k\in K}\mergeop 
\&\langle\p,\{l_j:T_j'\}_{j\in J}\rangle
}
\end{prooftree}
$$
When $T_1\mergeop T_2$ is defined, 
we define the operation $\mergecup$ as a partial commutative 
operator over two  types such that $T\mergecup T=T$ for all types and that:
\[
\begin{array}{lll}
\Lbranch{\p}{\T_k}\mergecup 
\&\langle\p,\{l_j:T_j'\}_{j\in J}\rangle \ = \\ 
\quad \&\langle\p,\{l_i:T_i\mergecup T_i'\}_{i\in K\cap J}
\cup \{l_k:T_k\}_{k\in K\setminus J}
\cup \{l_j:T_j'\}_{j\in J\setminus K}\rangle\\[1mm]
\end{array}
\]
and homomorphic for other types (i.e. 
$\mathcal{C}[T_1] \sqcup \mathcal{C}[T_2]=\mathcal{C}[T_1\sqcup
T_2]$ where $\mathcal{C}$ is a context for local types). 
\end{DEFINITION}
The mergeability relation states that two types are identical up to their
branching types where only branches with distinct labels are allowed to be
different.  By this extended typing condition, we can modify our previous global
type example to add $\mathsf{ok}$ and $\mathsf{quit}$ labels to notify
$\W[2]$. We get:

{\small\[
\begin{array}{llll}
\TO{\W[0]}{\W[1]}:& \{\mathsf{ok}:\TO{\W[1]}{\W[2]}:
\{ \mathsf{ok}:\TO{\W[1]}{\W[2]}\ENCan{\Bool} \ \}, \\
&\
\mathsf{quit}:\TO{\W[1]}{\W[2]}:\{\mathsf{quit}:\TO{\W[1]}{\W[2]}\ENCan{\Nat}\}\}\}
\end{array}
\]}

Then $\W[2]$ can have the type $\&\ENCan{\W[1], \ \{
  \mathsf{ok}:\ENCan{\W[1],\Bool}, \ \mathsf{quit}:\ENCan{\W[1],\Nat} \}}$ which
could not be obtained through the original projection rule
in~\cite{CHY07,BettiniCDLDY08LONG}.
This projection is sound 
up to branching subtyping 
(it will be proved in Lemma \ref{lem:mergeability} later).

\subsection{Type system (1): environments, judgements and kinding}
\noindent 
This subsection introduces the environments and kinding systems. 
Because free indices appear both in terms (e.g.
participants in session initialisation) and in types, the formal definition of
what constitutes a valid term and a valid type are interdependent and
both in turn require a careful definition of a valid global type.

\paragraph{\bf Environments}
One of the main differences with previous session type systems is that session
environments $\Delta$ 
can contain dependent {\em process types}.  The grammar of environments, process
types and kinds are given below.
\begin{center}
\small
\begin{tabular}{ll}
\begin{tabular}{rcl@{\quad}l}
\D & ::= & $\emptyset$ \sep \D, \ccc:\T\\[1mm]  
\end{tabular}
&
\begin{tabular}{rcl@{\quad}l}
$\Gamma$ & ::= & $\emptyset \sep\Gamma, \PRED \sep \Gamma, u:\ST \sep 
\Gamma, \ii:\II \sep \Gamma, \X:\Ty$
\quad 
$\Ty \ ::= \ \D \sep \Pi \AT{\ii}{\II}.\Ty$\\
\end{tabular}
\end{tabular}
\end{center}
$\D$ is the {\em session environment} 
which associates
channels to session types. 
$\Ga$ is the {\em standard environment}
which contains predicates and which associates variables to sort types, service names to global
types, indices to index sets and process variables 
to session types. 
$\Ty$ is a {\em process type} which 
is either a session environment or a 
dependent type. 
We write $\Ga,\uu:S$ only if $\uu\not\in\dom{\Gamma}$ 
where
$\dom{\Gamma}$ denotes the domain of $\Gamma$.  
We use the same convention for others.

\begin{figure*}
\center
\begin{tabular}{lllllllll}
$\Ga \proves \Env$ & \quad\quad & well-formed environments\\
$\Ga \proves \kappa$ & \quad \quad& well-formed kindings\\
$\Ga \proves \alpha\RHD \kappa$ & \quad\quad & well-formed types\\
$\Ga \proves \alpha\equiv \beta$ & \quad\quad & type
equivalence\\
\end{tabular}\quad 
\begin{tabular}{lllllllll}
$\Ga \proves \alpha\WB \beta$ & \quad\quad & type isomorphism\\
$\Ga \proves \e \rhd U$ & \quad\quad & expression\\
$\Ga \proves \p \rhd \ParT$ & \quad\quad & participant \\
$\Ga \proves \PP \rhd \tau$ & \quad\quad & processes\\
\end{tabular}
\caption{Judgements ($\alpha,\beta,...$ range over any types)}\label{fig:judgements}
\end{figure*}

\paragraph{\bf Judgements}
Our type system uses the judgements listed in
Figure~\ref{fig:judgements}.

Following \cite{DBLP:conf/popl/XiP99},  we assume given in the typing rules two
semantically defined judgements: $\Ga \models \PRED$ (predicate $\PRED$ is a
consequence of $\Gamma$) and $\Ga \models \tii : I$ ($\tii : I$ follows from the
assumptions of $\Ga$).

We write $\Gamma \proves \judg$ for arbitrary judgements and 
write $\Gamma \proves \judg,\judg'$ to stand for both 
$\Gamma \proves \judg$ and $\Gamma \proves \judg'$. 
In addition, we use two additional judgements for 
the runtime systems (one for queues $\derq{\Ga}{\stdqueue}{\D}$ and one for 
runtime processes $\derqq{\Ga}{\Sigma}{\PP}{\D}$) which are similar with those 
in \cite{BettiniCDLDY08LONG} and listed in the Appendix. 
We often omit $\Sigma$ from $\derqq{\Ga}{\Sigma}{\PP}{\D}$ if it is
not important.
%
%

\paragraph{\bf Kinding}
The definition of kinds is given below: 
\[
\begin{array}{l}
\kappa ::= \Pi \jj:I.\kappa \sep \Type 
\quad \quad \quad 
\ParT :: = \Nat \sep \Pi\AT{\ii}{\II}.{U_p}\\
\end{array}
\]
We inductively define well-formed types using a 
kind system. The judgement $\Gamma\vdash \alpha \!\RHD\!
\kappa$ means type $\alpha$ has kind $\kappa$.  Kinds include
proper types for global, value, principal, end-point and process types (denoted
by $\Type$), and the kind of type families, written by $\Pi\AT{\ii}{\II}.\K$.
%
The kinding rules are defined in 
Figure~\ref{fig:kindsystembase} 
and Figure~\ref{fig:kind_all} 
in this section and 
Figure~\ref{fig:localkindsystem} in the Appendix.
The environment well-formedness rules are in Figure
\ref{fig:wellformed_env}. 

The kinding rules for types, value types, principals, index sets and
process types are listed in  Figure~\ref{fig:kind_all}.  
In \trule{KMar} in the value types, $\ftv(G)$ denotes a set of 
free type variables in $G$. The condition $\ftv(G)=\emptyset$ 
means that shared channel types are always closed. 
Rule \trule{KIndex} forms the index 
sort which contains only natural number (by the condition $0\leq i$). 
Other rules in Figure~\ref{fig:kind_all} and the rules in 
Figure~\ref{fig:wellformed_env} are standard.   

We next explain the global type kinding rules from
Figure~\ref{fig:kindsystembase}. 
The local type kinding in Figure~\ref{fig:localkindsystem}
in Appendix is similar. 

Rule \trule{KIO} states that if both participants have $\Nat$-type,  
that the carried type $\U$ and the rest of the global type $\G'$ 
are kinded by $\Type$, and that $\U$ does not contain any free type variables, 
then the resulting type is well-formed. This prevents these types from being
dependent. The rule \trule{KBra} is similar, while rules \trule{KRec,KTVar} are
standard.  

Dependent types are introduced when kinding recursors
in \trule{KRcr}.  
In \trule{KRcr}, we need an updated index range for $\ii$ in the premise $\Gamma,
\ii:\II^- \vdash \G' \RHD \Type$ since the index substitution uses the
predecessor of $\ii$. We define $\minus\II$ using the abbreviation 
$[0..\tjj]  =  \CONSTRAINT{\ii}{\Nat}{\ii\leq \tjj}$:
\begin{center}
${}\qquad \qquad \qquad \minus{[0..0]}  =  \emptyset 
\quad \text{and}\quad \minus{[0..\tii]}  = [0..\tii-1] 
$
\end{center}
Note that the second argument ($\lambda {\AT{\ii}{\II^-}}.\lambda
{\xx}.{\G'}$) is closed (i.e. it does not contain 
free type variables).
We use \trule{KApp} for both index applications. 
Note that \trule{KApp} checks whether the argument $\tii$ satisfies 
the index set $\II$.
Other rules are similarly understood including those 
for process types (noting $\Delta$ is a well-formed environment 
if it only contains types $T$ of kind $\PType$).

\begin{figure}
\[
\begin{array}{c}
\begin{prooftree}
{- }
\justifies
{\emptyset \vdash \Env} \using\scripttrule{ENul}
\end{prooftree}
\quad 
\begin{prooftree}
{\Gamma \models \mathtt{P}}
\justifies
{\Gamma,\mathtt{P} \vdash \Env} \using\scripttrule{EPre}
\end{prooftree}
\quad 
\begin{prooftree}
{\Gamma \vdash S \RHD \GType \quad u\not\in \dom{\Gamma} }
\justifies
{\Gamma,\AT{u}{S} \vdash \Env} \using\scripttrule{ESort}
\end{prooftree}
\\ 
\\ 
\begin{prooftree}
{\Gamma \vdash I \quad i\not\in \dom{\Gamma}}
\justifies
{\Gamma,i:I \vdash \Env} \using\scripttrule{EIndex}
\end{prooftree}
\quad 
\begin{prooftree}
{\Gamma \vdash \tau\RHD \kappa \quad X\not\in \dom{\Gamma} }
\justifies
{\Gamma,X:\tau \vdash \Env} \using\scripttrule{VEnv}
\end{prooftree}
\end{array}
\]
\caption{Well-formed environments} \label{fig:wellformed_env}
\end{figure}


\begin{figure}
\begin{tabular}{l}
{\bf Type}\\ 
\\ 
$
\begin{prooftree}
{\Gamma \vdash \Env}
\justifies
{\Gamma \vdash \GType}
\using\scripttrule{KBase}
\end{prooftree}
\quad
\begin{prooftree}
{\Gamma,\AT{\ii}{\II}  \vdash \K}
\justifies
{
\Gamma  \vdash \Pi\AT{\ii}{\II}.\K
} \using\scripttrule{KSeq}
\end{prooftree}
$
\\
\\
{\bf Value Types}\\ 
\\ 
$
\begin{array}{l}
\begin{prooftree}
{\Gamma \vdash \G \RHD \GType \quad \ftv(\G)=\emptyset}
\justifies
{\Gamma \vdash \mar{\G} \RHD \SType}  \using\scripttrule{KMar}
\end{prooftree}
\ 
\begin{prooftree}
{\Gamma \vdash \Env}
\justifies
{\Gamma \vdash \Nat \RHD \SType}  \using\scripttrule{KNat}
\end{prooftree}
\ 
\begin{prooftree}
{\Gamma \vdash \Env}
\justifies
{\Gamma \vdash \Bool \RHD \SType}  \using\scripttrule{KBool}
\end{prooftree}
\end{array}
$
\\
\\
\begin{tabular}{ll}
{\bf Principals} 
\\ 
\\
$\begin{prooftree}
{\Gamma \vdash \Env}
\justifies
{\Gamma \vdash \Nat \RHD \PRType}  \using\scripttrule{KPNat}
\end{prooftree}
\quad
\begin{prooftree}
{\Gamma,\AT{i}{I} \vdash U_p\RHD \kappa}
\justifies
{\Gamma \vdash U_p \RHD \Pi\AT{i}{I}.\kappa}  \using\scripttrule{KProd}
\end{prooftree}
$
\\
\\
{\bf Index Sets}\\ 
\\
$
\begin{array}{l}
\begin{prooftree}
{\Gamma \vdash \Env}
\justifies
{\Gamma \vdash \Nat}  \using\scripttrule{KINat}
\end{prooftree}
\quad
\begin{prooftree}
{\Gamma,\AT{\ii}{\II} \models \PRED\andl 0\leq \ii
}
\justifies
{\Gamma \vdash \CONSTRAINT{\ii}{\II}{\PRED\andl 0\leq \ii}} \using\scripttrule{KIIndex}
\end{prooftree}
\end{array}
$
\end{tabular}
\\
\\
{\bf Process Types}\\ 
\\ 
$
\begin{array}{l}
\begin{prooftree}
{\Gamma \vdash \Env }
\justifies
{\Gamma \vdash \emptyset \RHD \PType} \using\scripttrule{KPNul}
\end{prooftree}
\quad
\begin{prooftree}
{\Gamma \vdash \D \RHD \PType \quad \Gamma \vdash \T  \RHD \LType}
\justifies
{\Gamma \vdash \D, \ccc:\T  \RHD \PType} \using\scripttrule{KPCh}
\end{prooftree}
\\[3ex]
\begin{prooftree}
{\Gamma, \ii:\II \vdash \Ty \RHD \K}
\justifies
{\Gamma \vdash \Pi\AT{\ii}{\II}.\Ty \RHD \Pi\AT{\ii}{\II}.\K} \using\scripttrule{KPProd}
\end{prooftree}
\end{array}
$
\end{tabular}
\caption{Kinding system for types, values, principals, index sets and
process types}
\label{fig:kind_all}
\end{figure}

\begin{figure}
\small
\centering
\begin{tabular}{c}
\begin{prooftree}
{
\begin{array}{c}
\Gamma \vdash \p \rhd \Nat\quad \Gamma \vdash \p' \rhd \Nat \quad 
\Gamma   \vdash G' \RHD \Type \quad 
\Gamma \vdash \U  \RHD \Type\\[1mm]
\end{array}
}
\justifies
{\Gamma \vdash \GS{\p}{\p'}{\U}.\G' \RHD \Type} \using\trule{KIO}
\end{prooftree}\\
\\
\begin{prooftree}
{\Gamma \vdash \p \rhd \Nat, \Gamma \vdash \p' \rhd \Nat \quad 
  \forall k \in K,\ \Gamma \vdash \G_k \RHD \Type}
\justifies
{\Gamma \vdash \GB{\p}{\p'} \RHD \Type} \using\trule{KBra}
\end{prooftree}\\
\\
\begin{prooftree}
{\Gamma \vdash \G \RHD \K\subst{0}{j} 
\quad \Gamma, \ii:\II^- \vdash \G' \RHD \K\subst{i+1}{j} }
\justifies
{\Gamma \vdash \GR{\G}{\AT{\ii}{\II^-}}{\xx}{\G'}  
\RHD \Pi\AT{\jj}{\II}. \K} \using\trule{KRcr}
\end{prooftree}\\
\\
\begin{prooftree}
{\Gamma \vdash \G \RHD \Type}
\justifies
{\Gamma \vdash \GM{\xx}{\G} \RHD \Type} \using\trule{KRec}
\end{prooftree}
\quad 
\begin{prooftree}
{\Gamma \vdash \kappa}
\justifies
{\Gamma \vdash \xx \RHD \kappa} \using\trule{KVar}
\end{prooftree}
\quad 
\begin{prooftree}
{\Gamma \vdash \Env}
\justifies
{\Gamma \vdash \End \RHD \Type} \using\trule{KEnd}
\end{prooftree}\\
\\
\begin{prooftree}
{\Gamma \vdash \G \RHD  \Pi\AT{\ii}{\II}.\K \quad \Gamma \models \tii : I}
\justifies
{\Gamma \vdash \G\APP\tii \RHD \K\sub{\tii}{\ii}} \using\trule{KApp}
\end{prooftree}
\end{tabular}\\[2mm]
\caption{Kinding rules for global types} \label{fig:kindsystembase}
\end{figure}

\subsection{Type system (2): type equivalence}
Since our types include dependent types and recursors, 
we need a notion of type equivalence. 
We extend the standard method of \cite[\S2]{DependentBook} with the
recursor. 
The rules are found in Figure~\ref{fig:type-wf} and applied 
following the order appeared in Figure~\ref{fig:type-wf}.   
For example, 
\trule{WfRec} has a higher priority than 
\trule{WfRecF}. 
We only define the rules for $G$.   
The same set of rules can be applied to $T$ and $\tau$. 

Rule \trule{WfBase} 
is the main rule defining $\G_1\equiv\G_2$ and relies on the
existence of a common weak head normal form  for the two types. 

Rules \trule{WfIO} and \trule{WfBra} say if 
subterms are equated and each type satisfies the kinding rule, 
then the resulting global types are equated. 

Rule \trule{WfPRec} says the two recursive types 
are equated only if the bodies are equated. 
Note that we do not check whether unfolded recursive types are 
equated or not. 

Rule \trule{WfRVar} and \trule{WfEnd} are the base cases. 

Two recursors are equated if either (1) 
each global type subexpression is equated by $\equiv$ 
(rule \trule{WfRec}), or if not, 
(2) they reduce to the same normal forms when applied to a finite
number of indices (rule \trule{WfRecF}). 
Note that rule \trule{WfRec} has a higher priority 
than rule \trule{WfRecF} (since it is more efficient without 
reducing recursors). If $\GR{\G_1}{\AT{\ii}{\II}}{\xx}{\G_1'}
\equivwf \GR{\G_2}{\AT{\ii}{\II}}{\xx}{\G_2'}$ 
is derived by applying \trule{WfRec} under finite $I$, 
then the same equation can be derived using \trule{WfRecF}. 
Thus, when the index range is finite, 
\trule{WfRec} subsumes \trule{WfRecF}. On the other hand, 
\trule{WfRec} can be used for infinite index sets. 

Similarly, \trule{WfBase} is staged with \trule{WfApp} to ensure that 
the premise of \trule{WfRecF} always matches with \trule{WfBase}, 
not with \trule{WfApp} 
(it avoids 
the infinite application of rules \trule{WfRecF} and \trule{WfApp}).  
A use of these rules are given in the examples later. 
Other rules are standard. 

\begin{figure}
\centering
\begin{tabular}{c}
\begin{prooftree}
{
\Gamma \vdash \WHNF{G_1}\equivwf  \WHNF{G_2}
}
\justifies
{\Gamma \vdash G_1 \gequivwf G_2} 
\using\trule{WfBase}
\end{prooftree}
\\
\\
\begin{prooftree}
{
\begin{array}{c}
\Gamma \vdash \U_1 \gequivwf \U_2 
\quad 
\Gamma \vdash \G_1 \gequivwf \G_2 \quad 
\Gamma \vdash \GS{\p}{\p'}{\U_i}.\G_i \RHD \GType\\
\end{array}
}
\justifies
{\Gamma \vdash \GS{\p}{\p'}{\U_1}.\G_1\equivwf 
\Gamma \vdash \GS{\p}{\p'}{\U_2}.\G_2
} \using\trule{WfIO}
\end{prooftree}\\
\\
\begin{prooftree}
{
\begin{array}{c}
\forall k\in K.\ \Gamma \vdash \G_{1k} \gequivwf \G_{2k}  \quad 
\Gamma \vdash 
\TO{\p}{\q}\colon \{l_k: \G_{jk}\}_{k\in K}\RHD \GType \ 
\ (j=1,2)
\end{array}
}
\justifies
{\Gamma \vdash \TO{\p}{\q}\colon \{l_k: \G_{1k}\}_{k\in K}\equivwf  
\TO{\p}{\q}\colon \{l_k: \G_{2k}\}_{k\in K}} 
\using\trule{WfBra}
\end{prooftree}
\\
\\
\begin{prooftree}
{\Gamma \vdash \G_1 \gequivwf \G_2}
\justifies
{\Gamma \vdash \GM{\xx}{\G_1} \equivwf \GM{\xx}{\G_2}}  
\using\trule{WfPRec}
\end{prooftree}
\quad
\begin{prooftree}
{\Gamma \vdash \Env}
\justifies
{\Gamma \vdash \xx \equivwf \xx} \using\trule{WfRVar}
\end{prooftree}
\quad
\begin{prooftree}
{\Gamma \vdash \Env}
\justifies
{\Gamma \vdash \End \equivwf \End} \using\trule{WfEnd}
\end{prooftree}\\
\\
\begin{prooftree}
{
\begin{array}{c}
\Gamma \vdash
\G_1
\gequivwf
\G_2
\quad 
\Gamma,\AT{\ii}{\II} \vdash
\G_1'
\gequivwf
\G_2'
\end{array}
}
\justifies
{
\Gamma \vdash
\GR{\G_1}{\AT{\ii}{\II}}{\xx}{\G_1'}
\equivwf
\GR{\G_2}{\AT{\ii}{\II}}{\xx}{\G_2'}
}
\using\trule{WfRec}
\end{prooftree}
\\
\\
\begin{prooftree}
{
\begin{array}{l}
\Gamma \vdash
\G_1
\equiv
\G_2\\
\Gamma \vdash
\GR{\G_1}{\AT{\ii}{\II}}{\xx}{\G_1'}\ \n
\gequivwf
\GR{\G_2}{\AT{\ii}{\II}}{\xx}{\G_2'}\ \n
\quad \Gamma\models \II=[0..\m]
\quad 1\leq \n \leq \m 
\end{array}
}
\justifies
{
\Gamma \vdash
\GR{\G_1}{\AT{\ii}{\II}}{\xx}{\G_1'}
\equivwf
\GR{\G_2}{\AT{\ii}{\II}}{\xx}{\G_2'}
}
\using\trule{WfRecF}
\end{prooftree}
\\
\\
\begin{prooftree}
{\begin{array}{c}
\Gamma\vdash \G_1\equivwf \G_2 
\quad 
\Gamma\models \tii_1:I = \tii_2:I \quad 
\Gamma\vdash \G_i\tii_i \RHD \kappa 
\quad 
(i=1,2)\\[1mm]
\end{array}
}
\justifies
{\Gamma \vdash \G_1 \tii_1 \equivwf \G_2 \tii_2}
\using\trule{WfApp}
\end{prooftree}
\end{tabular}\\[2mm]
\caption{Global (decidable) type equivalence rules} \label{fig:type-wf}
\end{figure}

\begin{figure}
\[
\begin{prooftree}
{
\begin{array}{c}
\Gamma \vdash
\G_1
\equiv
\G_2\\
\forall \n \in I. 
\Gamma \vdash
\GR{\G_1}{\AT{\ii}{\II}}{\xx}{\G_1'}\ \n
\equiv 
\GR{\G_2}{\AT{\ii}{\II}}{\xx}{\G_2'}\ \n
\end{array}
}
\justifies
{
\Gamma \vdash
\GR{\G_1}{\AT{\ii}{\II}}{\xx}{\G_1'}
\equiv 
\GR{\G_2}{\AT{\ii}{\II}}{\xx}{\G_2'}
}
\using\trule{WfRecExt}
\end{prooftree}
\]
\caption{Meta-logical global type equivalence rule} \label{fig:meta-equivalence}
\end{figure}

\paragraph{\bf Type equivalence with meta-logic reasoning}
The set of rules in Figure~\ref{fig:type-wf} are designed with algorithmic
checking in mind (see \S~\ref{subsec:termination}). In some examples, in order
to type processes with types that are not syntactically close, it is
interesting to extend the equivalence classes on types, at the price of the
decidability of type checking.

We propose in Figure~\ref{fig:meta-equivalence} an additional equivalence rule
that removes from rule \trule{WfRecF} the finiteness assumption on $\II$. It
allows to prove the equivalence of two recursor-based types if it is possible to
prove meta-logically that they are extensionally equivalent. This technique can
be used to type several of our examples (see \S~\ref{subsec:typingexample}).

\subsection{Typing processes}
\begin{figure}
\begin{center}
\begin{tabular}{@{\hspace{-.2em}}c}
\begin{prooftree}
{\Gamma \models 0\leq \tii \op \tii' }
\justifies
{\Gamma \vdash \tii \op \tii' \rhd \Nat} \using\tfrule{TIOp}
\end{prooftree}
\quad 
\begin{prooftree}
{\Gamma \vdash \Env}
\justifies
{\Gamma \vdash \n \rhd \Nat} \using\tfrule{TNat}
\end{prooftree}
\\
\\
\begin{prooftree}
{\Gamma, \ii : \II  \vdash \Env}
\justifies
{\Gamma, \ii : \II \vdash \ii \rhd \Nat} \using\tfrule{TVari}
\end{prooftree}
\quad
\begin{prooftree}
{\Gamma  \vdash \kappa}
\justifies
{\Gamma\vdash \Alice\rhd \kappa} \using\tfrule{TId}
\end{prooftree}
\quad
\begin{prooftree}
{\Gamma  \vdash \p\rhd\Pi\AT{\ii}{\II}.\kappa\quad \Gamma\models \AT{\tii}{\II}}
\justifies
{\Gamma\vdash \p[\tii]\rhd \kappa\sub{\tii}{\ii}} \using\tfrule{TP}
\end{prooftree}
\\ 
\\
\begin{prooftree}
{
\begin{array}{c}
{\Gamma, \ii : \minus{\II},\X:\Ty\sub{\ii}{\jj}  \vdash \QQ  \rhd
 \Ty\sub{\ii+1}{\jj}\quad \Gamma\vdash \PP  \rhd\Ty\sub{0}{\jj}}
\quad 
\Gamma,\AT{\jj}{\II}\vdash \Ty \RHD \K
\end{array}
}
\justifies
{\Gamma \vdash \RECSEQP{\PP}{\ii}{\X}{\QQ} \rhd \Pi \AT{\jj}{\II}.\Ty} 
\using\tfrule{TPRec}
\end{prooftree}\\
\\
\begin{prooftree}
{
{\Ga\! \proves\! \PP  \!\rhd\! \tau \quad \Ga \!\proves\! \tau \!\equiv\! \tau'}
}
\justifies
{  
\Ga \proves \PP  \rhd \tau'
}
\using\tfrule{TEq}
\end{prooftree}
\begin{prooftree}
{\Gamma \vdash \PP\rhd \Pi \AT{\ii}{\II}.\Ty 
\quad \Gamma \models {\tii:\II} 
}
\justifies
{\Gamma \vdash \PP\APP\tii \rhd \Ty\sub{\tii}{\ii}} \using\tfrule{TApp}
\end{prooftree}
\\
\\
\begin{prooftree}
{\Gamma, \X:\Ty \vdash P \rhd \Ty}
\justifies
{\Gamma \vdash \mu{\X}.P \rhd \Ty} \using\tfrule{TRec}
\end{prooftree}
\begin{prooftree}
{\Gamma, \X:\Ty \vdash \Env \quad \Gamma\vdash \Ty\WB \Ty'}
\justifies
{\Gamma, \X:\Ty \vdash \X \rhd \Ty'} \using\tfrule{TVar}
\end{prooftree}
\\
\\
\begin{prooftree}
{
\begin{array}{@{}c@{}}
\Gamma \vdash \uu:\mar{\G} \quad 
\Gamma \vdash \PP \rhd \D, \y:\G \proj{\p_0}\\[1mm]
\Gamma \proves \p_i\rhd \Nat \quad 
\Gamma \models \pid(\G)=\{\p_0..\p_\n\}\\[1mm]
\end{array}
}
\justifies
{\Gamma \vdash \sr\uu{\p_0,..,\p_\n} \y\PP \rhd \D} 
\using\tfrule{TInit}
\end{prooftree}
\begin{prooftree}
{
\begin{array}{@{}c@{}}
\Gamma \vdash \uu:\mar{\G} \quad 
\Gamma \vdash \PP \rhd \D, \y:\G \proj{\pp}\\[1mm]
\Gamma \proves \pp\rhd \Nat \quad 
\Gamma \models \pp\in\pid(\G)\\[1mm]
\end{array}
}
\justifies
{\Gamma \vdash \ssa\uu\pp\y\PP \rhd \D } \using\tfrule{TAcc}
\end{prooftree}\\
\\
\begin{prooftree}
{
\Gamma \vdash \Ia:\mar{\G} \ \, \Gamma \proves \pp\rhd \Nat \ \, 
\Gamma \models \pp\in\pid(\G)
} 
\justifies
{\Gamma \vdash \sj{\Ia}{\pp}{\s} \rhd \s[\pp]:\G \proj{\pp}} \using\tfrule{TReq}
\end{prooftree}
\\
\\
\begin{prooftree}
{\Gamma \vdash \e \rhd \ST \quad \Gamma \vdash \PP\rhd \D,\ccc:\T}
\justifies
{\Gamma \vdash \Pout{\ccc}{\p}{\e}{\PP} \rhd \D,\ccc:\Lout{\p}{\ST}{\T}} \using\tfrule{TOut}
\end{prooftree}
\quad
\begin{prooftree}
{\Gamma,\x:\ST \vdash \PP\rhd \D,\ccc:\T}
\justifies
{\Gamma \vdash \Pin{\ccc}{\p}{\x}{\PP} \rhd \D,\ccc:\Lin{\p}{\ST}{\T}} \using\tfrule{TIn}
\end{prooftree}
\\
\\
\begin{prooftree}
{\Gamma \vdash \PP\rhd \D,\ccc:\T}
\justifies
{\Gamma \vdash \Pout{\ccc}{\p}{\ccc'}{\PP}\rhd \D,\ccc:\Lout{\p}{\T'}{\T},\ccc':\T'} \using\tfrule{TDeleg}
\end{prooftree}
\begin{prooftree}
{\Gamma \vdash \PP\rhd \D,\ccc:\T,\y:\T'}
\justifies
{\Gamma \vdash \Pin{\ccc}{\p}{\y}{\PP}\rhd \D,\ccc:\Lin{\p}{\T'}{\T}}
\using\tfrule{TRecep}
\end{prooftree}
\\
\\
\begin{prooftree}
{\Gamma \vdash \PP\rhd \D,\ccc:\T_\jj\quad \jj\in K}
\justifies
{\Gamma \vdash \Psel{\ccc}{\p}{l_\jj}{\PP}\rhd \D,\ccc:\Lsel{\p}{\T_k}} 
\using\tfrule{TSel}
\end{prooftree}\\
\\
\begin{prooftree}
{\forall k\in K,\Gamma \vdash \PP_k\rhd \D,\ccc:\T_k}
\justifies
{\Gamma \vdash \Pbranch{\ccc}{\p}\rhd \D,\ccc:\Lbranch{\p}{\T_k} } \using\tfrule{TBra}
\end{prooftree}\\
\\
\begin{prooftree}
{\Gamma, \Ia:\ENCan{G} \vdash \PP \rhd \Delta}
\justifies
{\Gamma \vdash \nuc{\Ia}{\PP} \rhd \Delta} \using\tfrule{TNu}
\end{prooftree}
\quad
\begin{prooftree}
{\Gamma \vdash \D \quad \D\ \End\text{ only}}
\justifies
{\Gamma \vdash \inact \rhd \D } \using\tfrule{TNull}
\end{prooftree}
\quad
\begin{prooftree}
{\Gamma \vdash \PP \rhd \D \quad \Gamma \vdash \QQ \rhd \D'}
\justifies
{\Gamma \vdash \PP \pc \QQ \rhd \D,\D' } \using\tfrule{TPar}
\end{prooftree}
\end{tabular}
\end{center}
\caption{Initial expression and process typing} \label{fig:process-typing1}
\end{figure}

We explain here (Figure~\ref{fig:process-typing1}) the 
typing rules for the initial processes. 
Rules \tftrule{TNat} and \tftrule{TVari} are
standard. Judgement $\Gamma\proves \Env$ (defined in Figure
\ref{fig:wellformed_env}) in the premise 
means that $\Gamma$ is well-formed. 
For participants, we check their typing by \tftrule{TId} and \tftrule{TP}
in a similar way as~\cite{DBLP:conf/popl/XiP99}.  
Rule \tftrule{TPRec} deals with the
changed index range within the recursor body. More precisely, we first check
$\tau$'s kind.  Then we verify for the base case ($\jj = 0$) that $P$ has type
$\Ty\sub{0}{\jj}$. Last, we check the more complex inductive case: $Q$ should
have type $\Ty\sub{\ii+1}{\jj}$ under the environment $\Gamma,
\ii\!\!:\!\!\minus{\II},\X\!\!:\!\!\Ty\sub{\ii}{\jj}$ where $\Ty\sub{\ii}{\jj}$
of $X$ means that $X$ satisfies the predecessor's type (induction
hypothesis). Rule \tftrule{TApp} is the elimination rule for dependent types. 

Rule \tftrule{TEq} states that typing works up to type
equivalence where $\equiv$ is defined in the previous subsection.
Recursion \tftrule{TRec} rule is standard. 
In rule \tftrule{TVar}, $\Delta\WB \Delta'$ denotes 
the standard isomorphism rules for recursive types 
(i.e.~we identify $\LM{\xx}{\T}$ and $T\sub{\LM{\xx}{\T}}{\xx}$), 
see Appendix \ref{app:kind}. Note that 
we apply isomorphic rules 
only when recursive variables are introduced. This way, we can separate 
type isomorphism for recursive types and type equalities 
with recursors. 

Rule \tftrule{TInit} types a session initialisation on shared channel \uu, binding
channel \y\ and 
requiring participants $\{\pp_0,..,\pp_n\}$. 
The premise verifies that the type of \y\
is the first projection of the global type \G\ of \uu\ and that the participants
in \G\ (denoted by $\mathtt{pid}(\G)$) can be semantically derived as
$\{\p_0,..,\p_\n\}$. 

Rule \tftrule{TAcc} allows to type the \p-th participant to the session initiated on
\uu. The typing rule checks that the type of \y\ is the \p-th projection of the
global type \G\ of \uu\ and that \G\ is fully
instantiated.   
The kind rule ensures that $\G$ 
 is fully instantiated (i.e. $G'$'s kind is $\Type$).
Rule \tftrule{TReq} types the process that waits for an accept from a
participant: its 
type corresponds to the end-point projection of $G$. 
 
The next four rules are associate the input/output processes with the
input/output types, and delegation input/output processes to 
session input/output types. 
Then the next two rules are branching/selection rules. 

Rule \tftrule{TNull} checks that $\Delta$ is well-formed and only contains
$\End$-type for weakening (the condition $\Delta\ \End\text{ only}$ means $\forall c \in \dom{\Delta}.\Delta(c)=\End$). Rule \tftrule{TPar} puts in parallel two processes
only if their sessions environments have disjoint domains. 
Other rules are standard. 

\section{Properties of typing}
\label{sec:property}
\noindent 
We study the two main properties of the typing system: one is the
termination of type-checking and another is type-soundness. The proofs
require a careful analysis due to the subtle interplay between
dependent types, recursors, recursive types and branching types.
\subsection{Basic properties}
\label{sec:basic}
\noindent We prove here a series of consistency lemmas concerning permutations
and weakening.
They are invariably deduced by induction on the derivations in the
standard manner.  

We use the following additional notations: $\Ga
\subseteq \Ga'$ iff $\AT{u}{S}\in \Ga$ implies $\AT{u}{S}\in \Ga'$ and
similarly for other mappings. In other words, $\Ga \subseteq \Ga'$
means that $\Ga'$ is a permutation of an extension of $\Gamma$.

\begin{lem} \label{lem:basic}\hfill
\begin{enumerate}[\em(1)]
\item (Permutation and Weakening)\ 
\label{lem:per}
Suppose $\Ga \subseteq \Ga'$ and $\Ga'\proves \Env$. Then 
$\Gamma \proves \judg$ 
implies
$\Gamma' \proves \judg$.  

\item (Strengthening)\ 
\label{lem:st}
$\Gamma,\AT{u}{U},\Gamma' \proves \judg$ 
and $u\not\in \fv(\Gamma', \judg)\cup \fn(\Gamma', \judg)$. Then $\Gamma,\Gamma' \proves \judg$.
Similarly for other mapping. 

\item (Agreement)
\label{lem:agree}
\begin{enumerate}[\em(a)]
\item 
\label{lem:agreeEnv}
$\Gamma \proves J$ implies $\Gamma \proves \Env$. 
\item 
\label{lem:agreeK}
$\Gamma \proves \G \RHD \K$ implies 
$\Gamma \proves \K$. Similarly for other judgements. 

\item 
\label{lem:agreeEq}
$\Gamma \proves \G \equiv \G'$ implies 
$\Gamma \proves G \RHD \K$. Similarly for other judgements. 

\item 
\label{lem:agreeTy}
$\Gamma \proves P \rhd \Ty$ implies $\Gamma \proves \Ty \RHD \K$.
Similarly for other judgements. 
\end{enumerate}
\item (Exchange) 
\label{lem:agreeEx}
\begin{enumerate}[\em(a)] 
\item 
\label{lem:ExT}
$\Gamma,\AT{u}{U},\Gamma' \proves \judg$ and $\Gamma \proves U\equiv
U'$. Then $\Gamma,\AT{u}{U'},\Gamma' \proves \judg$. 
Similarly for other mappings. 
\item 
\label{lem:ExI}
$\Gamma,\AT{\ii}{\II},\Gamma' \proves \judg$ and $\Gamma\models
\AT{\ii}{\II}=
\AT{\ii}{\II'}$. Then 
$\Gamma,\AT{\ii}{\II'},\Gamma' \proves \judg$. 

\item 
\label{lem:ExP}
$\Gamma,\PRED,\Gamma' \proves \judg$ and $\Gamma\models
\PRED=\PRED'$. Then 
$\Gamma,\PRED',\Gamma' \proves \judg$. 
\end{enumerate}
\end{enumerate}
\end{lem}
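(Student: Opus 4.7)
\medskip

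\noindent\textbf{Proof proposal.} The plan is to carry out a single simultaneous induction on the height of derivations, since the judgements $\Gamma \vdash \Env$, $\Gamma \vdash \kappa$, $\Gamma \vdash \alpha \RHD \kappa$, $\Gamma \vdash \alpha \equiv \beta$, $\Gamma \vdash \alpha \WB \beta$, $\Gamma \vdash e \rhd U$, $\Gamma \vdash \p \rhd U_p$ and $\Gamma \vdash P \rhd \tau$ are mutually defined. For each inductive step the argument is the routine one: rebuild the last rule after applying the induction hypothesis to each premise. I would handle the six statements in the order they are listed, since later statements reuse earlier ones.

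For \emph{Permutation and Weakening} (1), the only interesting cases are the axioms at the leaves, which assert $\Gamma \vdash \Env$ directly: here the hypothesis $\Gamma' \vdash \Env$ is used in place of the original environment, and the side-conditions of the form $u \not\in \dom\Gamma$ or $\Gamma \models \mathtt{P}$ survive because they follow from $\Gamma' \vdash \Env$ and $\Gamma \subseteq \Gamma'$. All other rules propagate the hypothesis through the premises unchanged. \emph{Strengthening} (2) is dual: by induction on the derivation, noting that the variable-lookup rules \tfrule{TVari}, \tfrule{TVar}, \tfrule{TP}, \tfrule{EPre}, etc., cannot have introduced $u$ since by assumption $u \notin \fv(\Gamma',\judg) \cup \fn(\Gamma',\judg)$; in the binder rules one invokes $\alpha$-conversion to keep $u$ distinct from the bound names.

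The \emph{Agreement} clauses (3a)--(3d) are proved by straightforward structural induction, each clause using the previous one. Clause (a) follows because every kinding, typing and equivalence rule either has $\Gamma \vdash \Env$ as a premise or has some premise whose induction hypothesis delivers it. Clause (b) uses the explicit kind annotations appearing on the right-hand side of each kinding rule; for the recursor rule \trule{KRcr}, the returned kind $\Pi \AT{\jj}{\II}.\K$ is well-formed by \scripttrule{KSeq} since the premise kinds the body under $\ii : \II^-$; the substitution instances $\K\subst{0}{j}$ and $\K\subst{i+1}{j}$ supplied to the premises are handled by using Permutation/Weakening together with the semantic judgement $\Gamma \models \tii:\II$. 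Clause (c) is immediate from the equivalence rules, each of which carries a kinding side-condition in its premises (e.g.\ \trule{WfIO}, \trule{WfBra}, \trule{WfApp}), so the two sides are both kinded by the same kind. Clause (d) is proved by induction on the typing derivation, using clause (b) at each node where a type is introduced (for instance in \tfrule{TInit} and \tfrule{TAcc} one uses that projections of well-kinded global types are well-kinded end-point types, which is a separate small lemma about $G \proj \q$ preserving kinding).

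Finally, the three \emph{Exchange} clauses (4a)--(4c) are proved by induction on the derivation of $\Gamma,\Gamma' \vdash \judg$. In clause (a) the only rules that actually inspect the bound type $U$ are the lookup rules for values, participants and process variables; there one appeals to \tfrule{TEq} (or the analogous equivalence rule at the appropriate kind) together with \trule{WfApp} to replace $U$ by $U'$. Clause (b) is similar: the lookup rules for indices use $\ii : \II$ to conclude $\ii \rhd \Nat$ via \tfrule{TVari}, and the semantic side-conditions $\Gamma \models \tii : \II$ in \trule{KApp} and \tfrule{TApp} carry over to $\II'$ because $\Gamma \models \AT{\ii}{\II} = \AT{\ii}{\II'}$. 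Clause (c) is the easiest: predicates are used only via \scripttrule{EPre} and via the semantic judgements $\Gamma \models \cdot$, and replacing $\mathtt{P}$ by an equivalent $\mathtt{P}'$ preserves both. The main obstacle in all of this is bookkeeping around the recursor rules \trule{KRcr} and \tfrule{TPRec}, where the index range shrinks from $\II$ to $\minus\II$ and a substitution into the kind/type occurs; one needs a short auxiliary observation that $\minus{(\cdot)}$ respects the semantic equality of index sorts used in (4b), after which every case is routine.
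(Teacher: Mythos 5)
Your overall strategy---one simultaneous induction on the derivations of the mutually defined judgements, rebuilding the last rule after applying the induction hypothesis---is the paper's strategy, and your handling of Permutation/Weakening, Strengthening and Exchange is in the expected spirit. There is, however, a genuine gap in how you treat the rules whose \emph{conclusions} contain an index substitution, namely \tftrule{TApp} (conclusion $\Ty\sub{\tii}{\ii}$), \trule{KApp} and its local analogue (conclusion $\K\sub{\tii}{\ii}$), and the substituted kinds $\K\subst{0}{\jj}$, $\K\subst{\ii+1}{\jj}$ in \trule{KRcr} and \tftrule{TPRec}. Consider Agreement (3)(d) at \tftrule{TApp}: the induction hypothesis yields $\Ga \proves \Pi\AT{\ii}{\II}.\Ty \RHD \Pi\AT{\ii}{\II}.\K$, i.e.\ by inversion $\Ga,\AT{\ii}{\II} \proves \Ty \RHD \K$, and what you must produce is $\Ga \proves \Ty\sub{\tii}{\ii} \RHD \K\sub{\tii}{\ii}$. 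The step you invoke---``Permutation/Weakening together with the semantic judgement $\Ga \models \AT{\tii}{\II}$''---cannot deliver this: weakening never instantiates a declared index variable. What is needed is precisely the index-substitution property, Lemma~\ref{lem:substitution}(1), and since that lemma in turn relies on the present one, the paper proves the two lemmas \emph{simultaneously}; this is exactly the point its proof singles out before the worked case. Your sketch never mentions the substitution lemma, so as written the \tftrule{TApp}/\trule{KApp} cases (and the analogous cases in the other clauses) do not go through.

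The same issue surfaces in your \trule{KRcr} case for (3)(b). The paper's argument applies the induction hypothesis to the premise $\Ga,\AT{\ii}{\minus\II} \proves \G' \RHD \K\subst{\ii+1}{\jj}$ to obtain $\Ga,\AT{\ii}{\minus\II} \proves \K\subst{\ii+1}{\jj}$, and then uses the definition of $\minus\II$ to rename the index and conclude $\Ga,\AT{\jj}{\II} \proves \K$, after which \scripttrule{KSeq} gives $\Ga \proves \Pi\AT{\jj}{\II}.\K$. Your appeal to weakening plus the semantic judgement is not the mechanism that discharges the substitution here either; you should either reproduce this change-of-index argument or route the case through the substitution lemma as above.
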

\begin{proof}
By induction on the derivations. We note that the proofs are 
done simultaneously.  For the rules which use substitutions 
in the conclusion of the rule, such as \tftrule{TApp} in
Figure~\ref{fig:process-typing1}, 
we require to use the next substitution lemma simultaneously. 
We only show the most interesting case 
with a recursor. \\[1mm]
{\bf Proof of (3)(b). Case \trule{KRcr}:}  Suppose 
$\Gamma \vdash \GR{\G}{\AT{\ii}{\II^-}}{\xx}{\G'}  
\RHD \Pi\AT{\jj}{\II}. \K$ is derived by \trule{KRcr} in 
Figure~\ref{fig:kindsystembase}.  We prove $\Gamma \vdash  
\Pi\AT{\jj}{\II}. \K$. 
From $\Gamma, \ii:\II^- \vdash \G' \RHD \K\subst{i+1}{j}$ in the
premise of \trule{KRcr}, 
we have $\Gamma, \ii:\II^- \vdash \K\subst{i+1}{j}$ by inductive
hypothesis.  By definition of $\II^-$, this implies 
$\Gamma, \jj:\II \vdash \K$. Now by  \trule{KSeq}, 
we have $\Gamma\vdash \Pi\AT{\jj}{\II}.\K$, as desired. 
\end{proof}
The following lemma which states that well-typedness is
preserved by substitution of appropriate values for variables, is the
key result underlying Subject Reduction. This also guarantees that 
the substitution for the index which affects to a shared environment and
a type of a term, and the substitution for a process variable are
always well-defined.  
Note that 
substitutions may change session types and environments in the index case.
\newpage

\begin{lem}[Substitution lemma]\label{lem:substitution}\hfill
\begin{enumerate}[\em(1)]
\item \labelx{subs01} 
If ${\Ga,\ptilde{\ii}:\II,\Ga'}\proves_\Sigma {\judg}$ and 
$\Ga\models \n: \II$, then 
${\Ga,(\Ga'\subst{\n}{\ii})}\proves_\Sigma \judg\subst{\n}{\ii}$.
\item \labelx{subs02} 
If \derqq{\Ga,\ptilde{\X}:\ptilde{\D_0}}{\Sigma}{\PP}{\Ty} and 
\de\Ga{\ptilde{\QQ}}{\D_0}, then
\derqq{\Ga}{\Sigma}{\PP\subst{\QQ}{\ptilde{X}}}{\Ty}.
\item \label{subs1} 
If \derqq{\Ga,\ptilde{\x}:\ptilde{\ST}}{\Sigma}{\PP}{\D} and \de\Ga{\ptilde{v}}{\ptilde{\ST}}, then
\derqq{\Ga}{\Sigma}{\PP\subst{\ptilde{\va}}{\ptilde{\x}}}{\D}.
\item \labelx{subs2} If
\derqq{\Ga}{\Sigma}{\PP}{\D,\y:\T}, then
\derqq{\Ga}{\Sigma}{\PP\sub{\si\s \pv}{\y}}{\D,\si\s \pv:\T}.
\end{enumerate}
\end{lem}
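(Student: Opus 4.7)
The plan is to prove all four statements simultaneously by induction on the height of the derivation of the typing judgement in the hypothesis, with case analysis on the last rule applied. The four parts must be proved together because the typing rules intertwine the different kinds of substitution: for example, \tftrule{TPRec} and \tftrule{TApp} produce process types of the shape $\Pi\AT{\ii}{\II}.\Ty$ whose elimination performs an index substitution, while \tftrule{TInit} and \tftrule{TAcc} use the projection $G\proj\pp$ which can depend on an index variable, so proving (2)--(4) in isolation would not go through. Throughout, I will freely use the agreement, exchange and weakening facts from Lemma~\ref{lem:basic}, which have already been established without relying on substitution.

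For part (1), the crux is that index substitution $\sub{\n}{\ii}$ commutes with the operations appearing in types and judgements, namely projection, primitive recursor reduction, the predecessor $\II^-$ on sorts, and the semantic entailments $\Gamma \models \PRED$ and $\Gamma \models \tii:\II$. I would first establish small commutation lemmas: $(G\proj\pp)\sub{\n}{\ii} = (G\sub{\n}{\ii})\proj{\pp\sub{\n}{\ii}}$, $\pid(G\sub{\n}{\ii}) = \pid(G)\sub{\n}{\ii}$, and that $\Gamma,\ii:I,\Gamma' \models \judg$ with $\Gamma \models \n:I$ yields $\Gamma,\Gamma'\sub{\n}{\ii} \models \judg\sub{\n}{\ii}$. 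The interesting rule cases are then \trule{KRcr}/\tftrule{TPRec}, where the premise uses $\ii:\II^-$; here the substitution lands on a variable other than the locally bound one (by $\alpha$-convention) and the induction hypothesis applies with the shifted environment. For \tftrule{TApp} the conclusion already contains a substitution, so I need the standard compositionality $\Ty\sub{\tii}{\jj}\sub{\n}{\ii} = \Ty\sub{\n}{\ii}\sub{\tii\sub{\n}{\ii}}{\jj}$. The rule \tftrule{TEq} is handled by showing that type equivalence is preserved by index substitution, which itself is an induction on the equivalence derivation; the delicate case here is \trule{WfRecF}, where the finiteness assumption $\II=[0..\m]$ can have $\m$ mentioning $\ii$, but substitution preserves the finiteness (the sort $\II\sub{\n}{\ii}$ is still bounded) so the same finite set of instance-equalities, to which we apply the IH, suffices.

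Parts (2), (3) and (4) follow the classical pattern for substitution lemmas in session-typed calculi, as in \cite{BettiniCDLDY08LONG}. For (2), process-variable substitution, the only non-trivial case is \tftrule{TVar}, where we need the target type $\Ty$ matched up to the isomorphism $\WB$ with that of the substituted process $Q$; Lemma~\ref{lem:basic}(4) and agreement do the work. For (3), value substitution, all cases are homomorphic except for rules mentioning the substituted variable in an expression position, which are discharged directly using \tftrule{TNat}, \tftrule{TVari}, \tftrule{TId}. For (4), endpoint substitution of $\si\s\pv$ for $\y$, the rules \tftrule{TOut}, \tftrule{TIn}, \tftrule{TSel}, \tftrule{TBra}, \tftrule{TDeleg}, \tftrule{TRecep} go through since the session-environment entry simply has its carrier renamed, and for \tftrule{TInit}, \tftrule{TAcc} the IH applies to the continuation with the updated $\D$.

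The main obstacle will be the recursor interaction with type equivalence in part (1). A naive induction on $\equiv$ stumbles on \trule{WfRec}/\trule{WfRecF} because index substitution can convert a recursor that was being compared syntactically (\trule{WfRec}) into one that now normalises under a finite sort (\trule{WfRecF}), or turn an applied recursor ($G\APP \tii$) into a fully-reduced type via \trule{SuccR}. I will handle this by proving a substitution lemma for the reduction system in Figure~\ref{fig:globalreduction} first (which is just a confluent rewrite on open terms), then observing that both \trule{WfRec} and \trule{WfRecF} are sound with respect to the semantic equivalence induced by these reductions, so either rule can be re-applied after substitution regardless of which one originally justified the equivalence. The analogous statement for process-type equivalence follows the same pattern, and with this in hand the induction closes.
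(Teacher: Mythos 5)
Your proposal follows essentially the same route as the paper's own proof: a simultaneous induction on the derivations of the typing (and kinding/equivalence) judgements, whose only case the paper actually writes out is \tftrule{TPRec}, handled exactly as you describe --- apply the induction hypothesis to the three premises under the shifted sort $\II^-$ and re-apply the rule --- with parts (2)--(4) treated as routine. Two small caveats against the paper's text: Lemma~\ref{lem:basic} and this lemma are there proved by a single \emph{mutual} induction (so weakening/agreement are not available independently, contrary to your assumption, though folding them into your induction fixes this), and for \tftrule{TEq} the cleaner closing step is to re-apply the very same equivalence rule to the IH-substituted premises (the semantic side conditions being stable under index substitution), since appealing only to \emph{soundness} of \trule{WfRec}/\trule{WfRecF} with respect to the reduction-induced equivalence does not by itself give re-derivability of the substituted equivalence.
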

\begin{proof}
By induction on the derivations.
We prove the most interesting case: 
if \derqq{\Ga,\ptilde{\ii}:\II,\Ga'}{\Sigma}{\PP}{\Ty} and 
\der{\Ga}{\n}{\Nat} with $\Ga\models \n: \II$, then 
$\derqq{\Ga,(\Ga'\subst{\n}{\ii})}{\Sigma}{\PP\subst{\n}{\ii}}{\Ty\subst{\n}{\ii}}$
when the last applied rule is \tftrule{TPRec}. 
Assume 
\[
\Gamma,k:J, \Gamma' \vdash \RECSEQP{\PP}{\ii}{\X}{\QQ} \rhd \Pi
\AT{\jj}{\II}.\Ty 
\quad \mbox{with} \quad \Ga\models \n: J
\]
is derived from \tftrule{TPRec}. This is derived by: 
{\small
\begin{eqnarray}
& & 
\Gamma,k:J, \Gamma', \ii : \minus{\II},\X:\Ty\sub{\ii}{\jj}  \vdash \QQ  \rhd
\Ty\sub{\ii+1}{\jj} \label{sub:q-1}\\
& & 
\Gamma, k:J, \Gamma'\vdash \PP  \rhd\Ty\sub{0}{\jj} 
\label{sub:p-1}\\
& & \Gamma, k:J, \Gamma',\AT{\jj}{\II} \vdash \Ty \RHD \K \label{sub:ty-1}\\
\text{From (\ref{sub:q-1})}
& & 
\Gamma,\Gamma'\sub{\n}{k}, \ii : \minus{\II}\sub{\n}{k},\X:\Ty\sub{\ii}{\jj}\sub{\n}{k}  \vdash \QQ\sub{\n}{k}  \rhd
\Ty\sub{\ii+1}{\jj}\sub{\n}{k} \label{sub:q-2}\\
\text{From (\ref{sub:p-1})}
& & \Gamma, \Gamma'\sub{\n}{k} \vdash 
\PP\sub{\n}{k}  \rhd\Ty\sub{0}{\jj}\sub{\n}{k} 
\label{sub:p-2}\\
\text{From (\ref{sub:ty-1})}
& & 
\Gamma,\Gamma'\sub{\n}{k},\AT{\jj}{\II}\sub{\n}{k} \vdash
\Ty\sub{\n}{k} \RHD 
\K\sub{\n}{k} 
\label{sub:ty-2}
\end{eqnarray}
}
From (\ref{sub:q-2}), (\ref{sub:p-2})
and (\ref{sub:ty-2}), by \tftrule{TPRec}, we obtain 
$\Gamma,\Gamma'\sub{\n}{k} \vdash
(\RECSEQP{\PP}{\ii}{\X}{\QQ})\sub{\n}{k} \rhd 
(\Pi \AT{\jj}{\II}.\Ty)\sub{\n}{k}$ as required.
\end{proof}


\label{sec:subjectreduction}
\subsection{Termination of equality checking and type checking} 
\label{subsec:termination}
This subsection proves the termination of the type-checking 
(we assume that we use the type equality rules in Figure~\ref{fig:type-wf}).  
\noindent Ensuring termination of type-checking with dependent types is not an
easy task since type equivalences are often derived from term equivalences.  We
rely on the strong normalisation of System $\mathcal{T}$
\cite{GirardJY:protyp} for the termination proof.
 
\begin{PROP}[Termination and confluence]
\label{prop:SNconfluent}
The reduction relation $\redsym$ on global and end-point types 
(i.e. $\G\redsym \G'$
and $\T\redsym \T'$ for closed types in Figure \ref{fig:globalreduction}) 
are strong normalising and confluent on well-formed kindings. 
More precisely, 
if $\Gamma\proves G \RHD \K$, then there exists a unique term 
$G'= \WHNF{G}$ such that $G \redsym^\ast G'\not\redsym$. 
%
\end{PROP}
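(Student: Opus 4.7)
The plan is to reduce the claim to the strong normalisation and confluence of G\"odel's System $\mathcal{T}$ via a type-preserving encoding. The only reduction rules on types are [ZeroR] and [SuccR] from Figure~\ref{fig:globalreduction}, and these are precisely the computation rules for the primitive recursor $\mathbf{R}$ in System $\mathcal{T}$. All the other global type constructors---message prefixes $\GS{\p}{\p'}{\U}.\G$, branching $\GB{\p}{\p'}$, $\mu$-recursion, type variables and $\End$---are inert under $\redsym$: they can appear as subterms but do not themselves trigger reductions. Since the proposition concerns closed types, a $\mu$-binder plays no role in the reduction relation.

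First I would define a translation $\mathrm{tr}(\cdot)$ from well-kinded global (and end-point) types into simply-typed terms of System $\mathcal{T}$. Kinds are mapped by $\mathrm{tr}(\Type) = \mathtt{o}$ and $\mathrm{tr}(\Pi\AT{\ii}{\II}.\K) = \Nat \to \mathrm{tr}(\K)$, where $\mathtt{o}$ is a fixed base type. Each non-recursor type constructor becomes a first-order constant symbol of arity $\mathtt{o}\to\cdots\to\mathtt{o}\to\mathtt{o}$ with index slots translated to $\Nat$-typed arguments; the operator $\mathbf{R}$ and type-level application translate directly to the recursor and application of System $\mathcal{T}$. A routine induction on the kinding derivation, using the substitution Lemma~\ref{lem:substitution} and the basic properties of Lemma~\ref{lem:basic}, shows that $\Gamma \vdash G \RHD \K$ implies $\mathrm{tr}(G)$ has type $\mathrm{tr}(\K)$ in System $\mathcal{T}$ under a suitable translation of $\Gamma$. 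Moreover $\mathrm{tr}$ is a step-for-step bisimulation: a [ZeroR] or [SuccR] step in $G$ corresponds to exactly one recursor redex in $\mathrm{tr}(G)$, and conversely the only redexes occurring in the image of $\mathrm{tr}$ are recursor redexes, because all other constructors are inert constants.

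Given this bisimulation, strong normalisation and confluence transfer from System $\mathcal{T}$ to $\redsym$: any infinite $\redsym$-chain on $G$ would yield an infinite $\beta\iota$-chain on $\mathrm{tr}(G)$, contradicting the strong normalisation of System $\mathcal{T}$~\cite{GirardJY:protyp}; and confluence follows similarly, since reduction positions and residuals under $\mathrm{tr}$ correspond exactly. Existence and uniqueness of the weak head normal form $\WHNF{G}$ then follow in the standard way: SN guarantees termination of any head-reduction strategy, and confluence ensures the resulting head normal form is independent of the reduction choices. The same reasoning applies verbatim to end-point types $T$.

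The main obstacle I anticipate is the precise interaction of $\mathbf{R}$ with applications and substitutions under the dependent-kind regime---in particular, verifying that the kinding side conditions of rule $\trule{KRcr}$ (with its shifted index range $\II^{-}$ and the kind family $\K\sub{\ii+1}{\jj}$) translate to a well-typed System $\mathcal{T}$ term, and that the nested substitution $\G'\sub{\n}{\ii}\sub{\mathbf{R}\cdots\APP \n}{\xx}$ on the right-hand side of [SuccR] commutes with $\mathrm{tr}$. These points are handled by a careful inductive bookkeeping of the index dependency, but this is the delicate part of the argument.
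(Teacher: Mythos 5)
Your proposal is sound, but it takes a heavier route than the paper on the confluence half. For strong normalisation you and the paper rely on the same source -- System $\mathcal{T}$ -- except that the paper simply invokes it (the rules [ZeroR]/[SuccR] \emph{are} the recursor rules), whereas you make the appeal precise via an explicit kind-erasing encoding $\mathrm{tr}(\cdot)$ and a step-for-step simulation; that is more work but also more self-contained, and it is exactly the bookkeeping you flag (the $\II^-$ shift in $\trule{KRcr}$, commutation of $\mathrm{tr}$ with the double substitution in [SuccR]) that the paper never spells out. For confluence the paper's argument is different and much more elementary: it observes that the relation of Figure~\ref{fig:globalreduction} (together with first-order index evaluation) is \emph{deterministic} -- it is a head reduction with no congruence rules -- hence locally confluent, and concludes by Newman's Lemma; uniqueness and existence of $\WHNF{G}$ then fall out directly because $\redsym$ coincides with head reduction. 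Your transfer of confluence through the bisimulation with System $\mathcal{T}$ buys robustness (it would still work if $\redsym$ were closed under arbitrary contexts), but note that your claim that ``reduction positions and residuals under $\mathrm{tr}$ correspond exactly'' implicitly assumes such contextual closure; for the relation as actually defined the correspondence is only head-redex to head-redex, and determinism already gives confluence for free, so your extra machinery is correct but not needed. In short: same idea for SN (you just materialise the encoding), a genuinely different and more general argument for confluence, where the paper exploits determinism of head reduction instead.
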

\begin{proof}
By strong normalisation of System $\mathcal{T}$ \cite{GirardJY:protyp}. 
For the confluence, we first note that the reduction relation on 
global types defined in Figure~\ref{fig:globalreduction} 
and on expressions with the first-order operators in the types 
is {\em deterministic}, i.e. if $\G\redsym \G_i$  
by rules in Figure~\ref{fig:globalreduction}, then $\G_1=\G_2$. 
Hence it is locally confluent, i.e. if  $\G\redsym \G_i$ ($i=1,2$) then 
$\G_i\redsym^\ast \G'$. Then we achieve the result
by Newman's Lemma. The second clause is a direct consequence 
from the fact that $\redsym$ coincides with the head reduction.
\end{proof}


\begin{figure}
\centering
\begin{tabular}{l@{\ }l@{\ }l@{\quad}l}
{\bf Judgements} $\judgementsize{\cdot}$&  
\ 
$\judgementsize{\Gamma \proves G_1 \gequivwf G_2}=
\judgementsize{\Gamma \proves G_1 \equivwf G_2}+1$
\\[1mm]
& 
$\judgementsize{\Gamma \proves G_1 \equivwf G_2}=
\omega\cdot(\inductionreductionsize{G_1}+\inductionreductionsize{G_2})+
\inductiontermsize{G_1}+\inductiontermsize{G_2}+1
$
\\[1mm]
{\bf Types $|\cdot|$} \\[2mm]
Value \quad& $\termsize{\Nat}=1$, 
             $\termsize{\mar\G}= \termsize{G}+1$\\[1mm]
Global \quad & $\termsize{\GS{\p}{\p'}{\U}.\G}=
   2+\termsize{\U}+\termsize{\G}$\\[1mm]
& $\termsize{\GB{\p}{\p'}}=2+\Sigma_{k\in K} (1+ \termsize{G_k})$\\[1mm]
& $\termsize{\GM{\xx}{\G}}=\termsize{\G}+2$ \quad \quad
  $\termsize{\xx}=\termsize{\n}=\termsize{\End}=1$\\[1mm]
& 
$\termsize{\G\APP \tii}=\termsize{G}+2$ \quad ($\fv(\tii)=\emptyset$) 
\quad\quad 
$\termsize{\G\APP \tii}=\inductiontermsize{G}+2$ \quad 
($\fv(\tii)\not=\emptyset$) 
\\[1mm]
&
$\termsize{\GR{\G}{\AT{\ii}{\II}}{\xx}{\G'}}=
   4+\inductiontermsize{G}+\inductiontermsize{G'}$\\[1mm]
Local\quad &
  $\termsize{\Lout{\p}{\U}{\T}}=3+\termsize{\U}+\termsize{\T}$, 
  \quad 
  $\termsize{\Lin{\p}{\U}{\T}}=3+\termsize{\U}+\termsize{\T}$\\[1mm] 
  & $\termsize{\Lsel{\p}{\T_k}}= \termsize{\Lbranch{\p}{\T_k}}
     =2+\Sigma_{k\in K}(1+\termsize{T_k})$\\[1mm] 
   & $\termsize{\LM{\xx}{\T}} =\termsize{\T}+2$, 
\quad $\termsize{\xx}=\termsize{\n}=\termsize{\End}=1$\\[1mm]
& 
$\termsize{T\APP \tii}=\termsize{T}+2$ \quad
($\fv(\tii)=\emptyset$) 
\quad 
$\termsize{T\APP \tii}=\inductiontermsize{T}+2$ \quad
($\fv(\tii)\not=\emptyset$) 
\\[1mm] & 
$\termsize{\GR{\T}{\AT{\ii}{\II}}{\xx}{\T'}}=
   4+\inductiontermsize{T}+\inductiontermsize{T'}$\\[1mm]
Principals & 
$\termsize{\Pi\AT{\ii}{\II}.{U_p}}= 2 + \termsize{U_p}$\\[1mm]
Processes  & 
$\termsize{\emptyset}=0$, 
$\termsize{\Delta,\AT{c}{\T}}=\termsize{\Delta}+\termsize{\T}+1$,
$\termsize{\Pi\AT{\ii}{\II}.\tau}= 2 + \termsize{\tau}$ 
\\[1mm]
{\bf Types $\vert\vert\cdot\vert\vert$}\\[2mm]
Global\quad &
  $\inductiontermsize{\GR{\G}{\AT{\ii}{\II}}{\xx}{\G'}}=
   \Sigma_{\n \in I}\termsize{{\GR{\G}{\AT{\ii}{\II}}{\xx}{\G'}} \n}$
\ ($I$ finite)
\\[1mm]
Local\quad &
$\inductiontermsize{\GR{\T}{\AT{\ii}{\II}}{\xx}{\T'}}=
   \Sigma_{\n \in I}\termsize{{\GR{\G}{\AT{\ii}{\II}}{\xx}{\G'}} \n}$
\ ($I$ finite)
\\[1mm]
&  Others are 
$\inductiontermsize{G}=\termsize{G}$. \\[2mm]
{\bf Types $\reductionsize{\cdot}$}\\[2mm]
Global 
& 
$\reductionsize{\GS{\p}{\p'}{\U}.\G}=\inductionreductionsize{G}$\\
& 
$\reductionsize{\GB{\p}{\p'}}=\Sigma_{i\in I}\inductionreductionsize{G_i}$\\
&
$\reductionsize{\GM{\xx}{\G}}=\inductionreductionsize{G}$\\
&
$\reductionsize{\G\APP \tii}=m+\inductionreductionsize{G}$ \\
& 
  $\reductionsize{\GR{\G}{\AT{\ii}{\II}}{\xx}{\G'}}=
   \inductionreductionsize{G}+\inductionreductionsize{G'}$\\
& $\reductionsize{\xx}=\reductionsize{\End}=0$ 
\\[1mm]
Local 
& 
$\reductionsize{\T\APP \tii}=n+\inductionreductionsize{T}$\\
&  $\reductionsize{\GR{\T}{\AT{\ii}{\II}}{\xx}{\T'}}=
   \inductionreductionsize{T}+\inductionreductionsize{T'}$\\[1mm]
& Others are similar to $\reductionsize{G}$. \\[1mm]
{\bf Types $\inductionreductionsize{\cdot}$}\\[2mm]
Global & $\inductionreductionsize{\GR{\G}{\AT{\ii}{\II}}{\xx}{\G'}}=
  \Sigma_{\n \in I} \reductionsize{\GR{\G}{\AT{\ii}{\II}}{\xx}{\G'} \n}$\\[1mm]
Local & $\inductionreductionsize{\GR{\T}{\AT{\ii}{\II}}{\xx}{\T'}}=
  \Sigma_{\n \in I} \reductionsize{\GR{\T}{\AT{\ii}{\II}}{\xx}{\T'} \n}$\\[1mm]
&  Others are 
$\inductionreductionsize{G}=\reductionsize{G}$ and homomorphic. \\[2mm]
\end{tabular}\\[2mm]
$m$ and $n$ denotes the upper bound on the length of any
$\longrightarrow$ from $\G\APP \tii$ and $\T\APP \tii$, respectively. 

\caption{Size of types and judgements and the upper bound of 
reductions with unfolding} \label{fig:size}
\end{figure}

Proving the termination of type equality checking requires 
a detailed analysis 
since the premises of the mathematical induction 
rules compare the two types whose syntactic sizes are larger 
than those of their conclusions. 
The size of the judgements are defined 
in Figure \ref{fig:size}, 
using the following functions from \cite[\S~2.4.3]{DependentBook}, 
taking rule \trule{WfRecF} in Figure~\ref{fig:type-wf} into account. 
\begin{enumerate}[(1)]
\item $\termsize{G}$ is the size of the structure of $G$ 
where we associate $\omega^2$-valued weight to each judgement 
to represent a possible reduction to a weak head normal form. 

\item $\inductiontermsize{G}$ is the size of the structure of $G$ 
where unfolding of recursors with finite index sets is considered,
taking \trule{WfRecF} in Figure~\ref{fig:type-wf} into account. 

\item $\reductionsize{\G}$ 
denotes an upper bound on the length of any 
$\redsym$ reduction sequence (see Figure~\ref{fig:globalreduction})  
starting from $G$ and its subterms.

\item $\inductionreductionsize{\G}$ 
denotes an upper bound on the length of any 
$\redsym$ reduction sequence (see Figure~\ref{fig:globalreduction})  
starting from $G$ and its subterms.  
It unfolds recursors with finite index sets. 
\end{enumerate}
The definition of the size of judgement 
$\judgementsize{\Gamma \proves G_1 \equivwf G_2}$ 
follows \cite[\S~2.4.3]{DependentBook}. 
We use $\inductionreductionsize{\G}$ and $\inductiontermsize{G}$ 
because of \trule{WfRecF} in Figure~\ref{fig:type-wf}. 
Note that $\termsize{G}$ corresponds to $\reductionsize{\G}$, 
 while $\inductiontermsize{G}$ corresponds to
$\inductionreductionsize{\G}$.  In $\inductionreductionsize{\G}$, 
we incorporate the number of reductions of unfolded 
recursors. Because the reduction of expressions 
strongly normalises, we choose the size of $e$ to be $1$ and 
the length of reductions of (closed) $e$ is assumed to be $0$.

The termination of type equality checking then is proved by the following main
lemma.  

\begin{lem}[Size of equality judgements]\label{lem:size}
The weight of any premise of a rule is always 
less than the weight of the conclusion. 
\end{lem}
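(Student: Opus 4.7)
The plan is to verify the inequality rule by rule, by inspecting each of the nine equality rules in Figure~\ref{fig:type-wf} (\trule{WfBase}, \trule{WfIO}, \trule{WfBra}, \trule{WfPRec}, \trule{WfRVar}, \trule{WfEnd}, \trule{WfRec}, \trule{WfRecF}, \trule{WfApp}) and comparing the lexicographic weight
\[
\judgementsize{\Gamma\proves G_1 \equivwf G_2} \;=\; \omega\cdot(\inductionreductionsize{G_1}+\inductionreductionsize{G_2}) + \inductiontermsize{G_1}+\inductiontermsize{G_2}+1
\]
of each premise with that of the conclusion. Since the $\omega$-component dominates, in the generic case it suffices to show that $\inductionreductionsize{\cdot}$ is non-increasing (with an additional decrease in $\inductiontermsize{\cdot}$) when passing to a premise; when the $\omega$-component strictly decreases, any increase in the $\inductiontermsize{\cdot}$ part is absorbed.

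For the structural rules \trule{WfIO}, \trule{WfBra}, \trule{WfPRec}, \trule{WfRec}, the premises are equivalences on proper syntactic subterms, so by definition of $\inductiontermsize{\cdot}$ and $\inductionreductionsize{\cdot}$ both components are bounded by those of the conclusion, and at least $\inductiontermsize{\cdot}$ drops by the constant added at the outer constructor (e.g. $+2$, $+3$, or $\Sigma_{k\in K}(1+\cdot)$); the base cases \trule{WfRVar} and \trule{WfEnd} are immediate since they carry no induction premise except $\Gamma\vdash\Env$. For \trule{WfBase}, the conclusion is $G_1\gequivwf G_2$ and the premise is $\WHNF{G_1}\equivwf\WHNF{G_2}$, so the extra $+1$ built into the definition of $\judgementsize{\cdot\gequivwf\cdot}$ gives the required strict decrease while the reduction size is preserved (head reduction only shortens reduction sequences, by Proposition~\ref{prop:SNconfluent}).

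The delicate cases are \trule{WfApp} and \trule{WfRecF}. For \trule{WfApp} with conclusion $G_1\,\tii_1\equivwf G_2\,\tii_2$, note that by definition $\reductionsize{G\APP\tii}=m+\inductionreductionsize{G}$ where $m$ is an upper bound on reductions from $G\APP\tii$; thus the $\omega$-coefficient of the conclusion strictly exceeds $\inductionreductionsize{G_1}+\inductionreductionsize{G_2}$ of the premise by at least $m_1+m_2\geq 0$. When both $\tii_i$ are closed (so the head reduction of the application is performed via \trule{WfBase} via the staging), the drop is strict because \trule{WfApp} is staged with \trule{WfBase}: the $+1$ coming from $\gequivwf$ gives the strict decrease. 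When the $\tii_i$ contain free index variables, the $\inductiontermsize{\cdot}$ component still drops by the constant $+2$ inherited from $\termsize{G\APP\tii}$. For \trule{WfRecF} with conclusion $\GR{\G_1}{\AT{\ii}{\II}}{\xx}{\G_1'}\equivwf\GR{\G_2}{\AT{\ii}{\II}}{\xx}{\G_2'}$ under $\Gamma\models \II=[0..\m]$, the premise $\GR{\G_j}{\AT{\ii}{\II}}{\xx}{\G_j'}\,\n \equivwf \GR{\G_j}{\AT{\ii}{\II}}{\xx}{\G_j'}\,\n$ for some $1\leq \n\leq \m$ has, by definition, $\inductionreductionsize{\cdot}=\Sigma_{\n\in I}\reductionsize{\GR{\G_j}{\AT{\ii}{\II}}{\xx}{\G_j'}\,\n}$ on the conclusion side, so the premise (which only measures a single $\n$) has a strictly smaller $\omega$-coefficient; the auxiliary equality $G_1\equiv G_2$ in the premise is handled by the same comparison on $G_1, G_2$, which are subterms.

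The main obstacle will be reconciling the definitions of $\termsize{G\APP\tii}$ and $\reductionsize{G\APP\tii}$ in the two subcases ($\fv(\tii)=\emptyset$ versus $\fv(\tii)\neq\emptyset$) so that \trule{WfApp} strictly decreases weight in both regimes, and verifying that the induction-weighted size $\inductionreductionsize{\cdot}$ of a recursor is correctly summed over all finite $\n\in I$ so that \trule{WfRecF} unambiguously produces premises of smaller $\omega$-component. Once the bookkeeping is settled, each rule is verified by a direct calculation using the explicit formulas in Figure~\ref{fig:size}, completing the proof.
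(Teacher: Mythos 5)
Your overall strategy is the same as the paper's: the identical $\omega$-weighted measure of Figure~\ref{fig:size}, a rule-by-rule comparison of premise and conclusion weights, and the guiding principle that a strict drop in the reduction component (multiplied by $\omega$) absorbs any finite change in the term-size component. Your handling of the structural rules, of \trule{WfRecF} (single instantiation versus the sum over all $\n\in\II$), and of the auxiliary premise $G_1\gequivwf G_2$ matches the paper's computations; the paper additionally splits \trule{WfRec} into the finite and infinite $\II$ cases, a distinction your blanket ``subterm'' argument glosses over but which can be recovered.

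There is, however, a genuine flaw in your \trule{WfApp} case. Both the premise and the conclusion of \trule{WfApp} are $\equivwf$-judgements, so no ``$+1$ from $\gequivwf$'' is available in that rule; and precisely because \trule{WfApp} is staged after \trule{WfBase}, the terms $G_i\,\tii_i$ it is applied to satisfy $G_i\,\tii_i\not\redsym$, hence $\inductionreductionsize{G_i\,\tii_i}=\inductionreductionsize{G_i}$ and the $\omega$-components of premise and conclusion are \emph{equal} --- your claim that the conclusion's $\omega$-coefficient ``strictly exceeds'' the premise's ``by at least $m_1+m_2\geq 0$'' is not a strict excess at all when $m_i=0$, which is exactly the situation here. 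The strict decrease must come entirely from the term-size component: the staging rules out a recursor head applied to a closed index (there \trule{WfBase} would fire instead), so in every case in which \trule{WfApp} is actually used one has $\inductiontermsize{G_i\,\tii_i}=\inductiontermsize{G_i}+2$, and this $+2$ per side is what the paper's proof relies on. Your free-variable subcase captures this, but your closed-index subcase appeals to a mechanism (head reduction performed via \trule{WfBase}, plus the $\gequivwf$ increment) that simply does not occur in this rule. A smaller slip of the same kind appears in your \trule{WfBase} paragraph: when $\WHNF{G_1}\neq G_1$ the reduction bound strictly \emph{decreases} rather than being ``preserved'', and it is this $\omega$-weighted drop --- not the $+1$ --- that absorbs a possible growth of the term component after unfolding (using $\inductiontermsize{G}<\omega$); the $+1$ only settles the case where both sides are already in weak head normal form.
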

\begin{proof}
Our proof is by induction on the length of reduction 
sequences and the size of terms. \\[1mm]
{\bf Case \trule{WfBase}.} The case $\WHNF{G_1}=G_1$ and 
$\WHNF{G_2}=G_2$ are obvious by definition. Hence we assume  
$\WHNF{G_1} \not = G_1$. Thus there exists at least one step reduction 
such that $G_1\redsym G_1'$. Hence by definition, 
$\inductionreductionsize{\WHNF{G_i}} < \inductionreductionsize{G_i}$ ($i=1,2$).
Similarly we have $\inductiontermsize{\WHNF{G_i}} \leq 
\inductiontermsize{G_i}$ by definition.  
Note that for any $G$, 
we have $\inductiontermsize{G}< \omega$. 
Hence we have 
\[
\begin{array}{lll}
  & 
\judgementsize{\Gamma \vdash \WHNF{G_1}\equivwf  \WHNF{G_2}}\\
= 
& \omega \cdot
\inductionreductionsize{\WHNF{G_1}} + \inductiontermsize{\WHNF{G_1}} + \omega \cdot
\inductionreductionsize{\WHNF{G_2}} + \inductiontermsize{\WHNF{G_2}} + 1\\ 
<  
& \omega \cdot \inductionreductionsize{G_1} + \inductiontermsize{G_1}+
  \omega \cdot \inductionreductionsize{G_2} + \inductiontermsize{G_2} + 2 \\
=& \judgementsize{\Gamma \vdash G_1\equiv  G_2}
\end{array}
\]
{\bf Case \trule{WfIO}.} Similar with \trule{WfBra} below.\\[1mm]
{\bf Case \trule{WfBra}.} 
\[
\begin{array}{ll}
& \Sigma_{k\in K}\judgementsize{\Gamma \vdash G_{1k}\equiv  G_{2k}}\\
= & 
\Sigma_{k\in K}(\judgementsize{\Gamma \vdash G_{1k}\equivwf G_{2k}}+1)\\
= & 
\Sigma_{k\in K}(
\omega\cdot(\inductionreductionsize{G_{1k}}+\inductionreductionsize{G_{2k}})+
\inductiontermsize{G_{1k}}+\inductiontermsize{G_{2k}}+2)\\
= & 
\Sigma_{k\in K}(
\omega\cdot(\reductionsize{G_{1k}}+\reductionsize{G_{2k}})+
\termsize{G_{1k}}+\termsize{G_{2k}}+2)\\
< & 
\Sigma_{k\in K}(
\omega\cdot(\reductionsize{G_{1k}}+\reductionsize{G_{2k}})+
\termsize{G_{1k}}+\termsize{G_{2k}}+2)+5\\
= & 
\judgementsize{
\Gamma \vdash \TO{\p}{\q}\colon \{l_k: \G_{1k}\}_{k\in K}\equivwf  
\TO{\p}{\q}\colon \{l_k: \G_{2k}\}_{k\in K}}
\end{array}
\] 
We note that $G_{1k}$ and $G_{2k}$ cannot 
be recursors 
since $\Gamma \proves G_{ik}\RHD \Type$ by the kinding rule, 
hence $\reductionsize{G_{ik}}=\inductionreductionsize{G_{ik}}$ and 
$\termsize{G_{ik}}=\inductiontermsize{G_{ik}}$ in 
the above third equation.  
\\[1mm]
{\bf Case \trule{WfPRec}.} Similar with \trule{WfBra} above. \\[1mm]
{\bf Cases \trule{WfRVar},\trule{WfEnd}.} By definition. \\[1mm]
{\bf Case \trule{WfRec}.} The case $I$ is infinite. 
{\small
\[
\begin{array}{ll}
& 
\judgementsize{\Gamma \vdash \G_1 \equiv \G_2} + 
\judgementsize{\Gamma,\AT{\ii}{\II} \vdash \G_1' \equiv \G_2'} \\
= & 
\judgementsize{\Gamma \proves G_1 \equivwf G_2}+1 + 
\judgementsize{\Gamma,\AT{\ii}{\II} \vdash \G_1' \equiv \G_2'}+1 \\
= & 
\omega\cdot(\inductionreductionsize{G_1}+\inductionreductionsize{G_2})+
\inductiontermsize{G_1}+\inductiontermsize{G_2}+2
 + 
\omega\cdot(\inductionreductionsize{G_1'}+\inductionreductionsize{G_2'})+
\inductiontermsize{G_1'}+\inductiontermsize{G_2'}+2\\
< & 
\omega\cdot(\inductionreductionsize{G_1}+\inductionreductionsize{G_2})+
\inductiontermsize{G_1}+\inductiontermsize{G_2}+4
 + 
\omega\cdot(\inductionreductionsize{G_1'}+\inductionreductionsize{G_2'})+
\inductiontermsize{G_1'}+\inductiontermsize{G_2'}+4 + 1\\
= & 
\judgementsize{
\Gamma \vdash
\GR{\G_1}{\AT{\ii}{\II}}{\xx}{\G_1'}
\equivwf
\GR{\G_2}{\AT{\ii}{\II}}{\xx}{\G_2'}
}
\end{array}
\] 
}
The case $I$ is finite. 
{\small
\[
\begin{array}{ll}
& 
\judgementsize{\Gamma \vdash \G_1 \equiv \G_2} + 
\judgementsize{\Gamma,\AT{\ii}{\II} \vdash \G_1' \equiv \G_2'} \\
= & 
\omega\cdot(\inductionreductionsize{G_1}+\inductionreductionsize{G_2})+
\inductiontermsize{G_1}+\inductiontermsize{G_2}+2
 + 
\omega\cdot(\inductionreductionsize{G_1'}+\inductionreductionsize{G_2'})+
\inductiontermsize{G_1'}+\inductiontermsize{G_2'}+2\\
< & 
\Sigma_{n\in I}
(
\omega\cdot(m_{1n} + \inductionreductionsize{G_{1n}''} + 
m_{2n} + \inductionreductionsize{G_{2n}''}))\\
& +  \Sigma_{n\in I}
(
\omega \cdot 
(\inductionreductionsize{G_1}+\inductionreductionsize{G_2})+
\inductiontermsize{G_1}+\inductiontermsize{G_2}+4+2
)\\
&  +  
\Sigma_{n\in I}
(
\omega \cdot 
(\inductionreductionsize{G_1'}+\inductionreductionsize{G_2'})+
\inductiontermsize{G_1'}+\inductiontermsize{G_2'}+4+2
)+1\\
= & 
\judgementsize{
\Gamma \vdash
\GR{\G_1}{\AT{\ii}{\II}}{\xx}{\G_1'}
\equivwf
\GR{\G_2}{\AT{\ii}{\II}}{\xx}{\G_2'}
}
\end{array}
\] 
}
where we assume ${m_{i\n}}$ is the length of the reduction 
sequence from  $\GR{\G_i}{\AT{\ii}{\II}}{\xx}{\G_i'}\, \n$ 
(in Figure~\ref{fig:globalreduction}), and 
$\GR{\G_i}{\AT{\ii}{\II}}{\xx}{\G_i'}\, \n 
 \redsym^\ast G_{i\n}'' \not\redsym$ for all $\n\in I$. 
\\[1mm]
{\bf Case \trule{WfRecF}.} Assume $I=[0,\ldots,\m]$. 
\[
\begin{array}{ll}
& 
\judgementsize{\Gamma \vdash G_1 \equiv  G_2} 
+ 
\Sigma_{1\leq \n \leq m}
\judgementsize{
\Gamma \vdash
\GR{\G_1}{\AT{\ii}{\II}}{\xx}{\G_1'} \n 
\equiv
\GR{\G_2}{\AT{\ii}{\II}}{\xx}{\G_2'} \n 
}\\
= & 
\omega\cdot \inductionreductionsize{G_1} 
+ \inductiontermsize{G_1}
+ \omega\cdot \inductionreductionsize{G_2} 
+ \inductiontermsize{G_2} + 2\\
& +  
\Sigma_{1\leq \n \leq m}
(\judgementsize{
\Gamma \vdash
\GR{\G_1}{\AT{\ii}{\II}}{\xx}{\G_1'} \n 
\equivwf 
\GR{\G_2}{\AT{\ii}{\II}}{\xx}{\G_2'} \n 
} + 1)
\\
= & 
\omega\cdot((m_{11}-1) + \inductionreductionsize{G_{11}''} + 
(m_{21}-1) + \inductionreductionsize{G_{21}''}))
+ \inductiontermsize{G_1} 
+ \inductiontermsize{G_2} + 2\\
& + \Sigma_{1\leq \n \leq m}
(
\omega\cdot(m_{1n} + \inductionreductionsize{G_{1n}''} + 
m_{2n} + \inductionreductionsize{G_{2n}''}))\\
& 
+ \Sigma_{1\leq \n \leq m}
(
\omega \cdot 
(\inductionreductionsize{G_1}+\inductionreductionsize{G_2})+
\inductiontermsize{G_1}+\inductiontermsize{G_2}+4+2
)\\
& +  
\Sigma_{1\leq \n \leq m}
(
\omega \cdot 
(\inductionreductionsize{G_1'}+\inductionreductionsize{G_2'})+
\inductiontermsize{G_1'}+\inductiontermsize{G_2'}+4+2
) \\
< & 
\Sigma_{n\in I}
(
\omega\cdot(m_{1n} + \inductionreductionsize{G_{1n}''} + 
m_{2n} + \inductionreductionsize{G_{2n}''}))\\
& 
+  \Sigma_{n\in I}
(
\omega \cdot 
(\inductionreductionsize{G_1}+\inductionreductionsize{G_2})+
\inductiontermsize{G_1}+\inductiontermsize{G_2}+4+2
)\\
& +  
\Sigma_{n\in I}
(
\omega \cdot 
(\inductionreductionsize{G_1'}+\inductionreductionsize{G_2'})+
\inductiontermsize{G_1'}+\inductiontermsize{G_2'}+4+2
) +1 \\
= & 
\judgementsize{
\Gamma \vdash
\GR{\G_1}{\AT{\ii}{\II}}{\xx}{\G_1'}
\equivwf
\GR{\G_2}{\AT{\ii}{\II}}{\xx}{\G_2'}
}
\end{array}
\] 
where we assume ${m_{i\n}}$ is the length of the reduction 
sequence from  $\GR{\G_i}{\AT{\ii}{\II}}{\xx}{\G_i'}\, \n$ 
(in Figure~\ref{fig:globalreduction}), and 
$\GR{\G_i}{\AT{\ii}{\II}}{\xx}{\G_i'}\, \n 
 \redsym^\ast G_{i\n}'' \not\redsym$ for all $\n\in I$. 

\noindent{\bf Case \trule{WfApp}} First, 
since \trule{WfApp} has a lower priority than \trule{WfBase}, 
we have $G_i\tii_i\not\redsym$.  
Hence
$\inductionreductionsize{G_i \tii_i}=\inductionreductionsize{G_i}$. 
We also note that:
\begin{enumerate}[(1)]
\item 
If $G_i$ is not recursor, then 
$\inductiontermsize{G_i}=\termsize{G_i}$; and  

\item 
If $G_i$ is a recursor, then $\tii_i$ contains free variables (since 
\trule{WfApp} has a lower priority than \trule{WfBase}, if $\tii_i$ is
closed, it reduces to $\m$ for some $\m$), 
hence 
$\inductiontermsize{G_i\, \tii_i}=\inductiontermsize{G_i}+2$. 
\end{enumerate} 
There is no case such that $G_i$ is a recursor 
and $\tii_i$ is some natural number $\n$ since, if so, we 
can apply \trule{WfBase}. Thus, by (1,2), 
$\inductiontermsize{G_i\, \tii_i}=\inductiontermsize{G_i}+2$. 
Hence we have: 
\[
\begin{array}{lll}
  & \judgementsize{\Gamma \vdash \G_1\equivwf \G_2}\\
= & \omega \cdot 
(\inductionreductionsize{G_1}+\inductionreductionsize{G_2})+
\inductiontermsize{G_1}+\inductiontermsize{G_2}+1\\
< & 
\omega \cdot 
(\inductionreductionsize{G_1}+\inductionreductionsize{G_2})+
\inductiontermsize{G_1}+\inductiontermsize{G_2}+4+1\\
= & \judgementsize{\Gamma \vdash \G_1 \tii_1 \equivwf \G_2 \tii_2}
\end{array}
\] 
as required.
\end{proof}

\begin{PROP}[Termination for type equality checking]
\label{pro:equiv-termination}
Assuming that proving \\
the predicates 
$\Gamma\models \PRED$ 
appearing in type equality derivations is decidable,  
then type-equality checking of $\Gamma\proves G \equiv G'$
terminates. 
Similarly for other types. 
\end{PROP}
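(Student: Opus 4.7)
The plan is to argue by well-founded induction on the ordinal weight $\judgementsize{\Gamma \vdash G_1 \equiv G_2}$ defined in Figure~\ref{fig:size}. The strategy mirrors the standard termination arguments for algorithmic equality in dependent type theories (as in \cite[\S~2.4.3]{DependentBook}), but the extra ingredient needed here is the handling of the primitive recursion operator, which is precisely what the weight functions $\inductionreductionsize{\cdot}$ and $\inductiontermsize{\cdot}$ were designed to accommodate.

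First I would describe the algorithm derived from the rules of Figure~\ref{fig:type-wf}: given a goal $\Gamma \vdash G_1 \equiv G_2$, we apply \trule{WfBase}, which requires (i) reducing $G_1$ and $G_2$ to weak head normal form, and then (ii) picking the unique structural rule applicable to $\WHNF{G_1} \equivwf \WHNF{G_2}$ according to the priority ordering (so \trule{WfRec} is tried before \trule{WfRecF}, and \trule{WfBase} before \trule{WfApp}). Step (i) terminates and is confluent by Proposition~\ref{prop:SNconfluent}, so $\WHNF{G_i}$ is well defined. Step (ii) produces at most finitely many premises: for \trule{WfBra} one per branch $k \in K$, for \trule{WfRec} two, and for \trule{WfRecF} the number $\m$ of elements in the finite index set $[0..\m]$ together with the equality of the bodies. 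The side conditions of the form $\Gamma \models \PRED$ or $\Gamma \models \tii : I$ are decidable by the hypothesis of the proposition, and the kinding premises used as side conditions (e.g.\ in \trule{WfBra}) are decidable by a straightforward syntax-directed kind-checking algorithm.

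Next I would invoke Lemma~\ref{lem:size}, which states that each premise of each rule has strictly smaller weight $\judgementsize{\cdot}$ than the conclusion. Since $\judgementsize{\cdot}$ takes values in the ordinals below $\omega^2$, which are well-founded, an infinite derivation attempt would produce a strictly decreasing infinite sequence of ordinals, a contradiction. Hence the algorithm terminates on every input. The identical argument, using the local-type and process-type versions of Figure~\ref{fig:size}, yields termination of $\Gamma \vdash T \equiv T'$ and $\Gamma \vdash \tau \equiv \tau'$.

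The main obstacle I anticipate is to make explicit why the interaction between \trule{WfBase} and \trule{WfApp} does not cause the algorithm to loop: an application $G \tii$ may reduce further after substitution, potentially producing another application, and the naive size of the term does not decrease under this step. The key observation, which is exactly what $\inductionreductionsize{\cdot}$ encodes, is that the total length of all possible head reductions of subterms strictly decreases (by strong normalisation), and this is multiplied by $\omega$ in the weight, dominating any finite increase in the structural size $\inductiontermsize{\cdot}$. Once this bookkeeping is spelled out---which is precisely the content of Lemma~\ref{lem:size}---the termination result is immediate.
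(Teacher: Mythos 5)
Your proposal is correct and follows essentially the same route as the paper: the paper's own proof is exactly ``by Lemma~\ref{lem:size} together with termination of kinding and well-formed environment checking,'' i.e.\ the strict decrease of the ordinal weight $\judgementsize{\cdot}$ (valued below $\omega^2$) along every premise, with Proposition~\ref{prop:SNconfluent} guaranteeing that the weak head normal forms in \trule{WfBase} are computable. Your additional remarks on the \trule{WfBase}/\trule{WfApp} priority and the role of $\inductionreductionsize{\cdot}$ just make explicit what the paper delegates to the proof of Lemma~\ref{lem:size}.
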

\begin{proof} 
By Lemma \ref{lem:size} and termination 
of kinding and well-formed environment checking.
\end{proof}


We first formally define annotated processes which are processes 
with explicit type annotations for bound names and variables (see \S~\ref{par:annotated}). 
\begin{center}
\begin{tabular}{rcl}
 \PP  & ::= & \sr\uu{\p_0,..,\p_\n} {\AT{\y}{T}}\PP   
      \sep  \ssa\uu\p{\AT{\y}{T}}\PP   
      \sep  \Pin{\ccc}{\p}{\AT{\x}{T}}{\PP} 
      \sep  \nuc{\AT{a}{\ENCan{G}}}\PP\\[1mm]
      & \sep & $\mu \AT{X}{\tau}.\PP$ 
      \sep  $\RECSEQP{\PP}{\AT{\ii}{\II}}{\X}{\QQ}$ 
      \sep  $\X^\tau$ \\
\end{tabular}
\end{center}

\begin{thm}[Termination of type checking]
\label{thm:termination}
Assuming that proving the 
predicates $\Gamma\models \PRED$ appearing in kinding, 
equality, projection and typing derivations  
is decidable, 
then type-checking of annotated process $P$, i.e. 
$\Gamma\proves P\rhd \emptyset$ terminates. 
\end{thm}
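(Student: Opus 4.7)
The plan is to show termination by a structural induction on the annotated process $P$, with separate termination arguments for the auxiliary judgements that the typing rules invoke. The key observation is that, since $P$ carries type annotations on every binder, the typing rules of Figure~\ref{fig:process-typing1} become essentially syntax-directed: for each constructor of $P$ there is a unique rule whose conclusion matches it, with \tftrule{TEq} as the only exception. I would absorb \tftrule{TEq} into each syntax-directed rule by folding in a single type-equivalence check $\Gamma \vdash \tau \equiv \tau'$ at its conclusion, so that between any two syntax-directed steps there is at most one equivalence query. By Proposition~\ref{pro:equiv-termination} each such query terminates under the hypothesis that $\Gamma \models \PRED$ is decidable, hence equivalence can be treated as a terminating oracle.

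Next I would verify termination of the remaining auxiliary judgements invoked in typing, namely kinding $\Gamma \vdash \alpha \RHD \K$ (Figures~\ref{fig:kind_all} and~\ref{fig:kindsystembase}), well-formed environments (Figure~\ref{fig:wellformed_env}), and projection $\G \proj \p$ with the merge operator $\mergecup$ (Figure~\ref{fig:projection} and Definition~\ref{def:mergeability}). Kinding is syntax-directed on types, and in every rule the premises strictly decrease the structural size of the type being kinded, with the index-set side-conditions $\Gamma \models \tii : \II$ discharged by the assumed decidability. Projection recurses on the immediate subterms of $\G$, and $\mergecup$ recurses on strictly smaller end-point types; thus both terminate on well-formed inputs. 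The type-isomorphism $\WB$ used in \tftrule{TVar} is applied only at the point a recursive variable is introduced, so it cannot induce unbounded unfolding during a single derivation.

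With these components in hand, I would then run the main induction on a lexicographic measure $(|P|, n_{\mathrm{eq}})$, where $|P|$ is the structural size of the annotated process and $n_{\mathrm{eq}}$ bounds the number of consecutive uses of \tftrule{TEq}. Each syntax-directed rule decomposes $P$ into strictly smaller sub-processes and calls only the auxiliary judgements just shown to terminate. Rule \tftrule{TPRec} reduces to two typing sub-checks, one for $P$ and one for $Q$, both of strictly smaller process size, together with a kinding check on $\Pi \AT{\jj}{\II}.\Ty$; rule \tftrule{TApp} reduces to one typing sub-check and a decidable index-side-condition. Folding \tftrule{TEq} as described bounds $n_{\mathrm{eq}}$ by $1$ between syntax-directed steps, so the measure strictly decreases at each recursive call.

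The main obstacle I foresee is rule \tftrule{TPRec}, whose premises refer to the substituted types $\Ty\sub{0}{\jj}$ and $\Ty\sub{\ii+1}{\jj}$, which are not structurally smaller than $\Pi\AT{\jj}{\II}.\Ty$. Termination of the typing derivation however depends only on the size of the process (which shrinks in both premises) and on termination of the auxiliary judgements applied to the substituted types; the latter is already secured by termination of kinding and equivalence. This decoupling of the process-size measure from the type-size measure, combined with the decidability assumption on predicates and with Proposition~\ref{prop:SNconfluent} and Proposition~\ref{pro:equiv-termination}, is what ultimately makes the argument go through.
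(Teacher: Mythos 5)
Your proposal follows essentially the same route as the paper's proof: eliminate \tftrule{TEq} by folding a single equivalence check into the syntax-directed rules, discharge those checks by the termination of type-equality checking (Proposition~\ref{pro:equiv-termination} / Lemma~\ref{lem:size}), observe that kinding, environment well-formedness and projection terminate given decidability of $\Gamma\models\PRED$, and then induct on the annotated process, noting as the paper does that \tftrule{TPRec} and \tftrule{TApp} only shrink the process and invoke terminating auxiliary judgements. The one point where your argument is thinner than the paper's is the isomorphism check $\Ty\WB\Ty'$ in \tftrule{TVar}: saying it is invoked only when a recursive variable is introduced bounds the number of invocations but does not explain why a single check terminates; the paper closes this by appealing to the known terminating algorithm for recursive-type isomorphism/subtyping of \cite{GH05}, and your write-up should cite or reprove that fact rather than rely on the ``applied only once'' remark.
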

\begin{proof} 
First, it is straightforward to show that 
kinding checking, well-formed environment checking and 
projection are decidable 
as long as deciding the predicates
$\Gamma\models \PRED$ appearing in the rules 
is possible. 

Secondly, we note that by the standard argument from indexed dependent types 
\cite{DependentBook,DBLP:conf/popl/XiP99}, 
for the dependent $\lambda$-applications (\tftrule{TApp} in Figure~\ref{fig:process-typing1}), we do not require 
equality of terms (i.e.~we only need the equality 
of the indices by the semantic consequence judgements). 

Thirdly, by the result from \cite[Corollary 2, page 217]{GH05}, 
the type isomorphic checking $\tau \WB \tau'$ terminates so that 
the type isomorphic checking 
in $\tftrule{Tvar}$ in Figure~\ref{fig:process-typing1} 
(between $\tau$ in the environment and 
$\tau'$ of $X^{\tau'}$) always terminates. 

Forth, it is known that type checking for annotated 
terms with session types terminates with subtyping 
\cite[\S~5.2]{GH05} and multiparty session types \cite{CHY07}. 

Hence the rest of the proof consists in eliminating the type equality rule
$\tftrule{Teq}$ in order to make the rules syntax-directed.    
We include the type equality check into 
$\tftrule{TInit,TReq,TAcc}$ 
(between the global type and its projected session
type), the input rules $\tftrule{TIn,TRecep}$ (between the session type 
and the type annotating $x$), 
$\tftrule{TRec}$ 
(between the session type and the type annotating $X$),   
and 
$\tftrule{TRec}$ 
(between the session type and the type annotating $X$ in $\Gamma$).  
We show the four syntax-directed rules. 
The first rule is the initialisation. 
\[ 
\begin{prooftree}
{
\begin{array}{@{}c@{}}
\Gamma \vdash \uu:\mar{\G} \quad 
\Gamma \vdash \PP \rhd \D, \y:T
\quad \Gamma \vdash \G \proj{\p_0} \equiv T\\[1mm]
\Gamma \proves \p_i\rhd \Nat \quad 
\Gamma \models \pid(\G)=\{\p_0..\p_\n\}\\[1mm]
\end{array}
}
\justifies
{\Gamma \vdash \sr\uu{\p_0,..,\p_\n} \y\PP \rhd \D} 
\using\tfrule{TInit}
\end{prooftree}
\]
Then it is straightforward to check 
$\Gamma \vdash \uu:\mar{\G}$ terminates.  
Checking $\Gamma \vdash \PP \rhd \D, \y:T$ also terminates by inductive
hypothesis. 
Checking $\Gamma \vdash \G \proj{\p_0} \equiv T$ terminates 
since the projection terminates and 
checking $\alpha \equiv \beta$ (for any type $\alpha$ and $\beta$) 
terminates by Lemma \ref{lem:size}.
Checking $\Gamma \proves \p_i\rhd \Nat$ terminates since 
the kinding checking terminates. Finally 
checking $\Gamma \models \pid(\G)=\{\p_0..\p_\n\}$ terminates 
by assumption. 

The second rule is the session input. 
\[
\begin{prooftree}
{\Gamma \vdash \PP\rhd \D,\ccc:\T,\y:\T' \quad 
\Gamma\proves T_0\equiv \T'}
\justifies
{\Gamma \vdash \Pin{\ccc}{\p}{\AT{\y}{T_0}}{\PP}\rhd \D,\ccc:\Lin{\p}{\T'}{\T}}
\using\tfrule{TRecep}
\end{prooftree}
\]
Then checking $\Gamma \vdash \PP\rhd \D,\ccc:\T,\y:\T'$ terminates by 
inductive hypothesis and checking $\Gamma\proves T_0\equiv \T'$ 
terminates by Lemma \ref{lem:size}.

The third and forth rules are recursions. 
\[
\begin{prooftree}
{\Gamma, \X:\Ty \vdash P \rhd \Ty' \quad \Gamma\proves \Ty \equiv \Ty'}
\justifies
{\Gamma \vdash \mu{\AT{\X}{\Ty}}.P \rhd \Ty'} \using\tfrule{TRec}
\end{prooftree}
\begin{prooftree}
{\Gamma, \X:\Ty \vdash \Env \quad \Gamma\vdash \Ty \WB \Ty' \quad 
\Gamma\vdash \Ty' \equiv \Ty_0}
\justifies
{\Gamma, \X:\Ty \vdash \X^{\Ty'} \rhd \Ty_0} 
\using\tfrule{TVar}
\end{prooftree}
\]
In \tftrule{TRec}, we assume $P\not = X$ (such a term is meaningless). 
Then \tftrule{TRec} terminates by inductive hypothesis 
and Lemma \ref{lem:size}, while \tftrule{TVar} terminates 
by termination of isomorphic checking (as said above) 
and Lemma \ref{lem:size}. 

Other rules are similar, hence we conclude the proof. 
\end{proof}
To ensure the termination of $\Gamma\models \PRED$, possible solutions include
the restriction of predicates to linear equalities over natural numbers without
multiplication (or to other decidable arithmetic subsets) or the restriction of
indices to finite domains, cf.~\cite{DBLP:conf/popl/XiP99}.

\subsection{Type soundness and progress}
\noindent 

\noindent The following lemma states that mergeability is sound 
with respect to the branching subtyping 
$\subT$. 
By this, we can safely replace the third clause 
$\sqcup_{k\in K}  \G_k\proj{\qq}$ of the branching case from the projection definition by 
$\sqcap \{T\ | \ \forall k\in K.T \subT (G_k\proj{\qq}) \}$.  
This allows us to prove subject reduction by including subsumption 
as done in \cite{CHY07}.

\begin{lem}[Soundness of mergeability]
\label{lem:mergeability}
Suppose $G_1\proj{\p} \mergeop G_2\proj{\p}$ and 
$\Gamma \proves G_i$. Then there exists $G$ such that 
$G\proj{\p}=\sqcap \{T\ | \ T \subT G_i\proj{\p} \ (i=1,2)\}$
where $\sqcap$ denotes the maximum element with respect to $\subT$. 
\end{lem}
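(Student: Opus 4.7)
The plan is to proceed by structural induction on the global type $G_1$, using the congruence structure of $\mergeop$ to enforce a matching induction on $G_2$. First I would verify an auxiliary fact which gives the lemma its shape: under the width-based branching subtyping $\subT$ (a branching type with more labels is a subtype of one with fewer, for receiving; dually for selection), the binary operator $\mergecup$ computes the maximum lower bound $\sqcap$. Indeed, any common subtype of two branchings $\branchtypes$ and $\&\langle\p,\{l_j:T'_j\}_{j\in J}\rangle$ must retain \emph{all} labels in $K \cup J$ (removing any would violate subtyping on one side) and must have a subtype of both $T_i$ and $T'_i$ at every common label; the maximum such type is exactly the componentwise $\mergecup$, by coinduction on the subtyping derivations. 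At every non-branching constructor, congruence of $\mergeop$ forces $G_1\proj{\p}$ and $G_2\proj{\p}$ to coincide structurally, so the maximum common subtype is just the type itself.

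Given this, I would construct the witness $G$ directly from $G_1$ and $G_2$ by induction. At every position where $G_1\proj{\p}$ and $G_2\proj{\p}$ share the same outer constructor, I take the corresponding constructor for $G$ and recurse into the subterms via the induction hypothesis. The only interesting case is a branching $\GB{\q}{\q'}$ with $\p \notin \{\q,\q'\}$, which is precisely where $\mergeop$ permits additional labels: I set $G = \TO{\q}{\q'}\colon\{l_k: G_k\}_{k \in K \cup J}$, where for each common label $l_k$ (so $k \in K\cap J$) the subterm $G_k$ is the witness obtained inductively from $G_{1k}$ and $G_{2k}$, while branches exclusive to one side are kept unchanged. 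Unfolding the projection at this position gives $G\proj{\p} = \sqcup_{k \in K \cup J}(G_k\proj{\p})$, which by induction and the definition of $\mergecup$ equals $(G_1\proj{\p}) \mergecup (G_2\proj{\p})$, as required.

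The main obstacle will be the primitive recursion case $\GR{H}{\AT{\ii}{\II}}{\xx}{H'}$: because projection commutes with the recursor syntactically ($\GR{H}{\AT{\ii}{\II}}{\xx}{H'}\proj{\p} = \GR{H\proj{\p}}{\AT{\ii}{\II}}{\xx}{H'\proj{\p}}$), the mergeability of the projected recursors must propagate through every unfolded instance. I would handle this by a secondary induction on the index $\n \in \II$, using the reduction rules of Figure~\ref{fig:globalreduction} together with the extensional equivalence rule \trule{WfRecExt} of Figure~\ref{fig:meta-equivalence} to reduce the recursor case to the pointwise case already treated above. A minor but necessary check is that the constructed $G$ remains a well-kinded global type: at the branching where we add labels, the sender $\q$ and receiver $\q'$ obtain enlarged selection and branching types, which are still well-formed, and at all other positions well-kindedness is preserved by the induction hypothesis together with Lemma~\ref{lem:basic}.
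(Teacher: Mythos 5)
Your auxiliary fact is in the right spirit and is implicitly what the paper relies on: under the width subtyping of Figure~\ref{fig:subtyping}, the merge $\mergecup$ (with $\sqcap$ taken recursively at common labels) is the maximum common subtype, and the paper's witness indeed uses $T''_i=T_i\sqcap T'_i$ at common labels and the original continuations at exclusive ones. The genuine gap is in where you place, and how you handle, the key case. The mergeability rule only relaxes equality between two \emph{branching local types} $\&\langle\p',\{l_k:T_k\}_{k\in K}\rangle \mergeop \&\langle\p',\{l_j:T'_j\}_{j\in J}\rangle$ with $K\neq J$, and such types occur in $G_i\proj{\p}$ precisely because $\p$ is the \emph{receiver} of a branching, i.e.\ (in the representative case the paper treats) $G_i=\TO{\p'}{\p}\colon\{l_k:G_{ik}\}$; the witness is then $G=\TO{\p'}{\p}\colon\{l_i:G_{0i}\}_{i\in K\cup J}$ with inductive witnesses at common labels and the original branches at exclusive labels. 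Your proposal instead declares the interesting case to be a branching $\GB{\q}{\q'}$ with $\p\notin\{\q,\q'\}$. But in that case the labels of the $\TO{\q}{\q'}$ choice never appear in $\p$'s projection at all: the projection is the merge $\sqcup_k G_{ik}\proj{\p}$ of the branch projections, so identifying the global label sets $K,J$ with the label sets of the mergeable local branchings, and concluding $G\proj{\p}=\sqcup_{k\in K\cup J}G_k\proj{\p}=(G_1\proj{\p})\mergecup(G_2\proj{\p})$, conflates two unrelated sets of labels; moreover, for the unioned choice to be projectable you would need every branch of $G_1$ to be mergeable with every branch of $G_2$, which your hypothesis does not provide. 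Meanwhile the actually nontrivial case --- branchings received by $\p$ whose projected label sets differ --- is never resolved by your scheme, since ``same outer constructor, recurse'' does not say what global subterms to place under the labels exclusive to one side.

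Two secondary points. First, your claim that at non-branching constructors congruence of $\mergeop$ forces the projections to ``coincide structurally, so the maximum common subtype is just the type itself'' is too strong: mergeable types may differ arbitrarily deep below identical prefixes, so the meet must still be computed recursively (your construction does recurse, but the justification as stated is wrong). Second, mergeability of the projections does not force $G_1$ and $G_2$ to match constructor by constructor (e.g.\ $G_1$ may begin with a message not involving $\p$ that $G_2$ lacks, both projecting to the same local type), so an induction literally on the syntax of $G_1$ with a ``matching'' $G_2$ is not well-founded as described; the case analysis should be driven by the shape of the projected types, which is how the paper's argument is organised. Your treatment of the recursor via pointwise unfolding is a reasonable supplement, but it does not repair the missing branching-at-$\p$ case.
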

\begin{proof}
The only interesting case is when $G_1\proj{\p}$ and $G_2\proj{\p}$ take 
a form of the branching type. Suppose 
$G_1 = \TO{\p'}{\p}\colon\! \indexed{l}{{\G'}}{k}{K}$
and 
$G_2 = \TO{\p'}{\p}\colon\! \indexed{l}{{\G''}}{j}{J}$
with $G_1\proj{\p} \mergeop G_2\proj{\p}$. 
Let $\G'_{k}\proj{\p}=T_{k}$ and 
$\G''_{j}\proj{\p}=T_{j}'$. 
Then by the definition of $\mergeop$ in \S~\ref{subsec:endpoint}, 
we have $G_1\proj{\p}= \langle\p',\{l_k:T_{k}\}_{k\in K}\rangle$ 
and
$G_2\proj{\p}=\&\langle\p',\{l_j:T_{j}'\}_{j\in J}\rangle$ 
with 
$\forall i\in (K \cap J). T_i\mergeop T_i'$ and
$\forall k\in (K \setminus J),\forall j\in (J \setminus K).l_k \not = l_j$. 
By the assumption and inductive hypothesis on $T_k\mergeop T_j'$, 
we can set 
\[
T=\&\langle\p',\{l_k:T''_{i}\}_{i\in I}\rangle
\] 
such that $I=K \cup J$; and (1) if $i\in K \cap J$, then  
$T''_{i}=T_{i}\sqcap T'_{i}$; (2) if $i\in K, i\not\in J$, then  
$T''_{i}=T_{i}$; and (3)
if $i\in J, i\not\in K$, then  
$T''_{i}=T'_{i}$. Set 
$G_{0i}\proj{\p}=T''_i$. Then we can obtain 
\[ 
G=\TO{\p'}{\p}\colon\! \indexed{l}{{\G_{0}}}{i}{I}
\]
which satisfies 
$G\proj{\p}=\sqcap \{T\ | \ T \subT G_i\proj{\p} \ (i=1,2)\}$, 
as desired. 
\end{proof}

\noindent
As session environments record channel states, they evolve when communications proceed.
This can be formalised by introducing
a notion of session environments reduction. These rules are formalised below
modulo $\equiv$. 
{\small
\begin{iteMize}{$\bullet$}
\item
$\sered{\set{\si{\s}{\pv}:\Lout{\qv}{\U}{\T},\si{\s}{\qv}:\Lin{\pv}{\U}{\T'}}}{\!\set{\si\s\pv: \T, \si\s{\qv}:\T'}}$
\item
$\sered{\set{\si{\s}{\pv}:\Lout{\pv}{\U}{\Lin{\pv}{\U}{\T'}}}}
{\!\set{\si\s\pv: \T'}}$
\item
$\sered{\set{\si\s\pv:\Lsel{\qv}{\T_k}}}{\set{\si\s\pv:\LselSingle{\qv}{l_j};T_j}}$
\item
$\sered{\set{\si\s\pv:\LselSingle{\qv}{l_j};T,
\si\s{\qv}:\branchtype}}{\set{\si\s\pv:\T, \si\s{\qv}:\T_j}}$
\item
$\sered{\D\cup\D''}{\D'\cup\D''}$ if $\sered{\D}{\D'}$. 
\end{iteMize}
}
\noindent
The first rule corresponds to the reception of
a value or channel by the participant $\qv$; 
the second rule formalises 
the reception of a value or channel sent by itself $\pv$;
the third rule 
treats the case of the choice of label $l_j$ while the forth rule
propagate these choices to the receiver (participant $\qv$). 

For the subject reduction theorem, we need to define 
{\em the coherence of the session 
environment \D}, which means that each end-point type is dual with
other end-point types.  

\begin{DEFINITION}\label{cd}
A session environment \D\ is {\em coherent for the session \s}
(notation \coe{\D}{\s}) if $ \sii:\T\in\Delta$ and
$\pro{\T}\q\not=\End$ imply $ \siq:\T'\in\Delta$ and
$\dual{\pro{\T}\q}{\pro{\T'}\p}$.  A session environment \D\ is
{\em coherent} if it is coherent for all sessions 
which occur in it.
\end{DEFINITION}
The definitions for $\pro{\T}\q$ and $\dual{}{}$ are defined in
Appendix~\ref{app:runtime}. Intuitively, $\pro{\T}\q$ is a projection
of $T$ onto $\q$ which is similarly defined as $\pro{\G}\q$; and 
$\dual{\pro{\T}\q}{\pro{\T'}\p}$ means actions in $\T$ onto 
$\q$ and actions in $\T'$ onto $\p$ are dual (i.e. input matches 
output, and branching matches with selections, and vice versa). 
Note that two projections of a same global type are always dual: 
let $\G$ a global type and $\p,\q \in \G$
with $\p \neq \q$. Then
$\dual{\pro{(\pro{\G}{\p})}{\q}}{\pro{(\pro{\G}{\q})}{\p}}$, 
i.e. session environments corresponding to global
type are always coherent.

Using the above notion we can state type preservation under reductions
as follows:
\begin{thm}[Subject Congruence and
  Reduction]\label{thm:sr}\mbox{\;}

\begin{enumerate}[\em(1)]
\item
If $\derqq\Ga{\Sigma}\PP\D$ and $\PP\equiv{\PP'}$, then
\derqq\Ga{\Sigma}{\PP'}{\D}. 
\item 
If \derqq\Ga{\Sigma}\PP\Ty\ and \redM\PP{\PP'} with 
\Ty\ coherent, then
\derqq\Ga{\Sigma}{\PP'}{\Ty'}\ for some $\Ty'$ such that
$\seredstar\Ty\Ty'$ with $\Ty'$ coherent.
\end{enumerate}
\end{thm}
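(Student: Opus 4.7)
The plan is to prove the two clauses separately, with (1) handled by induction on the derivation of $\PP\equiv\PP'$ and (2) by induction on the derivation of $\redM{\PP}{\PP'}$, using (1) to absorb uses of the structural congruence rule \rrule{Str}. The essential tools will be the Substitution Lemma (Lemma~\ref{lem:substitution}), Agreement (Lemma~\ref{lem:basic}(\ref{lem:agree})), Soundness of Mergeability (Lemma~\ref{lem:mergeability}), and the equality rules of Figure~\ref{fig:type-wf} (which we use implicitly via \tftrule{TEq}).

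For clause (1), I would enumerate the cases of $\equiv$ from Figure~\ref{tab:structcong}. The monoid laws for $\mathbf{|}$ and $\mathbf{0}$, as well as the $\nu$-scope laws, are handled by the fact that the session environment splits in \tftrule{TPar} and that name restrictions commute with typing; these require only small permutations in the derivation, aided by Lemma~\ref{lem:basic}(\ref{lem:per}). The queue rearrangement $\trival{z}{\pv}{\qv}\cdot\trival{z'}{\pv'}{\qv'}\equiv\trival{z'}{\pv'}{\qv'}\cdot\trival{z}{\pv}{\qv}$ when the senders or receivers differ is preserved by the typing of queues (rules in the Appendix), since non-interfering messages type to independent fragments. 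The recursion unfolding $\mu X.\PP\equiv\PP\sub{\mu X.\PP}{X}$ follows directly from Lemma~\ref{lem:substitution}(\ref{subs02}) applied to the premise of \tftrule{TRec}, together with \tftrule{TVar} and the isomorphism $\Ty\WB\Ty$.

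For clause (2), the induction proceeds by cases on the last reduction rule. The standard session cases [Send], [Recv], [Label], [Branch] follow the usual pattern: inverting \tftrule{TOut}/\tftrule{TIn} (resp.\ \tftrule{TSel}/\tftrule{TBra}) yields session types $\Lout{\qv}{\U}{\T}$ and $\Lin{\pv}{\U}{\T'}$ on the two end-points, and after the reduction, the corresponding $\Se$-reduction $\sered{\cdot}{\cdot}$ rewrites the session environment to $\set{\si{\s}{\pv}{:}\T, \si{\s}{\qv}{:}\T'}$. Coherence is preserved because dual types step to dual continuations. For [Init]/[Join], the key observation is that since $\Ga\vdash a{:}\mar\G$ by \tfrule{TInit}/\tfrule{TAcc}, the spawned participants are each typed by $\G\proj{\pv_k}$, and pairwise duality of these projections ensures that the resulting session environment is coherent (by the remark following Definition~\ref{cd}). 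The rules [ZeroR] and [SuccR] on processes are handled by using the analogous reductions on types inside \tftrule{TEq}: the derivation of $\Gamma\vdash\RECSEQP{\PP}{\ii}{\X}{\QQ}\APP\n \rhd \Ty\sub{\n}{\jj}$ obtained via \tftrule{TPRec} and \tftrule{TApp} provides, by Lemma~\ref{lem:substitution}(\ref{subs01},\ref{subs02}) and induction on $\n$, a typing of the unfolded reduct up to $\equiv$. The congruence rules [App], [Scop], [Par], [Context] follow by induction, and [Str] by clause (1).

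The main obstacle will be the recursor cases. The process reduction $\RECSEQP{\PP}{\ii}{\X}{\QQ}\APP(\n{+}1) \redsym \QQ\sub{\n}{\ii}\sub{\RECSEQP{\PP}{\ii}{\X}{\QQ}\APP \n}{\X}$ forces us to rely on the type equivalence $\Gamma\vdash \GR{\G}{\ii}{\xx}{\G'}\APP(\n{+}1) \equiv \G'\sub{\n}{\ii}\sub{\GR{\G}{\ii}{\xx}{\G'}\APP\n}{\xx}$, which must be derivable from the rules of Figure~\ref{fig:type-wf}; this works because both sides share a common weak head normal form via \trule{WfBase} and Proposition~\ref{prop:SNconfluent}. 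A second delicate point is that branching types projected via the $\sqcup$ merge must, after reduction, still be typable: here we silently switch to the subsumption-based formulation (branching meet) justified by Lemma~\ref{lem:mergeability}, exactly as in~\cite{CHY07}, so that the subject reduction for selection/branching goes through with $\seredstar{\Ty}{\Ty'}$ refining the chosen branch. Finally, preserving coherence across these steps is verified by a direct inspection of each $\sered{}{}$ rule, using that the only types whose dual partners are simultaneously modified are those at the two end-points involved in the communication.
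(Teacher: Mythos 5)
Your proposal follows essentially the same route as the paper's proof: induction on the reduction sequence with case analysis on the first step, the recursor case [SuccR] discharged by the index- and process-variable substitution lemmas (Lemma~\ref{lem:substitution}(\ref{subs01}),(\ref{subs02})) together with an inner mathematical induction on $\n$, the [Init]/[Join] cases via the projections of the global type and Lemma~\ref{lem:mergeability} for coherence, the standard [Send]/[Recv] cases by inversion and the queue typing rules, and clause (1) handled as in \cite{CHY07}. No substantive divergence or gap to report.
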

\begin{proof}
  We only list the crucial cases of the proof of subject reduction: the recursor
  (where mathematical induction is required), the initialisation, the input and
  the output.  The proof of subject congruence is essentially as the same as
  that in \cite{CHY07,BettiniCDLDY08LONG}.  Our proof for (2) works by induction
  on the length of the derivation \redM\PP{\PP'}. The base case is trivial. We
  then proceed by a case analysis on the first reduction step
  \red\PP{\PP''}\redM{}{\PP'}. We omit the hat from principal values and
  $\Sigma$ for readability.
\\[1mm]
{\bf Case [ZeroR].}  Trivial. \\[1mm]
{\bf Case [SuccR].}  Suppose 
that we have
$\derqq\Ga{}{\GR{P}{\ii}{X}{Q}\APP \n+1}{\tau}$
and 
$\red{\GR{P}{\ii}{X}{Q}\APP \n+1}
{Q\sub{\n}{\ii}\sub{\GR{P}{\ii}{X}{Q}\APP \n}{X}}$. 
Then there exists $\Ty'$ such that 
\begin{eqnarray}
& 
\label{eq:rec1}
\Gamma, \ii : \minus{\II},\X:\Ty'\sub{\ii}{\jj}  \vdash \QQ  \rhd
  \Ty'\sub{\ii+1}{\jj} \\
& 
\label{eq:rec2}
\derqq\Ga{}{P}{\Ty'\subst{0}{\ii}}\\ 
& 
\label{eq:rec3}
\Gamma\vdash \Pi \AT{\jj}{\II}.\Ty \RHD \Pi\AT{\jj}{\II}.\K
\end{eqnarray}
with 
$\Ty \equiv (\Pi\AT{\jj}{\II}.\Ty') \n+1 \equiv \Ty'\subst{\n+1}{\jj}$
and $\Ga{\models}{\n+1:\II}$. 
By Substitution Lemma (Lemma \ref{lem:substitution} (\ref{subs01})), 
noting $\Ga{\models}{\n: \minus{\II}}$,  
we have: 
$\Gamma, \X:\Ty'\sub{\ii}{\jj}\sub{\n}{\ii}  \vdash \QQ\sub{\n}{\ii}  \rhd
  \Ty'\sub{\ii+1}{\jj}\sub{\n}{\ii}$, 
which means that 
\begin{eqnarray}
\label{eq:subs}
\Gamma, \X:\Ty'\sub{\n}{\jj}  \vdash \QQ\sub{\n}{\ii}  \rhd
  \Ty'\sub{\n+1}{\jj} 
\end{eqnarray}
Then we use an induction on $\n$. \\[1mm]
{\em Base Case $\n=0$:} 
By applying Substitution Lemma (Lemma \ref{lem:substitution}
(\ref{subs02})) to  (\ref{eq:subs}) with (\ref{eq:rec2}), we have 
$\Gamma  \vdash \QQ\sub{1}{\ii}\sub{P}{X}  \rhd
  \Ty'\sub{1}{\jj}$.  \\[1mm]
{\em Inductive Case $\n\geq 1$:} 
By the inductive hypothesis on $\n$, 
we assume: 
$\derqq\Ga{}{\GR{P}{\ii}{X}{Q}\APP \n}{\Ty'\sub{\n}{\jj}}$. 
Then by applying Substitution Lemma (Lemma \ref{lem:substitution}) to
(\ref{eq:subs}) with this hypothesis, 
we obtain
$\Gamma\vdash \QQ\sub{\n}{\ii}\sub{\GR{P}{\ii}{X}{Q}\ \n}{X}  \rhd
 \Ty'\sub{\n+1}{\jj}$. \\[1mm]
\\[1mm]
{\bf Case [Init].}
$$
\sr\Ia{\p_0,..,\p_\n}{\y}{\PP}\redsym (\nu \s)(
        \PP\sub{\si\s {\p_0}}{\y} \pc s : \qbot \pc \sj{\Ia}{\p_1}{\s} \pc ...\pc
        \sj{\Ia}{\p_\n}{\s})
$$
We assume that
$\derqq{\Ga}{\emptyset}{\sr\Ia{\p_0,..,\p_\n}{\y}{\PP}}{\D}$. Inversion of
\tftrule{TInit} and \tftrule{TSub} gives that $\D'\leq\D$ and:
{\small
\begin{eqnarray}
& \forall i\neq 0, & \Gamma \proves \p_i\rhd \Nat\label{case-pi}\\
& & \Gamma \vdash \Ia:\mar{\G} \label{case-Ia}\\
& & \Gamma \models \pid(\G)=\{\p_0..\p_\n\} \label{case-pid}\\
& & \Gamma \vdash \PP \rhd \D', \y:\G \proj{\p_0} \label{case-PP}\\
\text{From~(\ref{case-PP}) and Lemma~\ref{lem:substitution}~(\ref{subs2}),}
& & \derqq{\Ga}{}{\PP\sub{\si\s {\p_0}}{\y}}{\D', \si\s {\p_0}:\G \proj{\p_0}} \label{case-sub}\\
\text{\hspace{-3ex}From Lemma~\ref{lem:basic}~(\ref{lem:agreeEnv}) and \trule{QInit},}
& & \derqq{\Ga}{s}{s : \qbot}{\emptyset} \label{case-sbot}\\
\text{From (\ref{case-pi}), (\ref{case-Ia}), (\ref{case-pid}) and
  \tftrule{TReq},}
& \forall i\neq 0, & \Gamma \vdash \sj{\Ia}{\pp_i}{\s} \rhd \s[\pp_i]:\G \proj{\pp_i} \label{case-sppi}
\end{eqnarray}
}



Then \tftrule{TPar} on (\ref{case-sub}), (\ref{case-sbot}) and (\ref{case-sppi}) gives:
$$\Gamma \vdash \PP\sub{\si\s {\p_0}}{\y} \pc \sj{\Ia}{\p_1}{\s} \pc ...\pc
\sj{\Ia}{\p_\n}{\s} %
\rhd \D', \s[\pp_0]:\G \proj{\pp_0}, ... , \s[\pp_n]:\G
\proj{\pp_n}
$$
From \trule{GInit} and \trule{GPar}, we have:
$$\Gamma \vdash _ s \PP\sub{\si\s {\p_0}}{\y} \pc \sj{\Ia}{\p_1}{\s} \pc ...\pc
\sj{\Ia}{\p_\n}{\s} \pc s : \qbot%
\rhd \D', \s[\pp_0]:\G \proj{\pp_0}, ... , \s[\pp_n]:\G
\proj{\pp_n}
$$
From Lemma \ref{lem:mergeability} we know that $\coe{(\s[\pp_0]:\G \proj{\pp_0},
  ... , \s[\pp_n]:\G\proj{\pp_n})}{\s}$. We can then use \trule{GSRes} to get:
$$\Gamma \vdash _ \emptyset (\nu \s)(\PP\sub{\si\s {\p_0}}{\y} \pc \sj{\Ia}{\p_1}{\s} \pc ...\pc
\sj{\Ia}{\p_\n}{\s} \pc s : \qbot) \rhd \D'
$$
We conclude from 
\tftrule{TSub}.
\\[1mm]
{\bf Case [Join].} 
$$
\sj{\Ia}{\pp}{\s} \pc \ssa\Ia{\pp}{\y}{\PP}\redsym 
        \PP\sub{\si \s {\p}}{\y}
$$
We assume that $\Gamma \vdash _{} \sj{\Ia}{\p}{\s} \pc \ssa\Ia{\p}{\y}{\PP}\rhd
\D$. Inversion of \tftrule{TPar} and \tftrule{TSub} gives that $\D=\D',\s[\pp]:\T$ and :
\begin{eqnarray}
& & \Gamma \vdash \sj{\Ia}{\pp}{\s} \rhd \s[\pp]:\G \proj{\pp}\label{case-Join-spp}\\
& & \T\geq\G\proj{\pp}\label{case-Join-subt}\\
& & \Gamma \vdash \ssa\uu\pp\y\PP \rhd \D'\label{case-Join-PP}\\
\text{By inversion of \tftrule{TAcc} from~(\ref{case-Join-PP})}
& & \Gamma \vdash \PP \rhd \D', \y:\G \proj{\pp}\label{case-Join-projPP}\\
\text{From~(\ref{case-Join-projPP}) and Lemma~\ref{lem:substitution}~(\ref{subs2}),}
& & \derqq{\Ga}{}{\PP\sub{\si\s {\p}}{\y}}{\D', \si\s {\p}:\G \proj{\p}} \label{case-Join-sub}
\end{eqnarray}

We conclude by \tftrule{TSub} from (\ref{case-Join-sub}) and
(\ref{case-Join-subt}).
\smallskip

\noindent{\bf Case [Send].} 
\[\red{\out{\si{\s}{\q}}{\va}{\p}{\PP} \Par \mqueue{\s}{\queue}}
    {\PP \Par \mqueue{\s}{\qtail{\valheap{\va}{\p}{\q}}}}\]
By inductive hypothesis, 
$\derqq{\Ga}{\Sigma}{\out{\si{\s}{\q}}{\e}{\p}{\PP} \Par
\mqueue{\s}{\queue}}{\D}$ with  
$\Sigma=\set{\s}$. Since this is derived by [GPar], we have:
\begin{eqnarray}
\der{\Ga}{\out{\si{\s}{\q}}{\va}{\p}{\PP}}{\D_1}\label{L21}\\
\derqq{\Ga}{\{s\}}{\mqueue{\s}{\queue}}{\D_2}\label{L22}
\end{eqnarray}
where $\D=\D_2\Dcomp\D_1$.
  From (\ref{L21}), we have
\begin{eqnarray}
\nonumber \D_1=\D_1',\si{\s}{\q}:\oT{\ptilde{\ST}}{\p};\T\label{L23}\\
\de\Ga{\va}{\ST}\label{L24}\\
\der\Ga\PP{\D_1',\si{\s}{\q}:\T}.\label{L25}
\end{eqnarray}
Using \trule{QSend} on (\ref{L22}) and (\ref{L24}) we derive
\begin{eqnarray}
\derqq{\Ga}{\{s\}}{ \mqueue{\s}{\qtail{
\valheap{\va}{\p}{\q}}}}{\D_2\Tcomp\set{ \si{\s}{\q} :
\oT{\SST}{\p}}}.\label{L26}
\end{eqnarray}
Using \trule{GInit} on (\ref{L25}) we derive
{\begin{eqnarray}
\derqq\Ga{\emptyset}\PP{\D_1',\si{\s}{\q}:\T}\label{L251}
\end{eqnarray}}
and then using \trule{GPar} on (\ref{L251}) and (\ref{L26}), we conclude
\begin{eqnarray*}
\derqq{\Ga}{\{\s\}}{\PP\pc\mqueue{\s}{\qtail{
\valheap{\va}{\q}{\p}}}}{(\D_2\Tcomp\set{\si{\s}{\q} :
\oT{\SST}{\p}})\Dcomp (\D_1',\si{\s}{\q}:\T)}.\label{L27}
\end{eqnarray*}

Note that $(\D_2\Tcomp\set{ \si{\s}{\q} : \oT{\SST}{\p}})\Dcomp
(\D_1',\si{\s}{\q}:\T){=}\D_2\Dcomp( \D_{1}',\si{\s}{\q} : \oT{\SST}{\p};\T).$

\medskip

\noindent{\bf Case [Recv].} 
\[ \inp{\sii}{\x}{\q}{\PP} \Par
\qpop{\valheaps{\va}{\set\p}{\q}} \redsym
\PP\subst{\ptilde{\va}}{\ptilde{\x}} \Par \s:\h\]
By inductive hypothesis, $\derqq{\Ga}{\Sigma}{\inp{\sii}{\x}{\q}{\PP} \Par
\qpop{\valheaps{\va}{\set\p}{\q}} }{\D}$ with $\Sigma=\set\s$.   
Since this is derived by [GPar], we have: 
\begin{eqnarray}
\der{\Ga}{\inp{\sii}{\x}{\q}{\PP}}{\D_1}\label{L31}\\
\derqq{\Ga}{\{\s\}}{\qpop{\valheaps{\va}{\p}{\q}}}{\D_2}\label{L32}
\end{eqnarray}
 
where $\D=\D_2\Dcomp\D_1$. From (\ref{L31}) we have
\begin{eqnarray}
\nonumber \D_1=\D_1',\si\s\p:\iT{\ptilde{\ST}}\q;\T\label{L33}\\
\der{\Ga,\ptilde{\x}:\ptilde{\ST}}\PP{\D_1',\si{\s}{\p}:\T}\label{L34}
\end{eqnarray}
From (\ref{L32}) we have
\begin{eqnarray}
\nonumber \D_2=\set{\siq :\oT{\SST'}{\p}}\Dcomp
 \D_2'\label{L36}\\
\derq{\Ga}{\stdqueue}{\D_2'}\label{L37}\\
 \de{\Ga}{\values}{\SST'}.\label{L38}
\end{eqnarray}
The coherence of $\D$ implies $\SST=\SST'$.
From (\ref{L34}) and (\ref{L38}), together with Substitution lemma, we obtain
$\der{\Ga}{\PP\subst{\ptilde{\va}}{\ptilde{\x}}}{\D_1',\si{\s}{\p}:\T}$, which implies by rule \trule{GInit}
\begin{eqnarray}
\derqq{\Ga}{\emptyset}{\PP\subst{\ptilde{\va}}{\ptilde{\x}}}{\D_1',\si{\s}{\p}:\T}.\label{L35}
\end{eqnarray}
Using rule \trule{GPar} on (\ref{L35}) and (\ref{L37}) we conclude
\begin{eqnarray*}
\derqq{\Ga}{\{\s\}}{\PP\subst{\ptilde{\va}}{\ptilde{\x}}\pc
\s:\h}{
 \D_2'\Dcomp(\D_1',\si{s}{\p}:\T}).\end
{eqnarray*}
Note that
${(\set{\siq :\oT{\SST}{\p}}\Dcomp
 \D_2')\Dcomp(\D_1',\s [\p]:\iT{\ptilde{\ST}}\q;\T)}~\Rightarrow{
 \D_2'\Dcomp(\D_1',\si{\s}{\p}:\T)}.$ 
\end{proof}
\noindent 
Note that communication safety~\cite[Theorem 5.5]{CHY07}
and session fidelity~\cite[Corollary 5.6]{CHY07} 
are corollaries of the above theorem.  

A notable fact is,  
in the presence of the asynchronous initiation primitive, we can still obtain 
{\em progress} in a single multiparty session as in \cite[Theorem
5.12]{CHY07}, i.e.~if a program $P$ starts from one session, 
the reductions at session channels do not get a stuck.
Formally 
\begin{enumerate}[(1)]
\item 
We say $P$ is {\em simple} if 
$P$ is typable and derived by  
$\Gamma \proves^\star P \rhd \Delta$ 
where the session typing in the premise and
the conclusion of each prefix rule
is restricted to at most a singleton. 
More concretely, (1) we eliminate $\Delta$ from 
\tfrule{TInit}, \tfrule{TAcc}, \tfrule{TOut}, \tfrule{TIn},
\tfrule{TSel} and \tfrule{TBra}, (2) we delete 
\tfrule{TDeleg} and \tfrule{TRecep}, 
(3) we restrict $\tau$ and $\Delta$ in 
\tfrule{TPRec}, \tfrule{TEq}, \tfrule{TApp}, \tfrule{TRec} and \tfrule{TVar}
contain at most only one session typing, and 
(4) we set $\Delta=\emptyset$  and $\Delta'$ contains 
at most only one session typing; or vice-versa in \tfrule{TPar}. 
\item 
We say $P$ is {\em well-linked} when for
each $P \redsym^\ast Q$, whenever $Q$ has an active prefix whose subject is
a (free or bound) shared channels, then it is always reducible. 
This condition eliminates the element 
which can hinder progress is when interactions
at shared channels cannot proceed. See \cite[\S~5]{CHY07} more detailed
definitions. 
\end{enumerate}
The proof of the following theorem 
is essentially identical with \cite[Theorem 5.12]{CHY07}.

\begin{thm}[Progress]
\label{thm:progress}
If $P$ is well-linked and without any
element from the runtime syntax and 
	$\Gamma \vdash^\star P \rhd \emptyset$. 
Then for all $P\redsym^\ast Q$, either $Q\equiv \inact$ or 
$Q\redsym R$ for some $R$. 
\end{thm}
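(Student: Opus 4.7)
The plan is to follow the template of [CHY07, Theorem 5.12], lifted to the parameterised calculus. First, by iterating Theorem~\ref{thm:sr} along $P \redsym^\ast Q$, the process $Q$ satisfies $\Gamma \vdash^\star Q \rhd \emptyset$, and after stripping the outermost $(\nu s)$-binders the induced session environment is coherent in the sense of Definition~\ref{cd}. Both simplicity---a purely structural restriction on derivations---and well-linkedness, which is stated modulo $\redsym^\ast$, are invariants of reduction, so I may assume throughout that $Q$ is simple, well-linked and coherently typed.

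The main step is a case analysis on $Q$, brought by $\equiv$ to the canonical form $(\nu\tilde a)(\nu\tilde s)(R_1 \pc \cdots \pc R_m \pc s_1{:}h_1 \pc \cdots \pc s_k{:}h_k)$. If every $R_i$ is $\inact$, coherence forces all queues to be empty and the congruences of Figure~\ref{tab:structcong} give $Q \equiv \inact$. Otherwise some $R_i$ has a head prefix, and I split on its shape. Term-level redexes---primitive recursor applications (rules [ZeroR]/[SuccR]), $\mu$-unfolding, closed arithmetic operations on indices, and index applications---fire unconditionally under the evaluation contexts of Figure~\ref{fig:context}, and their reduction is well-defined thanks to Proposition~\ref{prop:SNconfluent}. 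A session initiation $\sr{a}{\ldots}{y}{\cdot}$ reduces by [Init]; a pending accept $\ssa{a}{\pv}{y}{\cdot}$ is matched with a request $\sj{a}{\pv}{s}$ by [Join], whose existence is guaranteed by well-linkedness. The crucial case is when every live thread is blocked on a session-channel prefix at some $s_j[\pv]$; here I will use coherence of the $s_j$-restricted environment, together with simplicity (which forces the participants of $s_j$ to sit in distinct top-level threads without delegation entanglement), to exhibit a reducible prefix. Sends and selections fire unconditionally since queues grow freely, while for receives and branches coherence ensures that the dual message either heads $h_j$ or can be brought to the head by the message-permutation law of Figure~\ref{tab:structcong}.

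The principal obstacle is showing that parameterised principals do not disrupt this pairing argument: a prefix $s[\pv]?(\q[\tilde\imath],x);P$ must find a queue entry whose sender index matches $\q[\tilde\imath]$ after evaluation. I will handle this by observing that when a session prefix is the head of an active $R_i$, the evaluation contexts of Figure~\ref{fig:context} have already forced all its principal subexpressions to reduce to ground values; combined with the side condition $\Gamma \models \pid(G) = \{\p_0,\ldots,\p_\n\}$ imposed by [TInit] when $s$ is created, and with Lemma~\ref{lem:substitution} applied to the typing of $R_i$, the ground principals appearing in the active prefix and in a matching queue entry must coincide. Coherence, transferred from the $G$-projections in the [Init] case of Theorem~\ref{thm:sr}, then reduces the remaining obligation to the dual-pairing argument of [CHY07, Theorem 5.12], and a straightforward induction on the length of $P \redsym^\ast Q$ concludes the proof.
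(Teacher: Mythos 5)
Your proposal is correct and follows essentially the same route as the paper, which itself gives no detailed argument but states that the proof is essentially identical to that of Theorem 5.12 in \cite{CHY07}: subject reduction (Theorem~\ref{thm:sr}) preserves typability and coherence along $P\redsym^\ast Q$, well-linkedness and simplicity are maintained, and the single-session pairing argument of \cite{CHY07} then yields a redex, with the only new ingredient being that index expressions in principals either reduce under the evaluation contexts of Figure~\ref{fig:context} or are already ground. Your treatment of these parameterised principals and of the coherence-based queue argument matches the intended adaptation, so no further comparison is needed.
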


\section{Typing examples}
\label{subsec:typingexample}
 \noindent
In this section, we give examples of typing derivations for the protocols
mentioned in
\S~\ref{sec:introduction} and \S~\ref{subsec:examples}.

\subsection{Repetition example - \S~\ref{sec:introduction}
  (\ref{alice-bob-carol})} 
This example illustrates the repetition of a message
pattern. The global type for this protocol is 
$G(n)=\FOREACH{\ii}{<n}{\GS{\Alice}{\Bob}{\Nat}.\GS{\Bob}{\Carol}{\Nat}}$. 
Following the projection from Figure~\ref{fig:projection}, \Alice 's end-point
projection of $G(n)$ has the following form:
{\small
\[
\begin{array}{rcll}
  G(n)\!\proj\! \Alice & = & \RECSEQ{& \End\\
    & & &}{i}{\xx}{\IF\ \Alice\!=\!\Alice\!=\!\Bob\ \THEN\ 
    (\ldots) \\
    & & & \ELSE\ \IF\ \Alice\!=\!\Alice\ \THEN\ ({\Lout{\Bob}{\Nat}{\IF\
        \Alice\!=\!\Bob\!=\!\Carol\ \THEN\ \ldots}}) \\
    & & & \ELSE\ \IF\ \Alice=\Bob\ \THEN\ \ldots \\
    & & & \ELSE\ \ldots} \APP n
\end{array}
\]}

For readability, we omit from our examples the impossible cases created by
the projection algorithm. The number of cases can be automatically trimmed to
only keep the ones whose resolutions depend on free variables.

In this case, the projection yields the following local type:
{\small
$$
G(n)\proj \Alice=
\RECSEQ{\End}{i}{\xx}{\Lout{\Bob}{\Nat}{\xx}} \APP n
$$
}
\noindent Before typing, we first define some abbreviations:
{\small
\[
\begin{array}{rcl}
\Alice(n)& =& \sr{a}{\Alice,\Bob,\Carol}{y}{(\RECSEQP{\inact}{\ii}{\X}{\outS{\y}{\Bob,\e[i]}}
  \X \APP n)}\\
\Delta(n) & = & \{y:(G(n)\proj \Alice) \}\\
\Gamma & = & \AT{n}{\Nat}, \AT{a}{\ENCan{G(n)}}
\end{array}
\]}
Our goal is to prove the typing judgement 
\[ \Gamma \proves \Alice(n) \rhd \emptyset
\]
We start from the leafs.
\[
\begin{prooftree}
{
\begin{prooftree}
{
\begin{prooftree}
{\Gamma, \ii : \minus{\II},\X:\Delta(\ii)  \vdash \Env }
\justifies
{\Gamma, \ii : \minus{\II},\X:\Delta(\ii)  \vdash \X  \rhd
y:\Delta(\ii)}\using\tftrule{TVar}
\end{prooftree}
}
\justifies
{\Gamma, \ii : \minus{\II},\X:\Delta(\ii)  \vdash \outS{\y}{\Bob,\e[i]}\X  \rhd
y:\Lout{\Bob}{\Nat}{\Delta(\ii)}
}\using\tftrule{TOut}
\end{prooftree}}
\justifies
{\Gamma, \ii : \minus{\II},\X:\Delta(\ii)  \vdash \outS{\y}{\Bob,\e[i]}\X  \rhd
\Delta(\ii+1)
}\using\tftrule{TEq}
\end{prooftree}
\]
The \tftrule{TEq} rule can be used because
types $\Delta(\ii+1)$ and
$y:\!\Lout{\Bob}{\Nat}{(\RECSEQ{\End}{j}{\xx}{\Lout{\Bob}{\Nat}{\xx}} \APP \ii)}$
are equivalent: they have the same weak-head normal form (we use the rule
\trule{WfBase}).


Since we have the trivial $\Gamma\vdash \inact  \rhd \Delta(0)$, we can apply
the rules \tftrule{TApp} and \tftrule{TPRec}.
\[
\begin{prooftree}
{\begin{prooftree}
{\begin{array}{c}
\Gamma, \ii : \minus{\II},\X:\Delta(\ii)  \vdash \outS{\y}{\Bob,\e[i]}\X  \rhd
 \Delta(\ii+1) \\
\Gamma\vdash \inact  \rhd \Delta(0) \quad \Gamma,\AT{\ii}{\II}\vdash \Delta(\ii) \RHD \K
\end{array}
}
\justifies
{\Gamma \vdash (\RECSEQP{\inact}{\ii}{\X}{\outS{\y}{\Bob,\e[i]}}
  \X)\rhd \Pi \AT{\ii}{\II}.\Delta(\ii)}\using\tftrule{TPRec}
\end{prooftree}
}\justifies
{\Gamma \vdash (\RECSEQP{\inact}{\ii}{\X}{\outS{\y}{\Bob,\e[i]}}
  \X \APP n)\rhd \Delta(n)}\using\tftrule{TApp}
\end{prooftree}
\]
We conclude with \tftrule{TInit}.
\[
\begin{prooftree}
{
\begin{array}{@{}c@{}}
\Gamma \vdash a:\mar{\G(n)} \quad 
\Gamma \vdash (\RECSEQP{\inact}{\ii}{\X}{\outS{\y}{\Bob,\e[i]}}
  \X \APP n)\rhd \Delta(n)
\end{array}
}
\justifies
{\Gamma \vdash \Alice(n) \rhd \emptyset} 
\using\tftrule{TInit}
\end{prooftree}
\]




$\Bob(n)$ and $\Carol(n)$ can be similarly typed.


\subsection{Sequence example - \S~\ref{sec:introduction}
  (\ref{ex:sequence})}
The sequence example consists of $n$ participants organised in the following way
(when $n\geq 2$): the starter $\W[n]$ sends the first message, the final worker
$\W[0]$ receives the final message and the middle workers first receive a
message and then send another to the next worker. We write below the result of
the projection for a participant $\W[\pp]$ (left) and the end-point type that
naturally types the processes (right):
{\small
\[
\begin{array}{l|l}
\begin{array}{llll}
   \RECSEQ{\End }{i}{\xx}{\\
 \ \IF\ \ \pp=\W[i+1]\ \THEN \ \Lout{\W[i]}{\Nat}\xx \\
 \ \ELSE\ \IF\ \ \pp=\W[i]\ \THEN \ \Lin{\W[i+1]}{\Nat}\xx \\
 \ \ELSE\ \xx \ } \\
 n
\end{array}
&
\begin{array}{lll}
\IF \ &
\p=\W[n]  \ \THEN \ \Lout{\W[n-1]}{\Nat}{}\ELSE\\
\IF &
\p=\W[0] \ \THEN \ {\Lin{\W[1]}{\Nat}{}}\ELSE\\
\IF & \p=\W[\ii]\  \THEN \  {\Lin{\W[i+1]}{\Nat}{}}{\Lout{\W[i-1]}{\Nat}{\!}}
\end{array}
\end{array}
\]
} %

\noindent
This example illustrates the main challenge faced by the type checking algorithm.
In order to type this example, we need to prove the equivalence
of these two types. For any concrete instantiation of $\p$ and $\n$, the standard weak
head normal form equivalence rule \trule{WfBase} is sufficient.  Proving the
equivalence for all $\p$ and $\n$ requires either (a) to bound the domain $\II$
in which they live, and check all instantiations within this finite domain using
rule \trule{WfRecF}; or (b) to prove the equivalence through the meta-logic
rule \trule{WfRecExt}.  In case (a), type checking terminates, while case (b)
allows to prove stronger properties about a protocol's implementation.

\subsection{Ring - Figure~\ref{fig:examples}(a)}

The typing of the ring pattern is similar to the one of the sequence.
The projection of this global session type for $\W[\pp]$ gives the following
local type: 
{\small
\[
\begin{array}{lll}
   \RECSEQ{(\GS{\W[\nn]}{\W[0]}{\Nat}.\End)\proj \W[\pp]\\
   \quad }{i}{\xx}{ \IF\ \pp=\W[\nn-i-1]\ \THEN \ \Lout{\W[\nn-\ii]}{\Nat}\xx\\
  \quad \quad\quad\quad\quad \ \ELSE\IF\ \ \pp=\W[\nn-\ii]\ \THEN \ \Lin{\W[\nn-\ii-1]}{\Nat}\xx \\
 \quad \quad\quad\quad\quad\ \ELSE\IF\ \ \xx \ } & n
\end{array}
\]
}

On the other hand, user processes can be easily type-checked with an end-point
type of the following form:
{\small
\[
\begin{array}{lll}
\IF \ &
\p=\W[0] \ \THEN \ \Lout{\W[1]}{\Nat}{\Lin{\W[n]}{\Nat}{}}\\
\ELSE\IF &
\p=\W[n]  \ \THEN \ \Lin{\W[n-1]}{\Nat}{\Lout{\W[0]}{\Nat}{}}\\
\ELSE\IF & 1 \leq i+1 \leq n-1  \ \mathsf{and}\ \p=\W[\ii+1]\\
& \THEN \  {\Lin{\W[i]}{\Nat}{}}{\Lout{\W[i+2]}{\Nat}{}}
\end{array}
\]}

\noindent
Proving the equivalence between these types is similar as the one the sequence:
we rely on rules \trule{WfBase} and \trule{WfRecF} when the domain of $n$ is
bounded, or on the meta-logic rule \trule{WfRecExt}.

\subsection{Mesh pattern - Figure~\ref{fig:examples}(b)} 
The mesh example describes nine different participants behaviours (when
$n,m\geq 2$). The participants in the first and last rows and columns, except
the corners which have two neighbours, have three neighbours. The other
participants have four neighbours. The specifications of the mesh are defined by
the communication behaviour of each participant and by the links the
participants have with their neighbours. 
The term below is the result of the projection of the global type for
participant $\pp$

{\small
\[
\begin{array}{lll}
\RECSEQ{  ~~ (\RECSEQ{\End\proj \pp}{k}{\zz}{
    \IF \ \pp=\W[0][k+1]\ \THEN \ \Lout{\W[0][k]}{\Nat}\zz\\
     \qquad\qquad\quad \quad\quad\quad\quad~~ \ \ELSE\IF\ \ \pp=\W[0][k]\ \THEN \ \Lin{\W[0][k+1]}{\Nat}\zz \\
 \qquad\qquad\quad \quad\quad\quad\quad~~\ \ELSE\ \ \zz \ })  m\\
     \quad }{i}{\xx}{ \\
       \qquad (\RECSEQ{~( \IF \ \pp=\W[i+1][0]\ \THEN \ \Lout{\W[i][0]}{\Nat}\xx\\
     \quad \quad\quad\quad \ \ELSE\IF\ \ \pp=\W[i][0]\ \THEN \ \Lin{\W[i+1][0]}{\Nat}\xx \\
 \quad \quad\quad\quad \ \ELSE\ \ \xx \ )\\       
           \quad \qquad  }{j}{\yy}{ \\         
          \qquad \qquad  ~\IF \ \pp=\W[i+1][j+1]\ \THEN \ \Lout{\W[i][j+1]}{\Nat}\\
          \qquad\qquad\qquad \IF \ \pp=\W[i+1][j+1]\ \THEN \ \Lout{\W[i+1][j]}{\Nat} \yy\\
          \qquad\qquad\qquad  \ELSE\IF\ \ \pp=\W[i+1][j]\ \THEN \ \Lin{\W[i+1][j+1]}{\Nat}\yy \\
                    \qquad\qquad\qquad \ELSE\ \yy\\
     \quad \quad\quad\quad \ \ELSE\IF\ \ \pp=\W[i][j+1]\ \THEN \ \Lin{\W[i+1][j+1]}{\Nat}\yy \\
      \qquad\qquad\qquad \IF \ \pp=\W[i+1][j+1]\ \THEN \ \Lout{\W[i+1][j]}{\Nat} \yy\\
          \qquad\qquad\qquad  \ELSE\IF\ \ \pp=\W[i+1][j]\ \THEN \ \Lin{\W[i+1][j+1]}{\Nat}\yy \\
                    \qquad\qquad\qquad \ELSE \ \yy\\
        \qquad\qquad \ELSE\IF \ \pp=\W[i+1][j+1]\ \THEN \ \Lout{\W[i+1][j]}{\Nat} \yy\\
          \qquad\qquad  \ELSE\IF\ \ \pp=\W[i+1][j]\ \THEN \ \Lin{\W[i+1][j+1]}{\Nat}\yy \\
            \qquad\qquad  \ELSE \ \yy}\\
       \qquad  m) \\
     \quad n} 
\end{array}
\]
}

From Figure~\ref{fig:examples}(c),
the user would design the end-point type as follows:
{\small\[
\begin{array}{lll}
\IF \ &
\p=\W[n][m] \ \THEN \ \Lout{\W[n-1][m]}{\Nat}{\Lout{\W[n][m-1]} {\Nat}{}}\\
\ELSE\IF &
\p=\W[n][0]  \ \THEN \ \Lin{\W[n][1]}{\Nat}{\Lout{\W[n-1][0]}{\Nat}{}}\\
\ELSE\IF &
\p=\W[0][m]  \ \THEN \ \Lin{\W[1][m]}{\Nat}{\Lout{\W[0][m-1]}{\Nat}{}}\\
\ELSE\IF &
\p=\W[0][0]  \ \THEN \ \Lin{\W[1][0]}{\Nat}{\Lin{\W[0][1]}{\Nat}{}}\\
\ELSE\IF &
 1 \leq k+1 \leq m-1 ~ \mathsf{and}\ \p=\W[n][k+1]\\
 & \THEN \ \Lin{\W[n][k+2]}{\Nat}{\Lout{\W[n-1][k+1]}{\Nat}{\Lout{\W[n][k]}{\Nat}}}\\ 
\ELSE\IF &
 1 \leq k+1 \leq m-1 ~ \mathsf{and}\ \p=\W[0][k+1]\\
 & \THEN \ \Lin{\W[1][k+1]}{\Nat}{\Lin{\W[0][k+2]}{\Nat}\Lout{\W[0][k]}{\Nat}}\\ 
\ELSE\IF &
 1 \leq i+1 \leq n-1  ~\mathsf{and}\ \p=\W[i+1][m]\\
 & \THEN \ \Lin{\W[i+2][m]}{\Nat}{\Lout{\W[i][m]}{\Nat}{\Lout{\W[i+1][m-1]}{\Nat}}}\\ 
\ELSE\IF &
 1 \leq i+1 \leq n-1  ~\mathsf{and}\ \p=\W[i+1][0]\\
 & \THEN \ \Lin{\W[i+2][0]}{\Nat}{\Lin{\W[i+1][1]}{\Nat}{\Lout{\W[i][0]}{\Nat}}}\\ 
 \ELSE\IF &
 1 \leq i+1 \leq n-1  ~\mathsf{and}\ 1 \leq j+1 \leq m-1 ~\mathsf{and}~ \p=\W[i+1][j+1]\\
 & \THEN \ \Lin{\W[i+2][j+1]}{\Nat}{\Lin{\W[i+1][j+2]}{\Nat}{\Lout{\W[i][j+1]}{\Nat}{\Lout{\W[i+1][j]}{\Nat}}}}\\ 
\end{array}
\]}

Each case denotes a different local behaviour in the mesh pattern. 
We present the following meta-logic proof of the typing equivalence through
\trule{WfRecExt} in the two cases of the 
top-left corner and bottom row, in order to demonstrate how our
system types the mesh session. The other cases are left to the reader. 

Let $T[\p][n][m]$ designate the first original type and $T'[\p][n][m]$ the
second type. To prove the type equivalence, we want to check that for all $\n,
\m \geq 2$ and $\p$, we have: 
\begin{center}
\small
$(\prod n.\prod m.T[\p][n][m])\n \m \redsym^\ast T_{\n, \m}\not\redsym$ iff 
$(\prod n.\prod m.T'[\p][n][m]) \n \m \redsym^\ast T_{\n, \m}\not\redsym$. 
\end{center}

For $\p=\W[n][m]$, which implements the top-left corner, the generator type
reduces several steps and gives the end-point type
$\Lout{\W[\n-1][\m]}{\Nat}{\Lout{\W[\n][\m-1]} {\Nat}{\inact}}$, which is the same to
the one returned in one step by the case analysis of the type built by the
programmer. For $\p=\W[0][k+1]$, we analyse the case 
where $1 \leq k+1 \leq \m-1$. The generator type returns the end-point type
\Lin{\W[1][k+2]}{\Nat}{\Lin{\W[0][k+2]}{\Nat}\Lout{\W[0][k]}{\Nat}}.
One can observe that the end-point type returned for $\p=\W[0][k+1]$ in the type of the
programmer is the same as the one returned by the generator. Similarly for all
the other cases. 

By \tftrule{TOut, TIn}, we have: 
{\small
\[ 
a:\mar{\G} \vdash \outS{y}{\W[n-1][m], f(n-1,m)}\outS{y}{\W[n][m-1], f(n,m-1)}{\inact}
  \rhd \D,\y:\G \proj{\p_{\text{top-left}}}
  \]
  \[
a:\mar{\G} \vdash \inpS{y}{\W[1][k+1], z_1}\inpS{y}{\W[0][k+2], z_2}\outS{y}{\W[0][k], f(0,k)}\inact \rhd \D', \y :\G \proj{\p_{\text{bottom}}}\\
\]
}
where $\G \proj{\p}$ is obtained from the type above. 


\subsection{FFT example - Figure~\ref{fig:fft}}
\noindent 
\label{subsec:fft:typing}
We prove
type-safety and deadlock-freedom for the FFT processes.  
Let $P_{\text{fft}}$ be the following process:
{\small
\[
\begin{array}{lll}
\Pi n.(\nu a)(
\mathbf{R} \ \sr{a}{\p_0..\p_{2^n-1}}{y}{P(2^n-1,\p_0,x_{\overline{\p_0}},y,r_{\p_0})}\\[1mm]
\quad \lambda{\ii}.\lambda {\Y}.
{(\sr{a}{\p_{\ii+1}}{y}{P({\ii+1},\p_{\ii+1},x_{\overline{\p_{\ii+1}}},y,r_{\p_{\ii+1}})}
\sep \Y)}\APP 2^n-1)
\end{array}
\] 
}

\noindent
As we reasoned above, each $P(n,\p,x_{\overline{\p}},y,r_\pp)$ is
straightforwardly typable by an end-point type which can be proven to be
equivalent with the one projected from the global type $G$ from
Figure~\ref{fig:fft}(c). Automatically checking the equivalence for all $n$ is
not easy though: we need to rely on the finite domain restriction using
\trule{WfRecF} or to rely on a meta-logic proof through \trule{WfRecExt}.
The following theorem says once $P_{\text{fft}}$ is applied 
to a natural number $\m$, its evaluation always 
terminates with the answer at $r_\pp$.

\begin{thm}[Type safety and deadlock-freedom of FFT]
\label{theorem:fft}
For all $\m$, $\emptyset \proves P_{\text{fft}} \APP \m\rhd \emptyset$;  
and if $P_{\text{fft}} \APP  \m \longrightarrow^\ast Q$, then $Q\longrightarrow^\ast
(\Poutend{r_0}{0}{\X_0}\pc\ldots\pc\Poutend{r_{2^\m-1}}{0}{\X_{2^\m-1}})$
where the $\Poutend{r_\pp}{0}{\X_\pp}$ are the actions sending the
final values $\X_\pp$ on external channels $r_\pp$. 
\end{thm}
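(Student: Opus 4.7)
The plan is to prove the two parts separately: first establish typability $\emptyset\vdash P_{\text{fft}}\APP\m\rhd\emptyset$ by exhibiting a suitable end-point typing for each worker process and then applying Subject Reduction (Theorem~\ref{thm:sr}); then derive deadlock-freedom as a direct corollary of the Progress theorem (Theorem~\ref{thm:progress}) together with the observation that every reduction path must eventually reach the final send $\Poutend{r_\pp}{0}{\X_\pp}$ on each participant.

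First I would type, for each $n$ and $\pp$ with $0\le\pp<2^n$, the body process $P(n,\pp,x_{\overline{\pp}},y,r_\pp)$. Using \tftrule{TOut}, \tftrule{TIn} and the conditional (encoded via $\mathbf R$), one obtains a natural end-point type $T(n,\pp)$ of the form ``output $x_{\overline{\pp}}$ to $\pp$; repeat $n$ times the butterfly input/output pattern between $\pp$ and $\pp\pm 2^{n-l-1}$ according to $\bit{n-l}(\pp)$; then input on $y$ and output on $r_\pp$''. On the other hand, the projection $G\proj{\pp}$ (with $G$ as in Figure~\ref{fig:fft}(c)) yields a recursor-based type describing the same family of behaviours indexed by $\pp$. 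The kinding and well-formedness of $T(n,\pp)$, $G$ and $G\proj{\pp}$ follow routinely from the rules of \S~\ref{sec:typing}.

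The main obstacle is proving
$\Gamma \vdash T(n,\pp)\equiv G\proj{\pp}$
for all $n$ and all $\pp\in\{0,\dots,2^n-1\}$. Because the FFT global type uses three nested $\mathtt{foreach}$ operators with indices depending on $n$, the syntactic equivalence rules \trule{WfRec} and \trule{WfRecF} cannot match the two types directly unless $n$ is bounded. I would therefore apply the meta-logical rule \trule{WfRecExt} from Figure~\ref{fig:meta-equivalence}: the key lemma to prove is that for every fixed $n$ and every $\pp<2^n$, the closed applications $T(n,\pp)\APP n$ and $(G\proj{\pp})\APP n$ reduce to the same normal form. This is shown by induction on $n$, by a case analysis on the value of each bit $\bit{n-l}(\pp)$ at each butterfly level $l$, exactly mirroring the case analysis produced by the projection $G\proj{\pp}$; the two iterators over $i$ and $j$ in $G$ are designed so that the pair $(i\cdot 2^{n-l}+j,\ i\cdot 2^{n-l}+2^{n-l-1}+j)$ enumerates precisely the partners of $\pp$ at each step.

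Once $T(n,\pp)\equiv G\proj{\pp}$ is established, applying \tftrule{TEq} gives $\vdash P(n,\pp,\dots)\rhd y{:}G\proj{\pp}$. Typing the recursor that spawns the $2^\m$ session requests is then straightforward by \tftrule{TPRec}, \tftrule{TAcc} and \tftrule{TPar} (using $\pid(G)=\{\p_0,\dots,\p_{2^\m-1}\}$), and \tftrule{TInit} combined with \tftrule{TNu} closes the shared channel $a$. This establishes $\emptyset\vdash P_{\text{fft}}\APP\m\rhd\emptyset$. For the reduction, since $P_{\text{fft}}\APP\m$ uses a single session and is well-linked (there is a single session initialisation on the private name $a$), Theorem~\ref{thm:progress} applies: every $Q$ with $P_{\text{fft}}\APP\m\redsym^\ast Q$ either is $\inact$ or reduces further. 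By Subject Reduction, $Q$ remains well-typed with a coherent session environment that is a suffix of $y{:}G\proj{\pp}$ for each $\pp$, and since the only residual actions after the $n$ butterfly rounds are the final $\Poutend{r_\pp}{0}{\X_\pp}$ outputs on distinct external channels $r_\pp$ (which have no matching inputs and therefore are terminal), reduction must terminate in the stated parallel composition of these final sends.
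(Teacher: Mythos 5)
Your proposal is correct and follows essentially the same route as the paper: typability is reduced to proving the equivalence between the natural end-point type of each worker and the projection $G\proj{\pp}$ via the meta-logical rule \trule{WfRecExt} (with induction and bit-arithmetic on the butterfly partners), and deadlock-freedom then follows as a corollary of Progress (Theorem~\ref{thm:progress}) for the single multiparty session, exactly as in the paper's proof.
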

\begin{proof}
For the proof, we first show 
$P_{\text{fft}} \APP  \m$ 
is typable by a single, multiparty dependent session
(except the answering channel at $r_\pp$).  
Then the result is immediate as a corollary
of progress (Theorem \ref{thm:progress}). 

To prove that 
the processes
are typable against the given global type,  
we start from the end-point projection. 

We assume index $n$ to be a parameter as in Figure~\ref{fig:fft}. 
The main loop is an iteration over the $n$ steps of the
algorithm. Forgetting for now the content of the main loop, the generic
projection for machine $\pp$ has the following skeleton:\\[1mm]
{\small
$\begin{array}{ll}
& \Pi n.(\RECSEQ{(\RECSEQ{\End}{l}{\xx}{(\ldots)}\APP n) \\
& }{k}{\uuu}{\\
& \quad \Pifthenelse{\pp=k}{\Lout{k}{\U}{\Lin{k}{\U}{\uuu}}}{\uuu}})\\
& ~ 2^n\\
\end{array}$} 
\\[1mm]
A simple induction gives us through \trule{WfRecExt} the equivalent type:\\[1mm]
{\small
$\begin{array}{ll}
& \Pi n.\Lout{\pp}{\U}{\Lin{\pp}{\U}{(\RECSEQ{\End}{l}{\xx}{(\ldots)}\APP n)}} \APP 2^n\\
\end{array}$} 
\\[1mm]
We now consider the inner loops. The generic projection gives:\\[1mm]
{\small
$\begin{array}{@{}l@{}l}
& \ldots \\
& \quad (\RECSEQ{\xx}{i}{\yy}{ \\
    & \qquad (\RECSEQ{\yy}{j}{\zz}{ \\
          & \quad \qquad \Pifthenelse{\pp=i*2^{n-l}+2^{n-l-1}+j=i*2^{n-l}+j}{\ldots\\
          & \quad \qquad }{\Pifthenelse{\pp=i*2^{n-l}+2^{n-l-1}+j}{\Lout{i*2^{n-l}+j}{\U}{\ldots}\\
          & \quad \qquad }{\Pifthenelse{\pp=i*2^{n-l}+j}{\Lin{i*2^{n-l}+2^{n-l-1}+j}{\U}{\ldots}\\
          & \quad \qquad }{\Pifthenelse{\ldots}{\ldots}{\ldots}}}}}\\
    & \qquad)\APP  2^{n-l-1}} \\
  & \quad) \APP 2^l \\
& \ldots \\
\end{array}$}
\\[1mm]
An induction over $\pp$ and some simple arithmetic over binary numbers gives us
through \trule{WfRecExt} the only two branches that can be taken:\\[1mm]
{\small
$\begin{array}{@{}l@{}l}
& \ldots \\
&\Pifthenelse{\bit{n-l}(\pp)=0 \\
& }{\Lin{\pp+2^{n-l-1}}{\U}{\Lout{\pp+2^{n-l-1}}{\U}{\Lout{\pp}{\U}{\Lin{\pp}\U{\xx}}}}\\
& }{\Lout{\pp-2^{n-l-1}}{\U}{\Lin{\pp-2^{n-l-1}}{\U}{\Lout{\pp}{\U}{\Lin{\pp}\U{\xx}}}}}\\
& \ldots \\
\end{array}$} 
\\[1mm]
The first branch corresponds to the upper part of the butterfly while the second
one corresponds to the lower part.
For programming reasons (as seen in the processes, the natural implementation
include sending a first initialisation message with the $x_k$ value), we want to
shift the self-receive $\Lin{\pp}\U{}$ from the initialisation to the beginning
of the loop iteration at the price of adding the last self-receive to the end:
$\Lin{\pp}\U{\End}$. 
The resulting equivalent type up to $\equiv$ is: \\[2mm]
{\small
$
\begin{array}{ll}
\Pi n.\Lout{\pp}{\U}{} \\
 \ (\LR{\Lin{\pp}\U{\End}}{l}{\xx} {\\[1mm]
 \  \Pifthenelse{\bit{n-l}(\pp)=0\\[1mm]
    \ } {\Lin{\pp}\U{\Lin{\pp+2^{n-l-1}}{\U}{\Lout{\pp+2^{n-l-1}}{\U}{\Lout{\pp}{\U}{\xx}}}}\\[1mm]
    \ } {\Lin{\pp}\U{\Lout{\pp-2^{n-l-1}}{\U}{\Lin{\pp-2^{n-l-1}}{\U}{\Lout{\pp}{\U}{\xx}}}}}}) \APP n
\end{array}
$
}\\[1mm]
From this end-point type, it is straightforward to type 
and implement the processes defined in 
Figure~\ref{fig:fft}(d) in \S~\ref{subsec:fft}. 
Hence we conclude the proof. 
\end{proof}

\subsection{Web Service} 
\label{sec:applications}
\noindent 
This section demonstrates the expressiveness of our type theory. We 
program and type a real-world Web service usecase: Quote Request (C-U-002) is the most complex scenario 
described in \cite{CDLRequirements}, the
public document authored by the W3C Choreography Description Language Working
Group \cite{CDL}. 




\begin{figure}[ht]
\begin{center}
\begin{tabular}{c}
\xymatrix@C=25pt@R=1pt{
 & *+[F] {\Supp[0]}\ar@{<->}[r]\ar@{<->}[ddr]
 &  *+[F] {\Manu[0]}\ar@{}[r]|\equiv & \txt{\Manu[0][0]} \\
 &  &  \\
  *+[F]{\Buyer} \ar@{<->}[r]\ar@{<->}[uur]\ar@{<->}[ddr]
&  *+[F]{\Supp[1]} \ar@{<->}[ddr]
&  *+[F] {\Manu[1]}\ar@{}[r]|\equiv & \txt{\Manu[0][1]\\ \Manu[2][1]}\\
 &  &   \\
&  *+[F]{\Supp[2]} \ar@{<->}[r]\ar@{<->}[uur]
&  *+[F] {\Manu[2]}\ar@{}[r]|\equiv & \txt{\Manu[1][2]\\ \Manu[2][2]}\\
 & {:} & {:}  \\
}\\
\end{tabular}
\end{center}
\caption{The Quote Request usecase (C-U-002) \cite{CDLRequirements} \label{fig:CDL}}
\end{figure}

\paragraph{\bf Quote Request usecase}
The usecase is described below (as published in \cite{CDLRequirements}). 
A buyer interacts with multiple suppliers who in turn
interact with multiple manufacturers in order to obtain quotes for some
goods or services. 
The steps of the interaction are: 
\begin{enumerate}[(1)]
\item A buyer requests a quote from a set of suppliers.
All suppliers receive the request for quote and send requests
for a bill of material items to their respective manufacturers.

\item 
The suppliers interact with their manufacturers to build their quotes
for the buyer. The eventual quote is sent back to the buyer. 

\item EITHER
\begin{enumerate}
\item The buyer agrees with one or more of the quotes and places the order or orders. OR 
\item The buyer responds to one or more of the quotes by modifying 
and sending them back to the relevant suppliers.
\end{enumerate}
\item EITHER
\begin{enumerate}
\item The suppliers respond to a modified quote 
by agreeing to it and sending a confirmation message back to the buyer. OR

\item The supplier responds by modifying the quote and sending it back
to the buyer and the buyer goes back to STEP 3. OR 

\item 
The supplier responds to the buyer rejecting the modified quote. OR 

\item 
The quotes from the manufacturers need to be renegotiated by the supplier. Go to STEP 2. 
\end{enumerate}
\end{enumerate}
%
%
The usecase, depicted in figure~\ref{fig:CDL}, may seem simple, 
but it contains many challenges. 
The Requirements 
in Section 3.1.2.2 of \cite{CDLRequirements} include:  
{\bf [R1]} the ability to repeat the same set of interactions 
between different parties using a single definition
and to compose them;
{\bf [R2]} 
the number of participants 
may be bounded at design time or at runtime; and  
{\bf [R3]} 
the ability to {\em reference a global description from within a
global description} to support {\em recursive behaviour} as 
denoted in {\sc Step} 4(b, d).   
The following works through a parameterised 
global type specification that satisfies these requirements.   

\paragraph{\bf Modular programming using global types}
We develop the specification of the usecase program modularly, 
starting from smaller global types. Here, 
$\texttt{Buyer}$ stands for the buyer, 
$\texttt{Supp}[\ii]$ for a supplier, 
and $\texttt{Manu}[\jj]$ for a manufacturer. Then we alias manufacturers by $\texttt{Manu}[\ii][\jj]$ to identify that $\texttt{Manu}[\jj]$ 
is connected to  $\texttt{Supp}[\ii]$ (so a single $\texttt{Manu}[\jj]$ can have multiple
aliases $\texttt{Manu}[\ii'][\jj]$, see figure~\ref{fig:CDL}). Then, using the idioms presented in \S~1, {\sc Step} 1 
is defined as:
\[ 
\begin{array}{lll}
\G_1 = 
\FOREACH{\ii}{\II}{
\GS{\texttt{Buyer}}{\texttt{Supp}[\ii]}{\mathsf{Quote}}.\End}
\end{array}
\]
For {\sc Step} 2, we compose a nested loop and the subsequent action
within the main loop ($J_i$ gives all $\texttt{Manu}[\jj]$ connected
to $\texttt{Supp}[\ii]$):  
{\small
\[ 
\begin{array}{ll}
\multicolumn{2}{l}{G_2 = \mathtt{foreach}({\ii}:{\II})\{
\MERGE{G_2[i],\
\GS{\texttt{Supp}[i]}{\texttt{Buyer}}{\mathsf{Quote}}.\End}\}} \\[1mm]
G_2[i]= \mathtt{foreach}({\jj}:{J_i})\{ & 
\GS{\texttt{Supp}[\ii]}{\texttt{Manu}[\ii][\jj]}{\mathsf{Item}}.\\
& \GS{\texttt{Manu}[\ii][\jj]}{\texttt{Supp}[\ii]}{\mathsf{Quote}}.\End\}
\end{array}
\]
}
%
%

\noindent
$G_2[i]$ represents the second loop between the $i$-th supplier and its
manufacturers.  
Regarding {\sc Step} 3,  
the specification involves buyer preference for certain suppliers.  
Since this can be encoded using dependent types 
(like the encoding of $\mathsf{if}$),
we omit this part and assume 
the preference is given by the (reverse) ordering of $\II$ in order to focus on the
description of the
interaction structure.
\[
\begin{array}{rcl}
G_3 & = & \GR{\ \ty\ }{i}{\yy}{} \TO{\texttt{Buyer}}{\texttt{Supp}[i]} : \{ \\
& & \quad \begin{array}{llll}
\mathsf{ok}: & \End \\
\mathsf{modify}: & \multicolumn{3}{l}{\GS{\texttt{Buyer}}{\texttt{Supp}[i]}{\mathsf{Quote}}} \\
& \TO{\texttt{Supp}[i]}{\texttt{Buyer}}: \{
&  \mathsf{ok}: & \End \\
&& \mathsf{retryStep3}: & \yy\\
&&  \mathsf{reject}: & \End\}\}\ \ii
\end{array}
\end{array}
\]
In the innermost branch, $\mathsf{ok}$, $\mathsf{retryStep3}$ and 
$\mathsf{reject}$ correspond to {\sc Step} 4(a), (b) and (c)
respectively. Type variable $\ty$ is for (d). 
We can now compose all these subprotocols together. Taking
$G_{23}\ =\ \GM{\ty}{\MERGE{G_2,\ G_3}}$ 
and assuming $\II=[0..i]$, the full global type is 
\[
\lambda i.\lambda \VEC{J}.\MERGE{G_1,G_{23}} 
\]
where we have $i$ suppliers, and $\VEC{J}$ gives the $J_i$ (continuous) index sets of the $\texttt{Manu}[\jj]$s connected with each $\texttt{Supp}[\ii]$. 

\paragraph{End-point types}
We show the end-point type for suppliers, who engage in
the most complex interaction structures among the participants.  
The projections corresponding to 
$G_1$ and $G_2$ are 
straightforward:
\[
\begin{array}{lllll}
G_1\proj{\texttt{Supp}[\n]} = \Linn{\texttt{Buyer}}{\mathsf{Quote}} \\[1mm]
G_2\proj{\texttt{Supp}[\n]} = \mathtt{foreach}({\jj}:{J_i})\{ \Loutt{\texttt{Manu}[\n][\jj]}{\mathsf{Item}}; \\
\qquad\qquad\qquad\qquad {\Linn{\texttt{Manu}[\n][\jj]}{\mathsf{Quote}}\}; {\Loutt{\texttt{Buyer}}{\mathsf{Quote}}}}
\end{array}
\]
For $G_3\proj{\texttt{Supp}[\n]}$, 
we use the branching injection and 
mergeability theory developed in 
\S~\ref{subsec:endpoint}. 
After the relevant application of \trule{TEq}, we can obtain 
the following projection:
\[
\begin{array}{l}
\begin{array}{l}
\&\langle\texttt{Buyer}, \{ \\
\\
\\
\\
\\
\end{array}
\begin{array}{ll}
\mathsf{ok}: & \End\\
\mathsf{modify}: & \Lin{\texttt{Buyer}}{\mathsf{Quote}}{} \oplus\langle\texttt{Buyer}, \{ \\
& \begin{array}{ll}
\qquad \mathsf{ok}: & \End \\
\qquad \mathsf{retryStep3}: & T\\
\qquad \mathsf{reject}: & \End\}\rangle\}\rangle\\
\end{array}
\end{array}
\end{array}
\]
where $T$ is a type for the invocation 
from $\texttt{Buyer}$: 
\[
\begin{array}{l}
\IF \ \n\leq i \ \THEN \ \&\langle \texttt{Buyer}, \{ \mathsf{closed}:\End, 
\ \mathsf{retryStep3}:\ty \} \rangle \\ 
\ELSE\IF \ i=\n \ \THEN \ \ty
\end{array}
\]
To tell the other suppliers whether the loop is being reiterated 
or if it is finished, 
we can simply insert the following closing notification
$\FOREACH{\jj}{\II\setminus \ii}
{\TO{\texttt{Buyer}}{\texttt{Supp}[j]}:\{\mathsf{close}:\}}
$
before each $\End$, 
and a similar retry notification (with label $\mathsf{retryStep3}$) before $\ty$.
%
Finally, each end-point type is formed by the following composition: 
\[ 
\begin{array}{l}
\MERGE{G_1\proj{\texttt{Supp}[\n]}, 
{\GM{\ty}{\MERGE{G_2\proj{\texttt{Supp}[\n]},\ G_3\proj{\texttt{Supp}[\n]}}}}})
\end{array}
\]
Following this specification, the projections can be implemented in various
end-point languages (such as CDL or BPEL).  

\section{Conclusion and related work}
\label{sec:related}
This paper studies a parameterised multiparty session type theory
which combines three well-known theories: indexed dependent types
\cite{DBLP:conf/popl/XiP99}, dependent types with
recursors \cite{DBLP:conf/mfps/Nelson91}
and multiparty session types \cite{BettiniCDLDY08LONG,CHY07}.
The resulting typing system is decidable (under
an appropriate assumption about the index arithmetic).
It offers great expressive power
for describing complex communication topologies
and guarantees safety properties of processes running under
such topologies.
We have explored the impact of
parameterised type structures for
communications
through implementations of the above web service usecases and of several parallel
algorithms in Java and C with session
types~\cite{HU07TYPE-SAFE,HKOYH10}, including
the N-body (with a ring topology),
the Jacobi method
(with sequence and mesh topologies)
and the FFT \cite{NYPHK11,NYH12}.
We observe (1) a clear coordination
of the communication behaviour of each party with the
construction of the whole multiparty protocol, thus reducing
programming errors and ensuring deadlock-freedom; and (2)
a performance benefit against the original binary session version,
reducing the overhead of multiple binary session
establishments (see also \cite{NYPHK11,NYH12}).
Full implementation and integration of our theory
into~\cite{HU07TYPE-SAFE,CorinDFBL09,HKOYH10} is on-going work.

\subsection{Related work}
We focus on the works on dependent types and other typed process
calculi which are related to multiparty session types; for further
comparisons of session types with other service-oriented calculi and
behaviour typing systems, see \cite{DL10} for a wide ranging survey of
the related literature.

\paragraph*{\bf Dependent types}
\noindent
The first use of primitive recursive functionals for dependent types
is in Nelson's $\mathcal{T}^\pi$~\cite{DBLP:conf/mfps/Nelson91}
for the $\lambda$-calculus, which is 
a finite
representation of $\mathcal{T}^\infty$ by Tait and Martin L\"of
\cite{Tait,PerMartin}.
$\mathcal{T}^\pi$ 
can type functions previously untypable in ML, and the finite
representability of dependent types makes it possible to have a
type-reconstruction algorithm. We also use the ideas from DML's
dependent typing system in \cite{DBLP:conf/popl/XiP99,DependentBook}
where type dependency is only allowed for index sorts, so that
type-checking can be reduced to a constraint-solving problem over
indices.
Our design choice to combine both systems 
gives (1) the simplest formulation of sequences of global and end-point types and
processes described by the primitive recursor; (2) a precise specification
for parameters appearing in the participants based on index sorts; and (3) a
clear
integration with the full session types
and general recursion, whilst ensuring decidability of type-checking (if the
constraint-solving problem is decidable).
From the basis of these works,
our type equivalence
does not have to rely on
behavioural equivalence between processes, but only on the strongly
normalising {\em types} represented by recursors.

Dependent types have been also studied in the context
of process calculi, where the dependency centres
on locations (e.g.~\cite{Hennessy07}),
and channels (e.g.~\cite{Yoshida04}) for mobile agents or
higher-order processes.
An effect-based session typing system
for corresponding assertions to specify fine-grained
communication specifications is studied
in \cite{BAG05} where effects can appear both in types and
processes.
None of these works investigate families of global specifications using
dependent types.
Our main typing rules require a careful treatment for type
soundness not found in the previous works, due to the simultaneous instantiation
of terms and indices by the recursor, with reasoning by mathematical induction
(note that type soundness was left open in \cite{DBLP:conf/mfps/Nelson91}).

\paragraph*{\bf Types and contracts for multiparty interactions}
The first papers on multiparty session types were \cite{BC07} and
\cite{CHY07}. The former uses a distributed calculus where
each channel connects a master end-point to one or more slave
endpoints; instead of global types, they use only
local types.
Since the first
work \cite{CHY07} was proposed, this theory has been used in the
different contexts such as distributed protocol implementation and
optimisation~\cite{SivaramakrishnanNZE10},
security~\cite{CorinDFBL09,ccdr10},
design by contract~\cite{BHTY10},
parallel algorithms~\cite{NYPHK11,NYH12}, web services~\cite{YDBH10},
multicore programming~\cite{YoshidaVPH08},
an advanced progress guarantee~\cite{BettiniCDLDY08LONG},
messaging optimisation~\cite{esop09},
structured exceptions~\cite{CGY10},
buffer and channel size analysis for multiparty
interactions~\cite{Bufferfull},
medical guidelines~\cite{NYH09} and
communicating automata \cite{DY12},
some of which initiated industrial collaborations, cf.~\cite{HondaMBCY11}.
Our typing system can be smoothly integrated
with other works as no changes to the runtime typing components have been made
while expressiveness has been greatly improved.

%
%

The work \cite{carbone.honda.yoshida:esop07}
presented an {\emph{executable global processes}}
for web interactions
based on binary session types.
Our work provides flexible, programmable  global descriptions
as {\em types}, offering a progress for parameterised multiparty
session, which is not ensured in \cite{carbone.honda.yoshida:esop07}.


The work \cite{B10} provides a programming idiom of roles, defining
different classes of participants, and a different type system for
parameterised session types. There is no investigation of the system
expressivity for the 3D-Mesh pattern as we have presented in this
paper through the Fast Fourier Transformation example.  The static
type system follows the typing strategy and programming methodology of
multiparty session types: programmers first define the global type of
the intended pattern and then define each of the roles; the roles are then
validated through projection of the global type onto the principals by
type-checking.

Recent formalisms for typing multiparty interactions include
\cite{CP09,CairesV09}.
These works treat different aspects of dynamic session structures.
{\em Contracts} \cite{CP09} 
can type more processes
than session types, thanks to the flexibility of process syntax
for describing protocols.
However, typable processes themselves in \cite{CP09} may
not always satisfy the properties of session types such as progress: it is
proved later by checking whether the type meets a certain form.  Hence
proving progress with contracts effectively
requires an exploration of all possible paths (interleaving, choices)
of a protocol.  The most complex example of~\cite[\S~3]{CP09} (a group
key agreement protocol from~\cite{AST98}), which is typed as
$\pi$-processes with delegations, can be specified and articulated
by a single parameterised
global session type as:
{\small
\begin{align*}
\label{group}
 \Pi \AT{n}{I}.(& \FOREACH{i}{<n}{\GS{\W[n-i]}{\W[n-i+1]}{\Nat}};\\
               & \FOREACH{i}{<n}{\GS{\W[n-i]}{\W[n+1]}{\Nat}.\GS{\W[n+1]}{\W[n-i]}{\Nat}})
\end{align*}
}
%
Once the end-point process conforms to this specification, we can
automatically guarantee communication safety and progress.

{\em  Conversation Calculus} \cite{CairesV09} supports the dynamic
joining and leaving of participants.
We also introduced a dynamic role-based multiparty session type
discipline in previous work \cite{DY11}, where an arbitrary number of participants can
interact in a running session via a universal polling operator. This
work was extended with simple relations between roles
in \cite{Poon11} to dynamically handle
the complex topologies presented in this paper.
Although the formalism in \S~\ref{subsec:semantics} can operationally capture some dynamic
features, the aim of the present work is
not
the type-abstraction
of dynamic interaction patterns. Our purpose is
to capture, in a single type description, a family
of protocols over arbitrary numbers of participants,
to be instantiated at runtime.
Parameterisation gives freedom not possible with previous session
types: once typed, a parametric process is ensured that its arbitrary
well-typed instantiations, in terms of both topologies and process
behaviours, satisfy the safety and progress properties of typed
processes, without the cost of complex runtime support (as in \cite{DY11}).
Parameterisation, composition and repetition are common idioms in
parallel algorithms and choreographic/conversational interactions, all
of which are uniformly treatable in our dependent type theory. Here
types offer a rigorous structuring principle which can economically
abstract rich interaction structures, including parameterised ones.
\section*{Acknowledgements}

We thank the reviewers,
Lasse Nielsen and Dimitris Mostrous for their useful comments for the
paper and  Kohei Honda for discussions, and
Michael Emmanuel, Yiannos Kryftis, Nicholas Ng, Olivier Pernet and Hok Shun Poon
for their implementation work of the examples presented
in this paper.
The work is partially supported by EPSRC EP/G015635/1 and EP/F003757/1, and the NSF Ocean Observatories Initiative. The last author was also supported by EPSRC PhD Plus and EPSRC Knowledge Transfer Secondment.

\bibliographystyle{abbrv}

\bibliography{session}

\appendix

\section{Kinding and typing rules}
In this Appendix section, we give the definitions of kinding rules and
typing rules 
that were omitted in the main sections.

\subsection{Kinding and subtyping}
\label{app:kind}
Figure~\ref{fig:localkindsystem} defines the kinding rules for local types.  
Figure~\ref{fig:subtyping}
presents the subtyping rules which are used for typing runtime processes. 
The rules for the type isomorphism can be given 
by replacing $\subT$ by $\WB$.

\begin{figure}
\centering
\begin{tabular}{c}
\begin{prooftree}
  {\Gamma \vdash \p \rhd \Nat \quad \Gamma \vdash \T \RHD \LType
  \quad \Gamma\vdash \U  \RHD \SType \text{ or } \LType }
\justifies
{\Gamma \vdash \Lout{\p}{\U}{\T}  \RHD \LType} \using\tfrule{KLOut}
\end{prooftree}\\
\\
\begin{prooftree}
{\Gamma \vdash \p \rhd \Nat \quad \Gamma \vdash \T \RHD \LType 
  \quad \Gamma \vdash \U  \RHD \SType \text{ or } \LType}
\justifies
{\Gamma \vdash \Lin{\p}{\U}{\T}  \RHD \LType} \using\tfrule{KLIn}
\end{prooftree}\\
\\
\begin{prooftree}
{\Gamma \vdash \T \RHD  \Pi\AT{\ii}{\II}.\K \quad \Gamma \models \tii : I}
\justifies
{\Gamma \vdash \T\APP\tii \RHD \K\sub{\tii}{\ii}} \using\tfrule{KLApp}
\end{prooftree}
\ 
\begin{prooftree}
{\Gamma \vdash \p \rhd \Nat \quad \forall k \in K, \Gamma \vdash \T_k \RHD \LType }
\justifies
{\Gamma \vdash \Lsel{\p}{T_k} \RHD \LType} \using\tfrule{KLSel}
\end{prooftree}\\
\\
\begin{prooftree}
{\Gamma \vdash \p \rhd \Nat \quad \forall k \in K, \Gamma \vdash \T_k \RHD \LType }
\justifies
{\Gamma \vdash \Lbranch{\p}{T_k} \RHD \LType} \using\tfrule{KLBra}
\end{prooftree}\\
\\
\begin{prooftree}
{\Gamma \vdash \T \RHD \K\subst{0}{j} \quad 
 \Gamma, \ii:\II^- \vdash \T' \RHD \K\subst{i+1}{j}}
\justifies
{\Gamma \vdash \LR{\T}{\AT{\ii}{\II^-}}{\xx}{\T'}  
\RHD \Pi\AT{\jj}{\II}.\K} \using\tfrule{KLRcr}
\end{prooftree}\\
\\ 
\begin{prooftree}
{\Gamma \vdash \kappa}
\justifies
{\Gamma \vdash \xx \RHD \kappa} \using\tfrule{KVar}
\end{prooftree}
\quad
\begin{prooftree}
{\Gamma \vdash \T \RHD \LType }
\justifies
{\Gamma \vdash \LM{\xx}{\T} \RHD \LType} \using\tfrule{KLRec}
\end{prooftree}
\quad
\begin{prooftree}
{\Gamma \vdash \Env }
\justifies
{\Gamma \vdash \End \RHD \LType} \using\tfrule{KLEnd}
\end{prooftree}\\
\\
\end{tabular}

\caption{Kinding rules for local types} \label{fig:localkindsystem}
\end{figure}

\begin{figure}
\centering 
\begin{tabular}{c}
\begin{prooftree}
{\Gamma \vdash \T \subT \T'}
\justifies
{\Gamma \vdash  \Lout{\p}{\U}{\T}\subT  \Lout{\p}{\U}{\T'}} \using\trule{TSubOut}
\end{prooftree}\quad 
\begin{prooftree}
{\Gamma \vdash \T \subT \T'}
\justifies
{\Gamma \vdash  \Lin{\p}{\U}{\T}\subT  \Lin{\p}{\U}{\T'}} \using\trule{TSubIn}
\end{prooftree}\\
\\
\begin{prooftree}
{\forall k\in K\subseteq J,\ \Gamma \vdash \T_k \subT \T_k'}
\justifies
{ \Gamma \vdash \LselI{\p}{\T_k}{k}{K}\subT  \LselI{\p}{\T_\jj'}{\jj}{J}} \using\trule{TSSel${}_\subT$}
\end{prooftree}\\
\\
\begin{prooftree}
{\forall k\in J\subseteq K,\ \Gamma \vdash \T_k \subT \T_k'}
\justifies
{ \Gamma \vdash \LbranchI{\p}{\T_k}{k}{K} \subT \LbranchI{\p}{\T_\jj'}{\jj}{J}} \using\trule{TBra${}_\subT$}
\end{prooftree}\\
\\
\begin{prooftree}
{\Gamma \vdash \T_1 \subT \T_2 \quad \Gamma,\ii:\II \vdash \T_1' \subT \T_2'}
\justifies
{\Gamma \vdash \LR{\T_1}{\AT{\ii}{\II}}{\xx}{\T_1'} \subT \LR{\T_2}{\AT{\ii}{\II}}{\xx}{\T_2'}} \using\trule{TSubPRec}
\end{prooftree}\\
\\
\begin{prooftree}
{\Gamma \vdash T\sub{\LM{\xx}{\T}}{\xx} \subT \T'}
\justifies
{\Gamma \vdash \LM{\xx}{\T} \subT  \T'} \using\trule{TLSubRec}
\end{prooftree}\quad 
\begin{prooftree}
{\Gamma \vdash T'\subT T\sub{\LM{\xx}{\T}}{\xx}}
\justifies
{\Gamma \vdash T'\subT  \LM{\xx}{\T }} \using\trule{TRSubRec}
\end{prooftree}\\
\\
\begin{prooftree}
{\Gamma \vdash \T \subT \T' \quad \Gamma \models \tii:I=\tii':I}
\justifies
{\Gamma \vdash \T\APP \tii \subT \T'\APP \tii'} \using\trule{TSubProj}
\end{prooftree}\\
\\
\begin{prooftree}
{\Gamma \vdash \Env}
\justifies
{\Gamma \vdash \End\subT \End} \using\trule{TSubEnd}
\end{prooftree} \quad 
\begin{prooftree}
{\Gamma \vdash \Env}
\justifies
{\Gamma \vdash \xx\subT \xx} \using\trule{TSubRVar}
\end{prooftree}
\end{tabular}
\caption{Subtyping} \label{fig:subtyping}
\end{figure}



\section{Typing system for runtime processes}
\label{app:runtime}
This appendix defines a typing system for runtime processes
(which contain queues).   
Most of the definitions are from \cite{BettiniCDLDY08LONG}.

\begin{center}
\begin{tabular}{c}
\begin{tabular}{lrclr}
  Message & $\TQ$ & ::= &
         \oT\UT{\pv} & \emph{message send}\\
         &     & \sep  & \seltypes &\emph{message selection}\\
         &     & \sep  & $\TQ;\TQ'$ &\emph{message sequence}\\
\\
  Generalised \quad\quad\quad & $\TG$ & ::= &
         \T
  & \emph{session}\\
         &     & \sep & \TQ &\ \emph{message}\\
         &     & \sep & $\TQ;\T$ &\ \emph{continuation}\\[0.5mm]
\end{tabular}
\end{tabular}
\end{center}

\noindent
{\em Message types} are the types for queues: they
represent the messages contained in the queues. The {\em message
send type} \oT\UT{\pv}\ expresses the communication to $\pv$ 
of a value or of a channel of type \UT. The {\em
message selection type} \seltypes{} represents the communication to
participant $\pv$ of the label $l$ and $\TQ;\TQ'$
represents sequencing of message types (we assume associativity for $;$). For
example $\oplus\anglep{1}{\kf{ok}}$ is the message type for the
message $(2,1,\kf{ok})$. A {\em generalised type} is either
a session type, or a message type, or a message type followed by a
session type. Type $\TQ;\T$ represents the continuation of the type $\TQ$
associated to a queue with the type $\T$ associated to a pure process. An
example of generalised type is
$\oplus\anglep{1}{\kf{ok}};!\langle
3,\kf{string}\rangle; ?\langle 3,\kf{date}\rangle; \End$.

In order to take into account the structural congruence between
queues (see Figure~\ref{tab:structcong}) we consider message types
modulo the equivalence relation \equivT{}{}\ induced by the
rules shown as follows 
(with $\natural\in\set{!,\oplus}$ and $Z\in\set{\UT,l}$): 
\begin{center}
$\equivT{\oTG{Z}{\pv};\oTGp{Z}{\qv};\TQ}{\oTGp{Z}{\qv};\oTG{Z}{\pv};\TQ}$
\quad if $\pv\not = \qv$
\end{center}
The equivalence relation on message types extends to generalised types by:
\[\equivT{\TQ}{\TQ'}\text{ implies }\equivT{\TQ;\TG}{\TQ';\TG}\]
\indent 
We say that two session environments 
$\D$ and $\D'$ are equivalent (notation $\equivT{\D}{\D'}$) if
$\ccc:\TG\in\D$ and $\TG\not=\End$ imply $\ccc:\TG'\in\D'$ with $\equivT{\TG}{\TG'}$ and vice versa. This equivalence relation is used in rule \trule{Equiv} (see Figure~\ref{fig:runtime-process-typing}).

\begin{figure}[h!]
\[
\begin{array}{c}
 \begin{prooftree}
\der{\Ga}{\PP}{\D} \justifies \derqq{\Ga}{\emptyset}{\PP }{\D}
\using\trule{GInit}
\end{prooftree}
\ 
\begin{prooftree}
\derqq{\Ga}{\Sigma}{\PP}{\D}\quad \D \WB \D'
\justifies
\derqq{\Ga}{\Sigma}{\PP }{\D'}\using\trule{Equiv}
\end{prooftree}
\begin{prooftree}
\derqq{\Ga}{\Sigma}{\PP}{\D}\quad \D \subT \D'
\justifies
\derqq{\Ga}{\Sigma}{\PP }{\D'}\using\trule{Subs}
\end{prooftree}
\\\\
\begin{prooftree}
\derqq{\Ga}{\Sigma}{\PP}{\D}\quad \derqq{\Ga}{\Sigma'}{\Q}{\D'}
\quad \Sigma\cap\Sigma'=\emptyset
 \justifies
\derqq{\Ga}{\Sigma\cup\Sigma'}{\PP\Par \Q}{\D \Dcomp \D'}
 \using \trule{GPar}
 \end{prooftree}
\quad 
\begin{prooftree}
\derqq{\Ga}{\Sigma}{\PP}{\D}\quad \coe\D\s \justifies
\derqq{\Ga}{\Sigma\setminus\s}{(\nu\s)\PP }{\ms\D\s}
\using\trule{GSRes}
\end{prooftree}
\end{array}
\]
\caption{Run-time process typing} \label{fig:runtime-process-typing}
\end{figure}

\begin{figure}
\centering
$
\begin{array}{cc}
\begin{prooftree}
{{\Ga}\proves \Env}
\justifies 
{\derq{\Ga}{\s:\qbot}{\emptyset}} \using\trule{QInit}
\end{prooftree}
\\ \\ 
\begin{prooftree}
  \derq{\Ga}{\stdqueue}{\D} \qquad \de{\Ga}{\values}{\SST}
\justifies \derq{\Ga}{ \mqueue{\s}{\qtail{
      \valheap{\va}{\pv}{\qv}}}}{\D\Tcomp\set{ s[\qv] : \oT{\SST}{\pv}
} } \using\trule{QSend}
\end{prooftree}
\\ \\
\begin{prooftree}
\derq{\Ga}{\stdqueue}{\D} \justifies \derq{\Ga}{
\mqueue{\s}{\qtail{\delheap{s'[\pv']}{\pv}{\qv}}}}{\D,s'[\pv']:\T'\Tcomp\set{
s[\qv] : \oT{\T'}{\pv}} } \using\trule{QDeleg}
\end{prooftree}\\
\\
\begin{prooftree}
\derq{\Ga}{ \stdqueue}{\D}\quad j\in K \justifies \derq{\Ga}{
\mqueue{\s}{\qtail{\labheap{l_j}{\pv}{\qv}}}}{\D\Tcomp\set{ s[\qv] :
\LselI{\pv}{\T_k}{k}{K}}} \using\trule{QSel}
\end{prooftree}\\
\\
\end{array}
$
\caption{Queue typing} \label{fig:queue-typing}
\end{figure}

We start by defining the typing rules for single queues, in which the turnstile
$\vdash$ is decorated with \set\s\ (where \s\ is the session name of the current
queue) and the session environments are mappings from channels to message types.
The empty queue has empty session environment. Each message adds an
output type to the current type of the channel which has the role of
the message sender.  Figure~\ref{fig:queue-typing} lists the typing rules for
queues, where \Tcomp\ is defined by:

\begin{center}
$\D \Tcomp\set{s[\qv]: \TQ} = \begin{cases}
 \D' ,s[\qv]: \TQ';\TQ     & \text{if }\D=\D',  s[\qv]:\TQ', \\
 \D,s[\qv]: \TQ    & \text{otherwise}.
\end{cases}$
\end{center}
For example we can derive $\derq{}{\s:(3,1,\kf{ok})}{\set{\si\s 1:\oplus\anglep{1}{\kf{ok}}}}$.

In order to type pure processes in parallel with queues,
we need to use generalised types in session environments and
further typing rules. Figure~\ref{fig:runtime-process-typing}
lists the typing rules for processes containing queues.
The judgement $\derqq{\Ga}{\Sigma}{\PP}{\D}$ means that
$\PP$ contains the queues whose session names are in $\Sigma$.
Rule \trule{GInit} promotes the typing of a pure process
to the typing of an arbitrary process, since a pure process does not
contain queues. When two arbitrary processes are put in parallel (rule
\trule{GPar}) we need to require that each session name is associated
to at most one queue (condition $\Sigma\cap\Sigma'=\emptyset$). In
composing the two session environments we want to put in sequence a
message type and a session type for the same channel with role. 
For this reason we define the composition \Dcomp\ between 
generalised types as:

\begin{center}
$\TG \Dcomp \TG'=\begin{cases}
  \TG \Tcomp \TG'    & \text{if $\TG$ is a message type}, \\[-.2em]
    \TG'\Tcomp \TG   & \text{if $\TG'$ is a message type}, \\[-.2em]
    \bot  & \text{otherwise}\\
\end{cases}$
\end{center}

\noindent
where $\bot$ represents failure of typing.

We extend \Dcomp\ to session environments as expected:

\begin{center}
$\D \Dcomp \D' = \D\backslash \dom{\D'} \cup \D' \backslash \dom{\D}
\cup \set{\ccc: \TG \Dcomp \TG' \sep
  \ccc:\TG \in \D~\&~\ccc:\TG' \in \D'}$.
\end{center}

\noindent Note that \Dcomp\ is commutative, i.e., $\D \Dcomp
\D'=\D' \Dcomp \D$. Also if we can derive message types only for
channels with roles, we consider the channel variables in the
definition of \Dcomp\ for session environments since we want to
get for example $\set{y:\End}\Dcomp\set{y:\End}=\bot$ (message
types do not contains $\End$). 




In rule \trule{GSRes} we require the coherence of the session 
environment \D\ with respect to the session name \s\
to be restricted (notation \coe\D\s). This coherence 
is defined in Definition~\ref{cd} using the notions of projection 
of generalised types and of duality, 
introduced respectively in Definitions~\ref{def:genproj} and~\ref{dd}. 

\begin{DEFINITION}\label{def:genproj} 
\rm 
The {\em projection
of the generalised local type \TG\ onto \q}, denoted by $\pro\TG\q$,
is defined by:
\[
\begin{array}{rcl}
\pro{(\oT\UT{\pv};\TG')}\q&=&\begin{cases}
  !\UT;\pro{\TG'}\q  & \text{if }\q = \pv, \\
   \pro{\TG'}\q & \text{otherwise}.
\end{cases}\\
\pro{(\seltypes;\TG')}\q&=&\begin{cases}
  \oplus l;\pro{\TG'}\q  & \text{if }\q=\pv, \\
   \pro{\TG'}\q & \text{otherwise}.
\end{cases}\\
\pro{(\iT\UT{\p};\T)}\q&=&\begin{cases}
  ?\UT;\pro{\T}\q  & \text{if }\q=\p, \\
   \pro{\T}\q & \text{otherwise.}
\end{cases}\\
\pro{(\seltype)}\q&=&\begin{cases}
  \seltypeT  & \text{if }\q = \p, \\
  \sqcup_{i\in I}\pro{\T_i}\q & \text{otherwise.}
\end{cases}\\
\pro{(\branchtype)}\q&=&\begin{cases}
  \branchtypeT  & \text{if }\q=\p, \\
\sqcup_{i\in I}\pro{\T_i}\q 
&  \text{otherwise.}
\end{cases}\\
\pro{(\mu \xx.\T)}\q&=& \mu \xx.(\pro{\T}\q) \quad
\pro\xx\q=\xx\quad\pro\End\q=
         \End\\
       \end{array}
       \]

\noindent where $\sqcup_{i\in I}\pro{\T_i}\q$ is defined as $\sqcup_{i\in I}
T_i$ in Definition~\ref{def:mergeability} replacing by:  

\[ 
\begin{array}{lll}
\&\langle\{l_k:T_k\}_{i\in I}\rangle \ \mergecup 
\&\langle\{l_j:T_j'\}_{j\in J}\rangle \ = \\ 
\quad \&\langle\{l_k:T_k\mergecup T_k'\}_{k\in K\cap J}
\cup \{l_k:T_k\}_{k\in K\setminus J}
\cup \{l_j:T_j'\}_{j\in J\setminus K}\rangle\\[1mm]
\end{array}
\]
\end{DEFINITION}

\begin{DEFINITION}\label{dd}
The {\em duality relation} between projections of generalised
types is the minimal symmetric relation which satisfies:

\begin{center}
$\dual\End\End\qquad \dual\xx\xx\qquad \dual\T{\T'}\implies
\dual{\mu \xx.\T}{\mu \xx.\T'}$ 

$\dual\TG{\T}\implies\dual{!\UT;\TG}{?\UT;\T}$

$\forall i\in I ~\dual{\T_i}{\T_i'}\implies \dual \seltypeTp
\branchtypeTp $

$\exists i\in I ~l=l_i~\&~\dual{\TG}{\T_i}\implies \dual {\oplus
l;\TG} \branchtypes $

\end{center}
\end{DEFINITION}

\end{document}